\def \bP {\mathbb{P}}
\def \bE {\mathbb{E}}
\def \cM {\mathcal{M}}
\def \cL {\mathcal{L}}
\def \spo {\mathsf{Poi}}
\newcommand{\reals}{\mathbb{R}}
\newcommand{\calD}{{\mathcal{D}}}
\newcommand{\calM}{{\mathcal{M}}}
\newcommand{\calP}{{\mathcal{P}}}
\newcommand{\calV}{{\mathcal{V}}}
\newtheorem{question}{Question}
\newcommand{\stepa}[1]{\overset{\rm (a)}{#1}}
\newcommand{\stepb}[1]{\overset{\rm (b)}{#1}}
\newcommand{\stepc}[1]{\overset{\rm (c)}{#1}}
\title[Local Moment Matching]{Local moment matching: A unified methodology for symmetric functional estimation and distribution estimation under Wasserstein distance}
\begin{document}

%\author{Jiantao Jiao\thanks{Department of Electrical Engineering, Stanford University, email: \texttt{\{jiantao,yjhan\}@stanford.edu}}, Weihao Gao\thanks{Department of Electrical and Computer  Engineering, Coordinated Science Laboratory, University of Illinois at Urbana-Champaign, email: \texttt{wgao9@illinois.edu}}, Yanjun Han$^{*}$}
%

%\date{\today}

% \vspace{-10pt}
%
\maketitle

\begin{abstract}
We present \emph{Local Moment Matching (LMM)}, a unified methodology for symmetric functional estimation and distribution estimation under Wasserstein distance. We construct an efficiently computable estimator that achieves the minimax rates in estimating the distribution up to permutation, and show that the plug-in approach of our unlabeled distribution estimator is ``universal" in estimating symmetric functionals of discrete distributions. Instead of doing best polynomial approximation explicitly as in existing literature of functional estimation, the plug-in approach conducts polynomial approximation implicitly and attains the optimal sample complexity for the entropy, power sum and support size functionals.
\end{abstract}

\begin{keywords}
Distribution Estimation; Functional Estimation; Minimax Risk; Wasserstein Distance
\end{keywords}

\section{Introduction and Main Results}
Given $n$ independent samples from a discrete distribution $P=(p_1,\cdots,p_S)$, we aim to estimate the distribution vector $P$ up to permutation. In other words, let $P^<=(p_{(1)}, p_{(2)}, \cdots, p_{(S)})$ be the sorted version of $P$ (i.e., $p_{(1)}\le p_{(2)}\le \cdots\le p_{(S)}$ are the order statistics of $P$), we would like to find an estimator $\hat{P}$ which comes close to minimizing the sorted $\ell_1$ distance
\begin{align*}
\bE_P \lVert \hat{P}-P^<\rVert_1
\end{align*}
in the minimax sense.

Our study of estimating the sorted distribution $P^{<}$ is motivated by the following facts:
\begin{enumerate}
	\item The sorted distribution $P^{<}$ can be interpreted as the distribution $P$ up to permutation, or the multiset of probabilities in $P = (p_1,p_2,\ldots,p_S)$, or the ``tail'' of a distribution. In economics, the theory of \emph{long tail}~\cite{anderson2004long} emphasizes the significance of products in the \emph{tail}, and inferring the sorted distribution precisely shows the shape of the tail.
	\item Estimating the sorted distribution $P^{<}$ turns out to require significantly less number of samples than that required to estimate the distribution $P$ under the same $\ell_1$ loss, as shown by~\cite{Valiant--Valiant2011} using a different Wasserstein loss function. 
	\item The sorted distribution estimate proves to be useful in estimating symmetric functionals of the distribution, which are defined as functionals of $P$ that can also be viewed as functionals of $P^{<}$. Indeed, \cite{Valiant--Valiant2011,Valiant--Valiant2013estimating, acharya2016unified} constructed estimators of the sorted distribution can be used to plug-in some symmetric functionals to achieve the information theoretic limits in certain parameter regimes, which performed significantly better than the approach of plugging-in the empirical distribution.
	\item The whole distribution $P$ can be decomposed into two parts: the sorted distribution $P^{<}$ and the permutation from $P^{<}$ to $P$. Being able to design computationally efficient schemes to achieve the information theoretic limit in estimating $P^{<}$ sheds light on the general question of inferring parameters up to group transformations, which is of fundamental significance in statistics and machine learning~\cite{kong2017spectrum,tian2017learning}. 
\end{enumerate}

The main idea to solve the traditional distribution estimation problem (i.e., estimating $P$) is to use the empirical frequency and/or its variants, which turn out to be minimax optimal for various loss functions including $\ell_2$ \cite{Steinhaus1957problem,Trybula1958problem,Rutkowska1977minimax,Olkin1979admissible}, $\ell_1$ \cite{daskalakis2012learning,Diakonikolas2014beyond,han2015minimax,kamath2015learning} and KL loss \cite{kamath2015learning}. To consistently estimate $P$, usually it is required to observe each symbol $i\in [S]$ sufficiently many times on average; for example, $n\gg S$ is a necessary and sufficient condition for the existence of an estimator which estimates $P$ within a vanishing $\ell_1$ error \cite{han2015minimax}. However, recent studies suggested that estimating $P^<$ might be significantly easier than estimating $P$: compared with an oracle with the same observation $X_1,\cdots,X_n$ and perfect knowledge of $P^<$, there still exists some estimator which performs nearly as well as the oracle even if $S=\infty$ under the $\ell_1$ loss \cite{valiant2015instance} and KL loss \cite{orlitsky2015competitive}. This observation shows that the ``labeling" from $P^<$ to $P$ is the difficult step in estimating $P$, and estimating the sorted distribution $P^<$ may only require sub-linear samples (i.e., $n\ll S$).

Two main approaches have been proposed in literature to estimate the sorted distribution $P^{<}$. One is the approach of \emph{profile maximum likelihood (PML)}~\cite{orlitsky2004modeling,acharya2009recent}, which aims at solving the sorted distribution that maximizes the likelihood of observing the \emph{sorted} empirical distribution. It is not clear how to solve the corresponding optimization problem efficiently. Algorithms that approximately solve the PML have been proposed in the literature, including~\cite{orlitsky2004modeling,vontobel2012bethe,pavlichin2017approximate}, without clear theoretical approximation guarantees. It was shown in~\cite{acharya2016unified} that plugging-in the profile maximum likelihood distribution into a variety of symmetric functionals (namely, entropy, support size, support coverage, and distance to uniformity) achieves the information theoretic limit when the number of samples is not ``too'' large. 

A different approach, which initiated from~\cite{Efron--Thisted1976}, proposed to use linear programming to find a sorted distribution that was consistent with the observed frequency
counts. This approach was adapted and rigorously analyzed in~\cite{Valiant--Valiant2011,Valiant--Valiant2013estimating} under a Wasserstein distance loss function, where it was shown that plugging-in the inferred sorted distribution from the linear program into certain symmetric functionals (namely, entropy, support size, support coverage, and distance to uniformity) results in estimators that achieve the information theoretic limit in the constant error regime. 

Various questions remain unsolved given existing literature. It is not clear how to efficiently provably solve the PML, and the proof of the optimality of PML in the plug-in machinery of symmetric functional estimation heavily relies on the fact that the observations can only take values in a finite set. It does not apply to the Gaussian setting, where one observes a Gaussian random vector $X\sim \mathcal{N}(\mu, I_p)$, and would like to estimate the sorted version of the mean vector $\mu$. For the linear programming approach, it was not shown to achieve the optimal dependence on $\epsilon$ in entropy estimation, and it was not clear whether it is near-optimal if we plug it in other functionals. Indeed, the general achievability proof is done through a Lipschitz continuity argument in~\cite{Valiant--Valiant2011}, and it was not clear whether they can match the lower bounds for individual functionals. 

The main mathematical reason that motivated this paper is to develop relations between estimation of (nonsmooth) functionals of distributions, and estimation of the sorted distribution. In the first realm, ~\cite{Lepski--Nemirovski--Spokoiny1999estimation} considered the problem of $L_r$ norm estimation in Gaussian noise model and utilized Fourier approximation theory, while~\cite{Cai--Low2011} considered estimating the $\ell_1$ norm of normal mean and applied best polynomial approximation. The work of~\cite{Valiant--Valiant2011power} developed ``Chebyshev bump'' based approximation and proposed linear estimators that achieve the optimal dependence on $\epsilon$ in estimating the entropy, distance to uniformity, and support size when the sample size $n$ is not too ``large''. Minimax rates for estimation of a large variety of functionals were solved in the past few years, including entropy~\cite{wu2016minimax,Jiao--Venkat--Han--Weissman2015minimax}, R\'enyi entropy~\cite{Acharya--Orlitsky--Suresh--Tyagi2014complexity}, support size~\cite{wu2015chebyshev}, support coverage~\cite{orlitsky2016optimal}, distinct elements~\cite{wu2016sample}, $L_1$ distance~\cite{jiao2016minimax}, Kullback--Leibler divergence~\cite{bu2016estimation,Han--Jiao--Weissman2016minimaxdivergence}, squared Hellinger divergence~\cite{Han--Jiao--Weissman2016minimaxdivergence}, $\chi^2$ divergence~\cite{Han--Jiao--Weissman2016minimaxdivergence}, support coverage from multiple populations~\cite{raghunathan2017estimating}, $L_r$ norm of a regression function in Gaussian white noise~\cite{Han--Jiao--Mukherjee--Weissman2017adaptive}, and differential entropy~\cite{han-jiao-weissman-wu2017minimax}. The latest batch of work have developed essentially a framework of proving minimax upper and lower bounds for functional estimation problems, which was called the \emph{Approximation} approach in~\cite{Jiao--Venkat--Han--Weissman2015minimax}. The main idea is, we first use concentration inequalities to ``zoom in'' sets that are guaranteed to contain the true parameters with overwhelming probability, and then apply unbiased estimators of (best) approximation polynomials up to a certain degree in those sets. The minimax lower bounds are proved using the dual representation of best polynomial approximation over each individual sets that we may ``zoom in''. For a crisp illustration of the lower bound technique, we refer the readers to~\cite{jiao2017minimax}. 

The \emph{Approximation} approach requires to compute deterministic approximations for each individual functional separately, and is naturally a \emph{non-plug-in} approach. It was shown in~\cite{jiao2017minimax} that indeed any plug-in approach cannot hope to completely replace the \emph{Approximation} approach: there exist certain functionals such that \emph{any} plug-in approach fails to achieves the statistical limit.

This paper aims at bridging the \emph{Approximation} approach illustrated above and the plug-in approach. We start with the following question. 

\begin{question}\label{question}
	Find an estimator of the sorted distribution that satisfies the following properties:
	\begin{enumerate}
		\item Plugging-in the estimator into a large variety of symmetric functionals achieves the information theoretic limit;
		\item It has a clear mathematical correspondence with the \emph{Approximation} approach (hence generalizable to Gaussian settings);
		\item It achieves the minimax rates in estimating sorted distribution under $\ell_1$ loss;
		\item It is efficiently computable.
	\end{enumerate}
\end{question}

We present \emph{Local Moment Matching \rm (LMM)}, an approach that provably answers the question above. 

The intuition behind LMM, on the highest level, is the following. It shares with \emph{Approximation} the same first step, which is to use concentration inequalities to ``zoom in'' the smallest sets that are guaranteed to contain the true parameters with overwhelming probability. The difference appears in the second step, while \emph{Approximation} tries to find a polynomial that closely approximates the functional over the specific set and then use unbiased estimators to estimate the polynomial, LMM aims to find a set of numbers, whose total number equals to the number of true parameters we believe are in the set, whose \emph{moments} match the \emph{unbiased} estimates of the moments of true parameters. In other words, \emph{Approximation} conducts an \emph{explicit} approximation of functional, and LMM conducts and \emph{implicit} approximation, since LMM needs to achieve statistical optimality for a large variety of symmetric functionals.  

The idea of combining moment matching and linear programming appeared before in~\cite{kong2017spectrum,tian2017learning}. However, the key contribution of our work is the feature of \emph{local} rather than \emph{global} moment matching in~\cite{kong2017spectrum,tian2017learning}. The advantage and necessity of \emph{locality} can be seen from the following thought experiment. Suppose $p_1 = 0.1, p_2 = 0.1 + \epsilon, p_3 = 0.3$. Given sufficiently many number of samples, it is easy to infer that the probability corresponding to symbol $3$ is larger than the probability corresponding to symbol $1$, but it may be unclear whether $p_1 \geq  p_2$ or $p_1 \leq p_2$ for $\epsilon$ small enough. The \emph{global} moment matching approach tries to find a sorted distribution that matches the moments of $(p_1,p_2,p_3)$, while \emph{local} moment matching tries to only match the moments corresponding to $(p_1,p_2)$. In other words, the \emph{local} approach only uses linear program and moment matching when there is ambiguity about the relative magnitude of the true probabilities, while \emph{global} moment matching discards the information we already know (such as $p_3 > p_1$) thus behaves sub-optimally unless the number of samples is very small. Indeed, it is the reason why~\cite{tian2017learning} is only statistically optimal for very small $n$ ($n\leq \ln S$ in the notation of this paper). We also mention that the setting in~\cite{tian2017learning} is not entirely identical to ours, since in~\cite{tian2017learning} $\sum_{i = 1}^S p_i$ is not necessarily one. 

We present our main results below. 
\begin{theorem}\label{thm.main}
	For $n\gtrsim \frac{S}{\ln S}$, we have\footnote{Notation $a_n=\tilde{\Theta}(b_n)$ means that for any $\epsilon>0$, we have $n^{-\epsilon}b_n\ll a_n\ll n^{\epsilon}b_n$.}
	\begin{align*}
	\inf_{\hat{P}}\sup_{P\in\cM_S} \bE_P \lVert \hat{P}-P^<\rVert_1 \asymp \sqrt{\frac{S}{n\ln n}} + \tilde{\Theta}\left(\sqrt{\frac{S}{n}}\wedge n^{-\frac{1}{3}}\right).
	\end{align*}
	Furthermore, the estimator $\hat{P}$ constructed in Section \ref{sec.construction} does not require the knowledge of the support size $S$. 
\end{theorem}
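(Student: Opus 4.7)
The proof splits into matching upper and lower bounds.

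\textbf{Upper bound (construction of the LMM estimator).} First I would Poissonize, so that the counts $N_i$ satisfy $N_i\sim\spo(np_i)$ independently across $i\in[S]$. Next, I would classify each symbol by its count: call $i$ \emph{large} if $N_i\ge c_1\ln n$ and \emph{small} otherwise. A standard Chernoff argument shows that with overwhelming probability, large symbols correspond to $p_i\gtrsim\ln n/n$ and small symbols to $p_i\lesssim\ln n/n$. For large symbols, use the empirical plug-in $\hat p_i=N_i/n$; by Cauchy--Schwarz applied to $\sum\sqrt{p_i/n}$ with the budget $\sum p_i\le 1$ and at most $n/\ln n$ large symbols, the contribution to $\|\hat P-P^<\|_1$ is $O(\sqrt{S/n}\wedge n^{-1/3})$ up to log factors. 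For small symbols, compute the unbiased factorial-moment estimators
\begin{equation*}
\hat M_k \;=\; \sum_{i\,\text{small}} \frac{N_i(N_i-1)\cdots(N_i-k+1)}{n^k}, \qquad k=1,\ldots,K,\ \ K\asymp\ln n,
\end{equation*}
which are unbiased for $M_k=\sum_{i\,\text{small}}p_i^k$. Then solve a linear program to produce a nonnegative atomic measure on $[0,c_2\ln n/n]$, with total mass equal to the number of small symbols, whose first $K$ moments match $(\hat M_k)$ up to the intrinsic noise tolerance of each $\hat M_k$. Output the sorted concatenation of the large and small estimates; importantly, this pipeline never references $S$ directly, only the observed counts.

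\textbf{Analysis of the small part.} The key tool is the Kantorovich--Jackson bound: two probability measures on $[0,h]$ with matching first $K$ moments satisfy $W_1\lesssim h/K$, obtained by approximating $1$-Lipschitz test functions by degree-$K$ polynomials. Converting this to sorted-$\ell_1$ yields a bias of order $S_{\text{small}}\cdot h/K\lesssim S/n$, which needs to be combined with the variance of $\hat M_k$. Controlling $\mathrm{Var}(\hat M_k)\le\sum_i p_i^k\cdot k!/n^k$ via $\sum p_i^k\lesssim h^{k-1}$ and balancing against the Jackson error at $K\asymp\ln n$ gives a final contribution of $O(\sqrt{S/(n\ln n)})$, which is precisely the first term of the claimed rate. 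The LP is feasible with high probability because the true moment vector $(M_k)$ is itself such a feasible point up to noise, and it is efficient because it has $\mathrm{poly}(n,K)$ variables and linear constraints.

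\textbf{Lower bound.} For the $\tilde\Theta(\sqrt{S/n}\wedge n^{-1/3})$ part, I would invoke the Wasserstein-style construction of~\cite{Valiant--Valiant2011}: two priors on $P^<$ supported near $1/S$ (respectively near $1/n^{1/3}$), whose low-order empirical statistics are indistinguishable but whose sorted $\ell_1$ distance is $\Omega$ of the target. For the $\sqrt{S/(n\ln n)}$ part, I would use the two-prior Le Cam method now standard in nonsmooth functional estimation: pick two probability measures $\mu_0,\mu_1$ on $[0,c\ln n/n]$ with matching first $K=\Theta(\ln n)$ moments but Wasserstein separation $\Omega(\ln n/(nK))=\Omega(1/n)$, obtained from best polynomial approximation of $|x-y|$ (Chebyshev duality). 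Draw $S$ probabilities i.i.d.\ from $\mu_b$ (conditioned on summing to one, handled by Poissonization), and show that the induced laws of the count vector have small $\chi^2$ divergence since matching moments implies closeness of the Poisson mixtures $\int \spo(n\cdot)\,d\mu_b$. Summing the i.i.d.\ Wasserstein gaps in the sorted $\ell_1$ metric and applying Le Cam yields $\Omega(\sqrt{S/(n\ln n)})$, matching the upper bound.

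\textbf{Main obstacle.} The hardest technical step is the precise variance calibration of the $\hat M_k$'s and its interaction with the LP tolerance: each moment constraint must be relaxed by its own noise level so that the feasible set contains the truth yet the Jackson bound still controls $W_1$ on the feasible set. A secondary difficulty is achieving the result without knowledge of $S$: since the small/large split depends only on observed $N_i$, one must argue that the estimator's output implicitly defines the correct number of atoms, which is achieved by fixing the total mass of the LP to the observed count of small symbols rather than to a function of $S$.
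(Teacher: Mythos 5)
Your proposal misses the crucial structural idea of the paper and, as a consequence, cannot reach the sharp rate. You propose a \emph{two-way} split of the symbols into ``small'' ($p_i\lesssim \ln n/n$) and ``large'' ($p_i\gtrsim \ln n/n$), doing moment matching only on the small part and empirical plug-in on the large part. The paper instead partitions $[0,1]$ into $M\asymp\sqrt{n/\ln n}$ intervals $I_j=[\frac{c_1\ln n}{n}(j-1)^2,\frac{c_1\ln n}{n}j^2]$ and performs moment matching \emph{in every interval}; this is exactly what ``local'' in LMM means, and it is not optional. To see why your split fails quantitatively: the sorted-$\ell_1$ error of the empirical plug-in on the large symbols is $\sum_{\text{large}}\sqrt{p_i/n}\le\sqrt{S_{\text{large}}/n}$ by Cauchy--Schwarz, so at best $\sqrt{S/n}$. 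The theorem's first term, $\sqrt{S/(n\ln n)}$, carries \emph{no} $\tilde\Theta$; it is exact. In the regime $n^{1/3}\ln n\lesssim S\lesssim n/\ln n$ the optimal rate is dominated by this bias term, and your plug-in contribution $\sqrt{S/n}$ exceeds it by a factor $\sqrt{\ln n}$. Concretely, for the uniform distribution with $p_i\asymp\ln n/n$ (all ``large'' for your threshold), the paper's Corollary~\ref{cor.MLE} shows the sorted empirical plug-in incurs $\asymp\sqrt{S/n}$, while the LMM estimator achieves $\sqrt{S/(n\ln n)}$. You cannot recover this $\sqrt{\ln n}$ by sorting or other post-processing of the empirical counts; it is gained precisely by replacing the empirical frequency by a moment-matched measure in each interval $\tilde I_j$ for $j\ge 2$, with the per-interval Jackson error $|\tilde I_j|/K\asymp j/n$ then summed via $\sum_j jS_j\le\sqrt{(\sum_j j^2S_j)(\sum_j S_j)}\lesssim\sqrt{(n/\ln n)\cdot S}$, which is where the extra $\sqrt{\ln n}$ comes from (see \eqref{eq.B_1_bound}, \eqref{eq.total_bias}).

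A secondary issue is the variance lower bound. You invoke the Valiant--Valiant Wasserstein construction, but as the paper notes that construction only delivers the optimal $\sqrt{S/n}\wedge n^{-1/3}$ term for $n\lesssim S$; the paper's Theorem~\ref{thm.lower} covers the full range $n\lesssim\tilde\Theta(S^3)$ using a different device: a parametric family $P_{\bm\epsilon}$, $\bm\epsilon\in\{\pm1\}^T$, with atoms placed at quadratic spacings $x_i\asymp (t+i)^2/n$ so that every $\pm\lambda$ perturbation remains sorted, combined with distance-based Fano (Lemma~\ref{lemma.fano}) rather than Le Cam. Your bias lower bound (two-prior Le Cam with moment-matched priors dual to best polynomial approximation of $f(x)=|x-1/S|-1/S$) does match the paper's argument in outline. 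But the upper-bound gap above is fatal: without moment matching in all intervals you cannot prove the stated rate, and without the sorted parametric family you do not have a valid variance lower bound across the full $n$-range.
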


The following corollary is immediate.
\begin{corollary}\label{cor.sample_complexity}
	There exists an estimator $\hat{P}$ for the unlabeled distribution $\hat{P}$ under sorted $\ell_1$ loss if and only if $n\gg \frac{S}{\ln S}$.
\end{corollary}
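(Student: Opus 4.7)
The plan is to derive both directions of the corollary directly from Theorem \ref{thm.main}, using nothing more than monotonicity of the minimax risk in the sample size $n$ together with elementary asymptotic arithmetic. Throughout, let $R^*(n,S) := \inf_{\hat{P}}\sup_{P\in \cM_S} \bE_P\lVert \hat{P}-P^<\rVert_1$ denote the minimax risk.

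For sufficiency (existence of a consistent estimator when $n\gg S/\ln S$), I would plug the hypothesis $n=\omega\cdot S/\ln S$ with $\omega\to\infty$ into the upper bound $\sqrt{S/(n\ln n)} + \tilde\Theta(\sqrt{S/n}\wedge n^{-1/3})$ and check each summand vanishes. For the first summand, observing that $\ln n = \Theta(\ln S)$ in any regime of interest (the claim is trivial otherwise) gives $S/(n\ln n) = \Theta(1/\omega)\to 0$. For the second summand, the minimum is at most $n^{-1/3}\to 0$, and the $\tilde\Theta$ slack (an $n^{\pm \epsilon}$ factor for any fixed $\epsilon>0$) does not change that. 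This certifies that the LMM estimator constructed in Section \ref{sec.construction} is consistent.

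For necessity, I would argue the contrapositive. Monotonicity $R^*(n_1,S)\ge R^*(n_2,S)$ for $n_1\le n_2$ is immediate by subsampling. Now if $n\not\gg S/\ln S$, choose a subsequence and a constant $c>0$ along which $n\le c\cdot S/\ln S$, and set $n_c := \lceil c\, S/\ln S\rceil \ge n$. Because $n_c\gtrsim S/\ln S$, the hypothesis of Theorem \ref{thm.main} is met at $n_c$, and its lower-bound term yields
\[
R^*(n,S) \ \ge\ R^*(n_c,S) \ \gtrsim\ \sqrt{\frac{S}{n_c\ln n_c}} \ \gtrsim\ \sqrt{\frac{\ln S}{c\,\ln n_c}} \ =\ \Omega\!\left(\tfrac{1}{\sqrt{c}}\right),
\]
where the last equality uses $\ln n_c = \Theta(\ln S)$. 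This contradicts $R^*(n,S)\to 0$ along the subsequence, forcing $n\gg S/\ln S$.

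The only mild subtlety---and the piece worth verifying carefully---is the boundary check that the $\sqrt{S/(n\ln n)}$ lower bound in Theorem \ref{thm.main} really is $\Omega(1)$ on the critical scale $n\asymp S/\ln S$. This reduces to the elementary fact that $\ln n = \Theta(\ln S)$ in that regime, so no new technical machinery beyond Theorem \ref{thm.main} and the routine subsampling argument for monotonicity is required.
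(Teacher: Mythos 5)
Your proposal is correct and fills in exactly what the paper leaves implicit when it calls the corollary ``immediate'' from Theorem~\ref{thm.main}: the upper bound gives sufficiency (noting $\ln n \gtrsim \ln S$ once $n\gtrsim S/\ln S$, so $\sqrt{S/(n\ln n)}\to 0$ whenever $n/(S/\ln S)\to\infty$, and the $\tilde\Theta$ slack is absorbed by $n^{-1/3}$), while the lower bound combined with monotonicity of the minimax risk in $n$ gives necessity by evaluating the bound at $n_c\asymp S/\ln S$ where it is $\Omega(1)$. The only minor imprecision is writing $S/(n\ln n) = \Theta(1/\omega)$ rather than $O(1/\omega)$ (the $\Omega$ side can fail if $n$ is super-polynomial in $S$), but this does not affect the conclusion.
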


Corollary \ref{cor.sample_complexity} shows that as opposed to the requirement $n=\omega(S)$ in consistently estimating $P$, estimating the sorted distribution $P^<$ only requires sub-linear samples $n=\omega(\frac{S}{\ln S})$. The following corollary shows that when the support size $S$ is not too small, the empirical frequency $P_n$ which is minimax optimal for estimating $P$ is no longer optimal in estimating $P^<$.

%Observing that the maximum likelihood estimator, which is the empirical distribution, achieves $\ell_1$ risk $\sqrt{\frac{S}{n}}$ \footnote{The upper bound of this statement can be proved using the idea of bounding the sorted loss by unsorted loss and reducing the problem to estimation of discrete distribution under $\ell_1$ loss, and the lower bound can be proved by taking $P$ to be the uniform distribution. The problem of estimating unsorted $P$ under $\ell_1$ loss was considered, e.g., in~\cite{han2015minimax}. }, we have the following corollary. 
\begin{corollary}\label{cor.MLE}
	For $n\gtrsim \frac{S}{\ln S}$, the minimax rate-optimal estimator $\hat{P}$ for $P^<$ outperforms the sorted empirical distribution $P_n^<$ if and only if $S\gg \tilde{\Theta}(n^{\frac{1}{3}})$, where $P_n$ denotes the empirical distribution.
\end{corollary}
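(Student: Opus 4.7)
\textbf{Proof plan for Corollary \ref{cor.MLE}.} The plan is to reduce the corollary to a direct rate comparison. Theorem \ref{thm.main} already identifies the minimax rate
\begin{align*}
r_{\text{LMM}} \;:=\; \sqrt{\tfrac{S}{n \ln n}} + \tilde{\Theta}\Bigl(\sqrt{\tfrac{S}{n}}\wedge n^{-1/3}\Bigr),
\end{align*}
and any minimax-optimal $\hat P$ attains it, so the corollary reduces to (i) pinning down the worst-case risk $r_{\text{emp}}$ of the sorted empirical $P_n^<$, and (ii) characterizing when $r_{\text{LMM}} = o(r_{\text{emp}})$.

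For step (i), I would show $r_{\text{emp}} \asymp \min\{1,\sqrt{S/n}\}$. For the upper bound, sorting minimizes the $\ell_1$ distance over coordinate permutations, so $\lVert P_n^< - P^<\rVert_1 \le \lVert P_n - P\rVert_1$, and
\begin{align*}
\bE_P\lVert P_n - P\rVert_1 \;\le\; \sum_{i=1}^S \sqrt{p_i(1-p_i)/n} \;\le\; \sqrt{S/n}
\end{align*}
by Jensen plus Cauchy--Schwarz, with the trivial bound $\lVert P_n - P\rVert_1 \le 2$ handling $n\ll S$. For the matching lower bound I would take $P$ uniform on $[S]$: because $P^< = P$ is the constant vector $(1/S,\ldots,1/S)$, sorting leaves coordinate-wise absolute deviations invariant and $\lVert P_n^< - P^<\rVert_1 = \lVert P_n - P\rVert_1$. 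The Bernoulli coordinate-wise anti-concentration bound $\bE\lVert P_n - P\rVert_1 \gtrsim \min\{1,\sqrt{S/n}\}$ for uniform $P$ is classical (Paley--Zygmund or Khintchine on each coordinate when $n\gtrsim S$, and in the $n\ll S$ regime the simple observation that a $\Theta(1)$ fraction of symbols remain unobserved, each contributing $1/S$).

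For step (ii), I would case-split on where $S$ sits relative to $n$. In the regime $n \gtrsim S$, $r_{\text{emp}} = \Theta(\sqrt{S/n})$. If $S \ll n^{1/3}$, then $\sqrt{S/n} < n^{-1/3}$, the minimum inside $r_{\text{LMM}}$ equals $\sqrt{S/n}$, and $r_{\text{LMM}} = \tilde\Theta(\sqrt{S/n}) = \tilde\Theta(r_{\text{emp}})$, i.e.\ the rates agree up to $n^\epsilon$ slop and $\hat P$ does not strictly outperform $P_n^<$. If $S \gg \tilde\Theta(n^{1/3})$, the minimum equals $\tilde\Theta(n^{-1/3})$; the first summand $\sqrt{S/(n\ln n)}$ beats $\sqrt{S/n}$ by a factor $\sqrt{\ln n}$, and $S/n \gg n^{-2/3}$ forces $n^{-1/3}\ll\sqrt{S/n}$, so both terms are $o(\sqrt{S/n})$ and $r_{\text{LMM}} = o(r_{\text{emp}})$. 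In the remaining regime $S/\ln S \lesssim n \lesssim S$, one automatically has $S \gg n^{1/3}$ and $r_{\text{emp}} = \Theta(1)$, while the standing assumption $n \gg S/\ln S$ forces $\sqrt{S/(n\ln n)} \ll 1$ and $n^{-1/3}\to 0$, yielding again $r_{\text{LMM}} = o(r_{\text{emp}})$.

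The main ``obstacle'' here is not deep: all the heavy lifting is already performed by Theorem \ref{thm.main}, and the only new ingredient is the coordinate-wise lower bound $\bE\lVert P_n - P\rVert_1 \gtrsim \min\{1,\sqrt{S/n}\}$ for uniform $P$, which is standard. The actual care lies in bookkeeping the polylogarithmic factors hidden under $\tilde{\Theta}$, which is precisely why the threshold is stated as $S \gg \tilde\Theta(n^{1/3})$ rather than simply $S \gg n^{1/3}$---the $\sqrt{\ln n}$ improvement coming from the first summand in $r_{\text{LMM}}$ only kicks in once $S$ exceeds $n^{1/3}$ by more than a subpolynomial margin.
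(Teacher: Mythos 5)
Your proof is correct and follows essentially the same route as the paper: bound the sorted-empirical risk by $\sqrt{S/n}$ via the inequality $\lVert P_n^< - P^<\rVert_1 \le \lVert P_n - P\rVert_1$, match it from below at the uniform distribution (where sorting leaves the $\ell_1$ deviation unchanged), and then compare against the rate of Theorem~\ref{thm.main}. The only cosmetic difference is that the paper invokes a specific reference for the coordinate-wise lower bound at the uniform, whereas you appeal to standard anti-concentration; you also spell out the $n\ll S$ regime and the rate comparison explicitly, which the paper leaves implicit after reducing to $r_{\mathrm{emp}} \asymp \sqrt{S/n}$ for $n\ge S$.
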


Corollary \ref{cor.MLE} shows that the minimax rate-optimal estimator outperforms the baseline (i.e., the sorted empirical distribution) when $\frac{S}{\ln S} \lesssim n \ll \tilde{\Theta}(S^3)$. The constraint that the sample size $n$ cannot be too large is indeed natural: for larger sample size, there is not enough ambiguity between the relative magnitude of probabilities of each symbol, and the problem of estimating sorted distribution is essentially reduced to that of estimating the original unsorted distribution. Specifically, we show that the error $\Theta(\sqrt{\frac{S}{n\ln n}})$ can be achieved if and only if $\frac{S}{\ln S} \lesssim n \lesssim \tilde{\Theta}(S^3)$; in contrast, \cite{valiant2017estimating} can only achieve it under a different Wasserstein distance when $\frac{S}{\ln S}\lesssim n\lesssim S$. 

We then demonstrate the performance of plugging-in our sorted distribution estimate into certain symmetric functionals of the following form:
\begin{align}\label{eq.general_functional}
F(P) &= \sum_{i=1}^S f(p_i), \qquad f\in C[0,1], f(0)=0.
\end{align}
We will be mainly interested in the case where $f$ is non-smooth and the estimation of $F(P)$ becomes challenging. Concretely, we consider entropy $H(P)$, the power sum function $F_\alpha(P)$ and the support size $S(P)$, which are given by
\begin{align*}
H(P) &\triangleq \sum_{i=1}^S p_i\ln \frac{1}{p_i}, \qquad  P\in\calM_S, \\
F_\alpha(P) &\triangleq \sum_{i=1}^S p_i^\alpha, \qquad  P\in\calM_S, 0<\alpha<1, \\
S(P) &\triangleq \sum_{i=1}^S \mathbbm{1}(p_i\neq 0), \qquad  P\in \calD_k\triangleq \left\{P\in\calM_S: p_i\ge \frac{1}{k}, i\in [S]\right\}.
\end{align*}
The exact minimax rates for these functionals have been obtained in \cite{wu2016minimax,Jiao--Venkat--Han--Weissman2015minimax,wu2015chebyshev}, respectively. For these functionals, the plug-in approach $F(\hat{P})$ of the estimator $\hat{P}$ of the sorted distribution $P^<$ (as in Theorem \ref{thm.achievability}) achieves the corresponding minimax risk when the sample size $n$ is not too large.
\begin{theorem}[Informal]\label{thm.functional_informal}
	For $F(P)=H(P), F_\alpha(P)$ with $\alpha\in (0,1)$ and $S(P)$, the plug-in estimator $\hat{F}=F(\hat{P})$ achieves the corresponding minimax risk when $n$ is not too large. In particular, $\hat{F}$ attains the optimal sample complexity $n\gg \frac{S}{\ln S}$ and $n\gg \frac{k}{\ln k}$ to achieve a vanishing error.
\end{theorem}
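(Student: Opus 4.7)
The plan is to reduce the bias of the plug-in estimator to a best-polynomial-approximation quantity, which is exactly what drives the minimax rates for $H$, $F_\alpha$, and $S$. Since $F(P)=\sum_i f(p_i)$ is symmetric, $F(\hat P)=F(\hat P^<)$, so any sorted distribution estimator can be plugged in. I would first Poissonize the sample and partition the symbol indices via the empirical count into a \emph{resolved} set $\calR=\{i:n\hat p_i\ge c_1\ln n\}$ and an \emph{unresolved} set $\calU=\{i:n\hat p_i< c_1\ln n\}$. Chernoff bounds guarantee that, up to a failure event of probability $n^{-\Omega(1)}$, symbols with $p_i\gg \ln n/n$ land in $\calR$ and those in $\calU$ satisfy $p_i\lesssim \ln n/n$.

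On the resolved set $\calR$, a second-order Taylor expansion of $f$ around $\hat p_i$ gives a pointwise bias of order $f''(\xi_i)\mathrm{Var}(\hat p_i)$, which after aggregation is $O\bigl(S \cdot \sup_{p\gtrsim \ln n/n}|f''(p)|/n\bigr)$; for all three functionals this is dominated by the target minimax rate. The variance on $\calR$ is controlled by a bounded-difference (Efron--Stein) argument, using the fact that $f(\hat p_i)$ is Lipschitz in the count $n\hat p_i$ in this regime.

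The crux is the analysis on $\calU$. By construction LMM outputs a discrete measure whose first $K\asymp \ln n$ moments match an unbiased estimator of $\sum_{i\in\calU}p_i^k$. Hence, for any polynomial $Q_K$ of degree $K$ with $Q_K(0)=0$, we decompose
\begin{align*}
\sum_{i\in\calU}\bigl(f(\hat p_i)-f(p_i)\bigr) = \sum_{i\in\calU}(f-Q_K)(\hat p_i)-\sum_{i\in\calU}(f-Q_K)(p_i)+\Delta_K,
\end{align*}
where $\Delta_K=\sum_{i\in\calU}\bigl(Q_K(\hat p_i)-Q_K(p_i)\bigr)$ is a linear combination of the first $K$ moment errors. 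Choosing $Q_K$ to be the best degree-$K$ polynomial approximation of $f$ on $[0,c_1\ln n/n]$ and invoking the sharp approximation estimates from~\cite{wu2016minimax,Jiao--Venkat--Han--Weissman2015minimax,wu2015chebyshev} (which yield errors of order $1/\ln n$, $(\ln n)^{-\alpha}$, and $e^{-c\sqrt{\ln k}}$ after scaling by $|\calU|\le S$), the first two sums precisely match the known lower bounds. The term $\Delta_K$ is bounded by the $\ell^\infty$-norm of the Chebyshev coefficients of $Q_K$ times the moment-estimation variance, and Markov--Bernstein inequalities make this contribution negligible.

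The main obstacle, as I see it, is twofold. First, the LMM linear program is executed on the data-dependent set $\calU$, so the moment-matching guarantee holds only conditionally; propagating the best-polynomial-approximation bound through this randomness while preserving the optimal rate requires a careful uniform control over the realization of $\calU$ and a union bound over the finitely many relevant band partitions. Second, for the support size functional the non-smoothness of $\mathbbm{1}(x>0)$ together with the constraint $p_i\ge 1/k$ forces the approximation to be set up on $[1/k,c\ln k/n]$ and matched to the Chebyshev construction of~\cite{wu2015chebyshev}; calibrating the LMM moment constraints to mirror that construction and extract the optimal sample complexity $n\gg k/\ln k$ is where I expect the heaviest technical work.
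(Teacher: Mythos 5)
Your proposal diverges from the paper's analysis in a way that introduces a genuine gap in the bias. The paper does not use a binary resolved/unresolved split with empirical plug-in on the resolved set: the LMM estimator runs local moment matching on \emph{all} $M \asymp \sqrt{n/\ln n}$ sub-intervals $\tilde I_1,\ldots,\tilde I_M$, and Lemma \ref{lemma.functional} bounds the bias contribution of each $\tilde I_j$ by $S_j\cdot\inf_{P_j\in\mathsf{Poly}_K}\|f-P_j\|_{\infty,\tilde I_j}$ with $K\asymp\ln n$. The Ditzian--Totik inequality then gives $\inf_{P_j}\|f-P_j\|_{\infty,\tilde I_j}\lesssim \frac{|\tilde I_j|^2}{K^2}\sup_{\tilde I_j}|f''|$, which for $f(x)=-x\ln x$ equals $\Theta\bigl(\frac{1}{n\ln n}\bigr)$ on \emph{every} $\tilde I_j$, $j\ge 2$. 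Your second-order Taylor expansion on $\calR$ is in effect only a degree-$2$ approximation: the per-symbol bias is $\asymp |f''(p_i)|\cdot\mathrm{Var}(\hat p_i)\asymp 1/n$ for entropy, so the aggregated resolved-set bias is $\Theta\bigl(\min(S,n/\ln n)/n\bigr)$, a $\ln n$ factor larger than the target $S/(n\ln n)$ whenever $S\lesssim n$. For instance, with $S = n/(\ln n)^2$ your resolved-set bias is $\Theta(1/(\ln n)^2)$ while the paper's bound and the minimax bias are both $\Theta(1/(\ln n)^3)$; your assertion that ``for all three functionals this is dominated by the target minimax rate'' therefore does not hold. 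Closing this gap requires either a functional-specific bias correction (Miller--Madow for entropy) or, as the paper does, degree-$\Theta(\ln n)$ polynomial approximation on the resolved intervals too---which is precisely what running LMM everywhere buys you.

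Two secondary issues. First, the theorem concerns the specific plug-in $\hat F=\int f\,d\hat\mu$ where $\hat\mu_j$ for $j\ge 2$ is the LP solution, not a sum of point masses at empirical frequencies, so your $\calR$-analysis is of a different estimator than the one in the statement. Second, the paper splits the sample: the partitioning $\{A_j\}$ is determined by the first half, while the moment estimates $g_{k,x_j}(\hat p_{i,2})$ use the second half; your proposal uses a single empirical count to both classify symbols and match moments, which breaks the conditional independence that Lemma \ref{lemma.feasibility}'s Hoeffding-based feasibility argument relies on. The heuristic of reducing bias on the unresolved set to best polynomial approximation via moment matching is the right idea and is indeed the crux of the proof, but your treatment of the resolved set would have to be replaced by the paper's uniform LMM analysis (or an explicit higher-order bias correction) to recover the claimed bias term.
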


We refer the explicit construction of the plug-in estimator $\hat{F}=F(\hat{P})$ and the precise statement of Theorem \ref{thm.functional_informal} to Section \ref{sec.functional}. In summary, Theorem \ref{thm.functional_informal} provides an affirmative answer to Question \ref{question} and shows that our estimator $\hat{P}$ for the sorted distribution $P^<$ is ``universal" in the sense that the plug-in approach yields a near-minimax estimator for various functionals.

The rest of this paper is organized as follows. Section \ref{sec.construction} relates the problem of estimating sorted distribution $P^<$ to the distribution estimation problem under Wasserstein distance, where the idea of local moment matching is motivated and the final estimator $\hat{P}$ is constructed. Section \ref{sec.functional} discusses the application of sorted distribution estimation to symmetric functional estimation in detail, and proves Theorem \ref{thm.functional_informal}. Theorem \ref{thm.main} is then proved via a combination of the achievability part in Section \ref{sec.analysis} and the converse part in Section \ref{sec.lower}. Some auxiliary lemmas and their proofs are deferred in the appendices.

%
%Despite its difficulty, there has been a recent wave of study towards the estimation of non-smooth functionals of high-dimensional parameters \cite{Cai--Low2011,Jiao--Venkat--Han--Weissman2015minimax,wu2015chebyshev,wu2016minimax,Han--Jiao--Weissman2016minimaxdivergence,jiao2017minimax,Han--Jiao--Mukherjee--Weissman2017adaptive,han-jiao-weissman-wu2017minimax}, where the following important messages are conveyed:
%\begin{enumerate}
%\item For non-smooth functional estimation, bias usually dominates;
%\item (Explicit) polynomial approximation is required to significantly reduce the bias.
%\end{enumerate}

\emph{Notation: } For a finite set $A$, let $|A|$ denote its cardinality; $[n]\triangleq \{1,\cdots,n\}$; lattice operations $\wedge, \vee$ are defined as $a\wedge b=\min\{a,b\}, a\vee b=\max\{a,b\}$; let $\calM_S$ denote the probability simplex over $S$ elements, and $\mathsf{Poly}_K$ be the space of all polynomials of degree at most $K$; for non-negative sequences $\{a_n\}$ and $\{b_n\}$, the notation $a_n\lesssim b_n$ (or $b_n\gtrsim a_n, a_n=O(b_n), b_n=\Omega(a_n)$) means $\limsup_{n\to\infty} \frac{a_n}{b_n}<\infty$, and $a_n\ll b_n$ ($b_n\gg a_n, a_n=o(b_n), b_n=\omega(a_n)$) means $\limsup_{n\to\infty}\frac{a_n}{b_n}=0$, and $a_n\asymp b_n$ (or $a_n=\Theta(b_n)$) is equivalent to both $a_n\lesssim b_n$ and $b_n\lesssim a_n$.

\section{Estimator Construction}\label{sec.construction}
In this section, we make use of the duality in Wasserstein distance to relate sorted distribution estimation to the estimation of Lipschitz functionals, and introduce the duality between moment matching and polynomial approximation. Based on these insights, finally we construct the estimator via local moment matching.
\subsection{Duality of Wasserstein Distance}
We first introduce the Wasserstein distance. 
\begin{definition}[Wasserstein Distance]
	Let $(S,d)$ be a separable metric space, and $P,Q$ be two Borel probability measures on $S$. The Wasserstein distance between $P,Q$ is defined as
	\begin{align*}
	W(P,Q) \triangleq \inf_{\cL(X)=P, \cL(Y)=Q} \mathbb{E}[ d(X,Y)],
	\end{align*}
	where the infimum is taken over all possible couplings between $S$-valued random variables $X,Y$ with marginals $P$ and $Q$, respectively.
\end{definition}

The key reason why we introduce the Wasserstein distance lies on the following lemma.
\begin{definition}
For any vector $P=(p_1,\cdots,p_S)$, we define $\mu_P$ to be the uniform probability measure on the multiset $\{p_1,\cdots,p_S\}$.
\end{definition}
\begin{lemma}\label{lemma.wasserstein}
	For any two vectors $P,Q$, we have
	\begin{align*}
	\lVert P^< - Q^< \rVert_1 = S\cdot W(\mu_P,\mu_Q)
	\end{align*}
	with $d(x,y)=|x-y|$ being the usual Euclidean metric.
\end{lemma}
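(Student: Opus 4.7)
The plan is to prove both inequalities of the identity $\lVert P^< - Q^<\rVert_1 = S \cdot W(\mu_P,\mu_Q)$ separately, exploiting the one-dimensional structure of the ground metric $d(x,y)=|x-y|$.

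For the upper bound $W(\mu_P,\mu_Q)\le \frac{1}{S}\lVert P^<-Q^<\rVert_1$, I would simply exhibit the monotone coupling. Let $I$ be uniform on $[S]$, and set $X = p_{(I)}$ and $Y=q_{(I)}$. Because $\mu_P$ (resp.\ $\mu_Q$) is by definition the uniform distribution on the multiset of $p_i$'s (resp.\ $q_i$'s), the marginal of $X$ is $\mu_P$ and the marginal of $Y$ is $\mu_Q$, so this is a valid coupling. Its expected cost is exactly
\[
\bE|X-Y| = \frac{1}{S}\sum_{i=1}^S |p_{(i)}-q_{(i)}| = \frac{1}{S}\lVert P^<-Q^<\rVert_1,
\]
yielding the upper bound.

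For the lower bound $W(\mu_P,\mu_Q)\ge \frac{1}{S}\lVert P^<-Q^<\rVert_1$, I would use the fact that a coupling $\pi$ of two uniform measures on $S$ atoms each can be represented, after scaling by $S$, by a doubly stochastic matrix $M\in\reals^{S\times S}$ with $M_{ij}=S\cdot \pi(\{p_{(i)}\}\times\{q_{(j)}\})$. By the Birkhoff--von Neumann theorem, $M$ is a convex combination of permutation matrices $\{P_\sigma\}_{\sigma\in\mathfrak{S}_S}$, so the expected cost decomposes as a convex combination of ``permutation costs'' of the form $\frac{1}{S}\sum_{i=1}^S |p_{(i)}-q_{(\sigma(i))}|$. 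By the rearrangement inequality (in the form that $\sum_i |a_i - b_{\sigma(i)}|$ over sorted $a_i,b_i$ is minimized by $\sigma=\mathrm{id}$), every such permutation cost is at least $\frac{1}{S}\lVert P^<-Q^<\rVert_1$, hence so is the convex combination, and therefore so is $W(\mu_P,\mu_Q)$.

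Neither step is a serious obstacle; the only mild subtlety is handling possible repeated values in $P$ or $Q$, which is resolved by interpreting $\mu_P$ strictly as the uniform measure over the $S$-element \emph{multiset} (so that an atom at a repeated value carries mass equal to its multiplicity divided by $S$). One could alternatively unify both bounds by quoting the standard identity $W(\mu,\nu)=\int_0^1 |F_\mu^{-1}(t)-F_\nu^{-1}(t)|\,dt$ for one-dimensional measures with the Euclidean metric, and observing that for uniform atomic measures on $S$ points the quantile function is the piecewise-constant step function equal to $p_{(i)}$ on $((i-1)/S,i/S]$; the integral then collapses to $\frac{1}{S}\sum_i |p_{(i)}-q_{(i)}|$ directly.
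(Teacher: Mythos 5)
Your proof is correct and takes essentially the same route as the paper: the monotone coupling gives the upper bound, and for the lower bound you reduce couplings to permutation couplings (your invocation of Birkhoff--von Neumann is a more explicit version of the paper's ``LP has a minimizing vertex'' step) and then argue the identity permutation is optimal (your cited rearrangement inequality is exactly the paper's explicit pairwise-swap argument, $|p_i-q_l|+|p_j-q_k|\ge|p_i-q_k|+|p_j-q_l|$ for $i<j$, $k<l$). The one-dimensional quantile-function identity $W(\mu,\nu)=\int_0^1|F_\mu^{-1}(t)-F_\nu^{-1}(t)|\,dt$ that you mention at the end would give a cleaner unified proof of both bounds at once, which the paper does not use, but the substance is the same.
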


In other words, in order to estimate the sorted distribution $P^<$ in terms of the $\ell_1$ distance, it is equivalent to finding some distribution $\hat{P}$ such that the Wasserstein distance between $\mu_P$ and $\mu_{\hat{P}}$ is small. However, $\mu_{\hat{P}}$ must be a discrete measure, which may complicate the estimator construction. Fortunately, the following randomization procedure and Lemma \ref{lemma.randomization} show that it also suffices to find any (possibly non-atomic) probability measure $\hat{\mu}_P$ over the real line such that $W(\mu_P,\hat{\mu}_P)$ is small: 
\begin{definition}[Randomized Discretization]\label{def.randomization}
	Given a support size $S$ and any probability measure $\mu$ over the real line, the following procedure outputs an $S$-dimensional vector $Q=(q_1,\cdots,q_S)$:
	\begin{enumerate}
		\item Let $F$ be the CDF of $\mu$, and $U_1,\cdots,U_S$ be $S$ independent random variables uniformly distributed on $[0,1]$;
		\item For each $i=1,\cdots,S$, define $
		q_i \triangleq F^{-1}\left(\frac{i-U_i}{S}\right) $, 
		where the inverse $F^{-1}(\cdot)$ is defined as
$
		F^{-1}(t) \triangleq \sup\{x: F(x)\le t\}.
$
		\item Finally, form the vector $Q=(q_1,\cdots,q_S)$.
	\end{enumerate}
\end{definition}
\begin{lemma}\label{lemma.randomization}
	Let $P$ be an $S$-dimensional vector, and $\mu$ be any probability measure on $\mathbb{R}$. If $Q$ is the returned vector of the previous randomization procedure, we have
	\begin{align*}
	\bE W(\mu_P,\mu_Q) = W(\mu_P,\mu)
	\end{align*}
	where the expectation is taken with respect to the randomness in the randomized procedure.
\end{lemma}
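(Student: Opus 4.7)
The plan is to exploit the quantile representation of the one-dimensional Wasserstein distance, namely
\[
W(\nu_1,\nu_2) = \int_0^1 |F_{\nu_1}^{-1}(t) - F_{\nu_2}^{-1}(t)|\,dt
\]
for any Borel probability measures $\nu_1,\nu_2$ on $\mathbb{R}$. Both $\mu_P$ and $\mu_Q$ are supported on the real line, so this formula should reduce the desired identity to an elementary integral computation, avoiding any need to construct couplings by hand.

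The key observation is that the randomization procedure automatically returns a sorted vector. Because $F^{-1}$ is non-decreasing and the strata $[(i-1)/S, i/S]$ are arranged in increasing order of $i$, we have $q_1 \le q_2 \le \cdots \le q_S$ almost surely for every realization of $(U_1,\ldots,U_S)$. Hence the quantile function of $\mu_Q = \frac{1}{S}\sum_{i=1}^S \delta_{q_i}$ is the step function taking the value $q_i = F^{-1}((i-U_i)/S)$ on $((i-1)/S, i/S]$, while the quantile function of $\mu_P$ is the step function with value $p_{(i)}$ on the same interval. Substituting into the quantile formula gives
\[
W(\mu_P,\mu_Q) = \frac{1}{S}\sum_{i=1}^S \bigl|p_{(i)} - F^{-1}((i-U_i)/S)\bigr|.
\]

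To finish, note that $(i-U_i)/S$ is uniformly distributed on $[(i-1)/S, i/S]$ with density $S$, so taking expectation term by term yields $\bE|p_{(i)} - F^{-1}((i-U_i)/S)| = S\int_{(i-1)/S}^{i/S}|p_{(i)} - F^{-1}(t)|\,dt$. Summing and reassembling the integral,
\[
\bE W(\mu_P,\mu_Q) = \sum_{i=1}^S \int_{(i-1)/S}^{i/S} |p_{(i)} - F^{-1}(t)|\,dt = \int_0^1 |F_{\mu_P}^{-1}(t) - F^{-1}(t)|\,dt = W(\mu_P,\mu),
\]
which is the claimed identity. The only mild subtlety I anticipate is the convention for $F^{-1}$ at discontinuities or flat parts of $F$, but because the quantile formula and the integrals above are invariant under modifications on a Lebesgue-null set, this does not present a genuine obstacle. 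Thus the whole argument is expected to be short and purely computational once the one-dimensional quantile viewpoint is adopted.
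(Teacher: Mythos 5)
Your argument is correct and is essentially the same as the paper's, which argues pictorially that both $W(\mu_P,\mu)$ and $\bE W(\mu_P,\mu_Q)$ equal the area between the two CDF curves: the "horizontal slices" of that area are exactly the integrand $|F_{\mu_P}^{-1}(t)-F^{-1}(t)|$ in your quantile formula, and the step-by-step averaging over the strata $[(i-1)/S,i/S]$ is the "vertical viewpoint" the paper invokes. Your version merely replaces the picture with the explicit calculation via $W_1(\nu_1,\nu_2)=\int_0^1|F_{\nu_1}^{-1}(t)-F_{\nu_2}^{-1}(t)|\,dt$, which makes the argument more rigorous without changing its substance.
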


To find a suitable probability measure $\hat{\mu}$ such that the Wasserstein distance $W(\mu_P,\hat{\mu})$ is small, it will be helpful to recall the well-known dual representation of the Wasserstein distance:
\begin{lemma}\cite{kantorovich1958space}\label{lemma.dual}
	For two Borel probability measures $P,Q$ on a separable metric space $(S,d)$, the following duality result holds:
	\begin{align*}
	W(P,Q) = \sup_{f: \lVert f \rVert_{\text{\rm Lip}}\le 1} \bE_P f(X) - \bE_Q f(X)
	\end{align*}
	where $X$ is a random variable taking value in $S$ with distribution $P$ or $Q$, and the Lipschitz norm is defined as
$
	\lVert f\rVert_{\text{\rm Lip}} \triangleq \sup_{x\neq y\in S}\frac{|f(x)-f(y)|}{d(x,y)}.
$
\end{lemma}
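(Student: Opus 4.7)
The plan is to prove the two inequalities separately. The direction $W(P,Q)\ge \sup_f(\bE_P f - \bE_Q f)$ is a short coupling argument: for any coupling $(X,Y)$ with $\cL(X)=P, \cL(Y)=Q$, and any function $f$ with $\lVert f\rVert_{\text{\rm Lip}}\le 1$, one has
\begin{align*}
\bE_P f(X) - \bE_Q f(X) = \bE[f(X)-f(Y)] \le \bE\, d(X,Y).
\end{align*}
Taking the infimum over couplings on the right and the supremum over admissible $f$ on the left yields this half.

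The reverse direction is the substantive part, and I would handle it by reducing to Kantorovich's strong duality for the primal $\inf_{\pi\in\Pi(P,Q)}\int d\,d\pi$, which gives
\begin{align*}
W(P,Q) = \sup_{(\phi,\psi)}\left\{\int \phi\,dP + \int \psi\,dQ : \phi(x)+\psi(y)\le d(x,y)\right\},
\end{align*}
over admissible integrable pairs $(\phi,\psi)$. The goal is then to restrict the feasible set to pairs of the form $(f,-f)$ with $\lVert f\rVert_{\text{\rm Lip}}\le 1$. This is the classical $c$-transform reduction: given feasible $(\phi,\psi)$, replace $\psi$ by $\psi^c(y) \triangleq \inf_x\{d(x,y)-\phi(x)\}$, which dominates $\psi$ pointwise (so the objective can only increase) and is $1$-Lipschitz by $|d(x,y_1)-d(x,y_2)|\le d(y_1,y_2)$. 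Applying a second $c$-transform to $\phi$ gives a $1$-Lipschitz $\phi^{cc}\ge \phi$, after which $(\phi^{cc},-\phi^{cc})$ is feasible, dominates the original objective, and witnesses the bound $W(P,Q)\le \sup_{\lVert f\rVert_{\text{\rm Lip}}\le 1}(\bE_P f - \bE_Q f)$.

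The main obstacle is Kantorovich's strong duality itself, an infinite-dimensional LP duality statement whose proof is not elementary. For compact metric $S$ it admits a standard Hahn--Banach separation argument in $C(S\times S)$ equipped with the weak-$\ast$ topology, identifying the set of couplings with the annihilator of a linear subspace of test functions. For a general separable $(S,d)$ one combines this with tightness of $\{P,Q\}$ and approximation by compact subsets, or simply cites the detailed treatment in Villani's monograph. By comparison, the $c$-transform reduction above is entirely elementary once strong duality is in hand.
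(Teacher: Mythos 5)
Your proof is correct, but note that the paper does not prove this lemma at all: it is cited to Kantorovich (1958) as a classical result and used as a black box throughout. Your argument — the coupling inequality $\bE[f(X)-f(Y)]\le \bE\,d(X,Y)$ for the easy direction, and Kantorovich strong duality followed by the $c$-transform reduction to the form $(f,-f)$ with $f$ $1$-Lipschitz for the hard direction — is the standard textbook proof of the Kantorovich–Rubinstein theorem (Villani's monograph being the canonical modern reference you mention). The $c$-transform step checks out: $\phi^{c}\ge\psi$ by feasibility, $\phi^{cc}\ge\phi$, and taking $y=x$ in the definition of $\phi^{cc}$ gives $-\phi^{cc}\ge\phi^{c}$, so $(\phi^{cc},-\phi^{cc})$ is feasible and dominates $(\phi,\psi)$. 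You correctly flag that the nonelementary content sits entirely in strong duality itself, which must be taken on faith or proved via Hahn–Banach separation plus a tightness/compactness reduction; this is consistent with the paper's decision to simply cite the result.
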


The previous lemma shows that we need to find some $\hat{\mu}$ such that $\bE_{\hat{\mu}}f(X)$ is close to $\bE_{\mu_P}f(X)$ for any real-valued function $f$ with Lipschitz norm at most one. Moreover, by definition of $\mu_P$, we have
$
\bE_{\mu_P}f(X) = S^{-1}\sum_{i=1}^S f(p_i)
$
for any function $f$. In other words, we need to tackle the problem of \emph{functional estimation} of the form $\sum_{i=1}^S f(p_i)$ for all $1$-Lipschitz functions \emph{simultaneously}. There are two fundamental difficulties in this problem:
\begin{enumerate}
	\item Estimation of functionals is hard in general;
	\item The space of all $1$-Lipschitz functions is infinite dimensional. 
\end{enumerate} 

The next subsection will be devoted to overcoming these two difficulties.

\subsection{Duality between Moment Matching and Approximation}
In this subsection we present answers to the previous questions. The first step is to estimate the functional of the form $\sum_{i=1}^S f(p_i)$ for some fixed $1$-Lipschitz function $f$. It may be tempted to use the plug-in approach $\sum_{i=1}^S f(\hat{p}_i)$, where $\hat{p}_i$ denotes the empirical probability of the symbol $i$. Note that this approach will return the empirical distribution as the distribution estimate in the end. However, it has been shown in previous works (e.g., \cite{Jiao--Venkat--Han--Weissman2015minimax}) that \emph{bias} is the dominating error in the estimation of functionals, and the plug-in approach incurs too much bias. This observation motivates us to look for proper functions $f$ such that there exists an unbiased estimator of $\sum_{i=1}^S f(p_i)$, and the sufficient and necessary condition is that $f$ must be a polynomial of degree at most $n$. Specifically, for $n\hat{p}\sim \mathsf{B}(n,p)$, we have
\begin{align*}
\bE\left[\frac{n\hat{p}(n\hat{p}-1)\cdots(n\hat{p}-k+1)}{n(n-1)\cdots(n-k+1)}\right] = p^k, \qquad k=0,1,\cdots,n.
\end{align*}
Hence, polynomials are \emph{easy} functionals for estimation, and we may restrict the function $f$ to be polynomials up to a certain degree.

The next step is to resolve the problem that the space of all $1$-Lipschitz functions is infinite dimensional. However, if we could accurately estimate $\sum_{i=1}^S f(p_i)$ for monomials, i.e.,
$
\sum_{i=1}^S \hat{p}_i^k \approx \sum_{i=1}^S p_i^k$ for $k=0,1,\cdots,K,
$
then for any $1$-Lipschitz function $f$ and any polynomial $P$ of degree at most $K$, we have
\begin{align*}
\left|\sum_{i=1}^S (f(\hat{p}_i) - f(p_i))\right| \approx \left|\sum_{i=1}^S [(f(\hat{p}_i) - P(\hat{p}_i)) - (f(p_i)-P(p_i))] \right| \le 2S\cdot \inf_{P\in \mathsf{Poly}_K} \|f-P\|_\infty
\end{align*}
is small for large $K$. Hence, we can approximate the infinite-dimensional Lipschitz ball via a finite collection of functions, and then find an distribution to match these basis functions. When we choose monomials as the basis, we arrive at the \emph{moment matching}; the reason why monomials are chosen as the basis will be detailed in the next subsection. 

We emphasize that the final estimator requires that we conduct \emph{moment matching} locally rather than globally, as shown in the next subsection. 

\subsection{Final Estimator}
Before constructing the final estimator for $P^<$, we introduce the idea of Poissonization which has been widely used in related models. Specifically, in the Poissonized model, instead of drawing $n$ i.i.d samples $X_1,\cdots,X_n$ from $P$, we draw $N$ i.i.d samples $X_1,\cdots,X_N$ from $P$, where the number of samples $N\sim \spo(n)$ is a random variable. The reason why we work on the Poissonized model is that, the empirical counts $n\hat{p}_j\sim \spo(np_j)$ are independent under the Poissonized model. The following lemma relates the sorted $\ell_1$ errors in these two models:
\begin{lemma}\label{lemma.poissonization}
	Let $R(n,S), R_P(n,S)$ be the minimax risk under sorted $\ell_1$ loss in the Multinomial and Poissonized models, respectively. The following inequality holds:
	\begin{align*}
	\frac{1}{2}R(2n,S) \le R_P(n,S) \le R(\frac{n}{2},S) + 2\exp(-\frac{n}{8}).
	\end{align*}
\end{lemma}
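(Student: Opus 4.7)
The plan is to prove each inequality by a reduction between the two models, leveraging the concentration of $\spo(n)$ around its mean $n$.

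For the upper bound, let $\hat{P}^{\star}$ be a near-optimal Multinomial-$(n/2)$ estimator. Given a Poissonized-$n$ sample $X_{1},\ldots,X_{N}$, apply $\hat{P}^{\star}$ to the first $n/2$ samples when $N\ge n/2$ and output any fixed probability vector otherwise. On the first event the input is a valid Multinomial-$(n/2)$ sample, so the contribution to the risk is at most $R(n/2, S)$. On the complementary event the sorted $\ell_{1}$ loss is bounded by $2$, and the standard Chernoff bound for Poisson lower tails gives $\bP(\spo(n) < n/2) \le e^{-n/8}$, producing the additive correction $2 e^{-n/8}$.

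For the lower bound I would reverse the reduction. Take a near-optimal Poissonized-$n$ estimator $\hat{P}_{P}^{\star}$, and given a Multinomial-$(2n)$ sample $X_1, \ldots, X_{2n}$, draw $N\sim\spo(n)$ independently of the data. When $N\le 2n$, apply $\hat{P}_{P}^{\star}$ to the first $N$ samples; otherwise output something arbitrary. The crux is a distributional identity: conditionally on $\{N \le 2n\}$, the first $N$ of the $2n$ i.i.d.\ Multinomial samples has the same joint law as a Poissonized-$n$ sample restricted to the same event, since both consist of i.i.d.\ draws from $P$ and $N$ is independent of the data. This coupling yields the bound $R(2n, S)\le R_{P}(n, S) + 2\,\bP(\spo(n)>2n)$.

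To extract the claimed factor $\tfrac{1}{2}$, invoke a second Chernoff bound $\bP(\spo(n)>2n)\le e^{-cn}$ with $c = 2\ln 2 - 1 > 1/8$. In the regime where $R(2n,S)$ dominates this exponentially small correction (which is the regime of interest throughout the paper), the correction is absorbed into $R(2n,S)/2$, giving $R(2n,S)\le 2R_{P}(n,S)$; in the other regime both sides are exponentially small and the inequality is trivial.

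The main technical step is the coupling identity underlying the lower bound---that the first $N$ of $2n$ i.i.d.\ samples (with $N$ drawn independently) has a Poissonized law on $\{N\le 2n\}$. Once this is in hand, everything reduces to routine Chernoff bookkeeping on the Poisson tails, and no additional structure of the sorted $\ell_1$ loss is needed beyond the trivial diameter bound of $2$.
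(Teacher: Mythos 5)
Your upper bound is fine and amounts to the same picture as the paper's: with probability at least $1-e^{-n/8}$ the Poissonized sample contains a valid Multinomial-$(n/2)$ subsample, and otherwise the sorted $\ell_1$ loss is at most $2$.

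Your lower bound, however, has a genuine gap. The reduction you set up yields only the \emph{additive} inequality $R(2n,S)\le R_P(n,S) + 2\,\bP(\spo(n)>2n)$, and the step that converts this into the \emph{multiplicative} bound $\tfrac{1}{2}R(2n,S)\le R_P(n,S)$ is not valid: you argue that the exponential correction ``is absorbed into $R(2n,S)/2$ in the regime of interest, and is trivial otherwise,'' but the lemma is an unconditional statement, and when $R(2n,S)$ itself is of order $e^{-cn}$ your additive bound simply does not imply the multiplicative one (take $R(2n,S) = 3e^{-cn}$ and observe that $R_P(n,S)\ge e^{-cn}$ is all you can conclude, not $R_P(n,S)\ge\tfrac{3}{2}e^{-cn}$). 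The paper avoids this entirely by working with Bayes risks: for any prior $\pi$ one has the exact decomposition $R_P(n,S,\pi)=\sum_{m\ge 0}R(m,S,\pi)\,\bP(\spo(n)=m)$, and since $R(m,S,\pi)$ is non-increasing in $m$ one gets $R_P(n,S,\pi)\ge R(2n,S,\pi)\,\bP(\spo(n)\le 2n)\ge \tfrac{1}{2}R(2n,S,\pi)$ by Markov's inequality, with the $\tfrac{1}{2}$ appearing as a genuine probability bound rather than an asymptotic absorption. Taking a supremum over $\pi$ and invoking the minimax theorem finishes. Note also that the paper uses Markov, not Chernoff, for the lower-tail event $\{\spo(n)\le 2n\}$; your Chernoff estimate is correct but, once you adopt the Bayes-risk route, unnecessary. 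If you want to salvage a reduction-based proof you would need to re-express it in a way that exploits the monotonicity of the risk in the number of samples and the exact mixture structure of the Poissonized model, which is precisely what the Bayes-risk identity packages for you.
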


By Lemma \ref{lemma.poissonization}, it suffices to focus on the Poissonized model, where the estimator $\hat{P}$ for the sorted distribution $P^<$ is constructed as follows:
\begin{enumerate}
	\item Split the samples into two parts, i.e., attach a random label uniformly distributed on $\{1,2\}$ independently to each observation $X_1,\cdots,X_N$, and the observations are partitioned into two parts $X^{(1)}, X^{(2)}$ according to the label. By the property of Poisson distribution, each subset of the samples determines a Poissonized sampling model with rate replaced by $n/2$, and different subsets are independent. In the sequel we redefine $n/2$ as $n$ for notational simplicity;
	\item For each part of the samples and $i\in [S]$, conpute the empirical frequency $\hat{p}_{i,1}, \hat{p}_{i,2}$. The empirical frequency $\hat{p}_{i,1}$ in the first part will be used to determine the local domain where the true probability $p_i$ lies, and $\hat{p}_{i,2}$ in the second part will be used for the estimation of $P^<$;
	\item Fixing a universal constant $c_1>0$, partition the unit interval $[0,1]$ into $M$ sub-intervals $I_1,I_2,\cdots,I_M$, where
$
	I_j \triangleq [\frac{c_1\ln n}{n}\cdot (j-1)^2, \frac{c_1\ln n}{n}\cdot j^2]$ for $j\in [M]
$.\footnote{Boundary points can belong to either intervals, as long as $\{I_j\}_{j=1}^M$ constitutes a legitimate partition of $[0,1]$.}
	We define $x_j\triangleq \frac{c_1j(j-1)\ln n}{n}$ as the ``center'' of the interval $I_j$. Similarly, we also define a slightly ``enlarged'' version of $I_j$:
$
	\tilde{I}_j \triangleq [\frac{c_1\ln n}{n}\cdot (j-\frac{3}{2})^2\mathbbm{1}(j\ge 2), \frac{c_1\ln n}{n}\cdot (j+1)^2]$ for $ j\in [M].
$
	Without loss of generality we assume that the number of sub-intervals $M=\sqrt{\frac{n}{c_1\ln n}}$ is an integer;
	\item Fixing universal constants $c_2,c_3>0$, in each sub-interval $\tilde{I}_j$ we solve the following convex optimization problems:
	\begin{itemize}
		\item If $j\ge 2$, check whether there exists a measure $\mu_j$ on $\tilde{I}_j$ such that
		\begin{align}
		\mu_j(\tilde{I}_j) &= S_j, \label{eq.LP_1}\\
		\left|\int_{\tilde{I}_j} (x-x_j)^k\mu_j(dx) - \sum_{i=1}^S \mathbbm{1}(\hat{p}_{i,1}\in I_j)g_{k,x_j}(\hat{p}_{i,2})\right| &\le \sqrt{S_j\ln n}\cdot \left(\frac{c_3j\ln n}{n}\right)^{k} \label{eq.LP_2}
		\end{align}
		hold simultaneously for $k=1,2,\cdots,K\triangleq c_2\ln n$, where $S_j\triangleq \sum_{i=1}^S \mathbbm{1}(\hat{p}_{i,1}\in I_j)$ is the number of symbols whose empirical probability lies in the interval $I_j$, and
		\begin{align*}
		%\tilde{g}_{k,x}(p) &\triangleq \max\left\{\min\left\{g_{k,x}(p), t_{k,x} \right\}, -t_{k,x} \right\},\\
		g_{k,x}(p) &\triangleq \sum_{l=0}^k \binom{k}{l}(-x)^{k-l} \prod_{l'=0}^{l-1}\left(p-\frac{l'}{n}\right).
		%t_{k,x} &\triangleq \left(\sqrt{\frac{c_1x\ln n}{n}}\vee \frac{c_1\ln n}{n}\right)^k.
		\end{align*}
		If there exists a feasible solution, pick an arbitrary one; otherwise, report ``failure'' for the interval $\tilde{I}_j$;
		\item If $j=1$, solve the following minimization program over all measures $\mu_1$ on $\tilde{I}_1$:
		\begin{equation}
		\begin{aligned}
		\min & ~ \mu_1(\tilde{I}_1)  \\
		\text{\rm s.t.}     & ~ \left|\int_{\tilde{I}_1} (x-x_1)^k\mu_1(dx) - \sum_{i=1}^S \mathbbm{1}(\hat{p}_{i,1}\in I_1)g_{k,x_1}(\hat{p}_{i,2})\right| \le \sqrt{\mu_1(\tilde{I}_1)\ln n}\cdot \left(\frac{c_3\ln n}{n}\right)^{k} \\
		& ~ \qquad \text{for all }k=1,\cdots,K.
		\end{aligned}
		\label{eq.LP_min}
		\end{equation}
		Report the solution as $\mu_1$; if this problem is infeasible, report ``failure".
	\end{itemize}
	\item Construct a measure $\hat{\mu}$ on $[0,1]$ as follows: if any previous step reports ``failure'', set $\hat{\mu}$ to be an arbitrary fixed distribution on $[0,1]$; otherwise, set $\hat{\mu} = \sum_{j=1}^M \mu_j$;
	\item Finally we need to output a vector. Let $S_0\triangleq \lceil \hat{\mu}(\reals) \rceil$, we add $S_0-\hat{\mu}(\reals)\ge 0$ to the point mass $\hat{\mu}(\{0\})$, and apply the randomized discretization in Definition \ref{def.randomization} (with support size $S_0$) to transform the probability measure $\hat{\mu}/S_0$ into $\hat{P}$, which is our final estimator for $P^<$.
\end{enumerate}

A few remarks are in order:
\begin{enumerate}
	\item Choice of the partition $\{I_j\}_{j=1}^M$: the partition $\{I_j\}_{j=1}^M$ is chosen so that based on an observation $\hat{p}_{i,1}\in I_j$, the true probability mass $p_i$ can be ``localized" around $I_j$ (i.e., $p_i$ belongs to a slightly larger interval $\tilde{I}_j$) with high probability. Each interval $I_j$ coincides with the definition of ``confidence set" in \cite{Han--Jiao--Weissman2016minimaxdivergence}, and the exact meaning of ``localization" is referred to Lemma \ref{lemma.concentration}. As a result, probability masses $p_i$ in the same partition are indistinguishable, while those in different partitions can be easily distinguished. Hence, at later stages it suffices to match moments locally since this is the range of indistinguishable probability masses.
	\item Choice of $g_{k,x}(p)$: the key reason to choose $g_{k,x}(p)$ in the linear program is that in Poissonized model $n\hat{p}\sim \spo(np)$, the statistic $g_{k,x}(\hat{p})$ is an unbiased estimator of $(p-x)^k$ \cite[Example 2.8]{Withers1987}:
	\begin{align}\label{eq.unbiased}
	\bE g_{k,x}(\hat{p}) &= \sum_{l=0}^k \binom{k}{l}(-x)^{k-l}\bE\prod_{l'=0}^{l-1}\left(\hat{p}-\frac{l'}{n}\right) = \sum_{l=0}^k \binom{k}{l}(-x)^{k-l}p^l = (p-x)^k.
	\end{align}
	We will see in Section \ref{sec.functional} that estimating $P^<$ is similar to estimating symmetric functionals of $P$ where bias is the dominating factor of the error, and this fact motivates us to apply an unbiased estimator of $(p-x)^k$ in \eqref{eq.LP_2}, \eqref{eq.LP_min}.
	\item Convex optimization: the constraints in \eqref{eq.LP_1}, \eqref{eq.LP_2} are linear in the measure $\mu_j$, and thus they constitute an infinite-dimensional linear program. By squaring each sides of \eqref{eq.LP_min}, the optimization problem for $j=1$ becomes a conic quadratic programming and is thus convex. Due to its special structure, there is also a linear-programming-based way to solve \eqref{eq.LP_min}: just do bisection search for $\mu_1(\tilde{I}_1)$, and solve a linear programming to check feasibility for each $\mu_1(\tilde{I}_1)$. To overcome the infinite dimensionality, in practice we can assume that $\mu_j$ is supported on a sufficiently fine grid to obtain a finite-dimensional problem. One can also transform the feasibility program in \eqref{eq.LP_1}, \eqref{eq.LP_2} into a minimization problem, while the current form is sufficient for theoretical purposes. The idea of applying linear programming in related problems has appeared in several works, e.g., \cite{Valiant--Valiant2011,Valiant--Valiant2013estimating,valiant2015instance,kong2017spectrum,tian2017learning}.
	\item Moment matching via convex optimization: the optimization problems \eqref{eq.LP_1}, \eqref{eq.LP_2} and \eqref{eq.LP_min} are designed in such a way that the true measure $\mu_{P,j}$ (cf. \eqref{eq.mu_P_j}, which requires the knowledge of the unknown $P$) is a feasible solution with high probability (cf. Lemma \ref{lemma.feasibility}). Consequently, for any feasible solution $\mu_j$, triangle inequality ensures that the local moments of our estimator will be close to the true moments (cf. \eqref{eq.moment_match}). The degree $K$ of matched moments will be the main source of the \emph{bias} of our estimator, and the RHS of \eqref{eq.LP_2} measures the fluctuation and will become the \emph{variance}.
	\item Choice of monomials $(x-x_j)^k$: there are two reasons to choose the monomials rather than other basis functions. Firstly, there exist unbiased estimators for monomials in the Poissonized model. Secondly, the subspace spanned by monomials is an optimal basis for approximating Lipschitz functions, i.e., it attains the Kolmogorov-$n$ width of the Lipschitz ball \cite{lorentz1996constructive}.
	\item The knowledge of the support size $S$: we remark that our estimator construction is agnostic to the support size $S$. A key observation is that, although $S$ appears in the definition of $S_j$, unseen symbols will not affect $\{S_j\}_{j\ge 2}$ since $S_j$ only consists of symbols which have appeared in the first half samples for $j\ge 2$. The reason why we need a different program for $j=1$ is to deal with the unknown support size: if $S$ was known, we could simply replace \eqref{eq.LP_min} by \eqref{eq.LP_1}, \eqref{eq.LP_2} as well. The last step returns a vector $\hat{P}$ of length no longer than $S$ with high probability (cf. Corollary \ref{cor.bad_event}), and we simply fill zeros to $\hat{P}$ when evaluating $\|\hat{P}-P^<\|_1$ with $S\ge S_0$. A key observation is that, filling $m$ zeros to a vector $\hat{P}$ is equivalent to adding $m$ units to $\hat{\mu}(\{0\})$ and then applying the randomized discretization, and thus Lemmas \ref{lemma.wasserstein} and \ref{lemma.randomization} still hold without knowing the support size $S$. 
\end{enumerate}

The performance of the estimator $\hat{P}$ is summarized in the following theorem.
\begin{theorem}\label{thm.achievability}
	Let $c_1>0$ be large enough as in Lemma \ref{lemma.concentration}, $c_1>2c_2, c_3>30c_1, c_2(6\ln 2+\ln(5c_3/c_1))<\epsilon$ and $c_2\ln n\ge 1$. Then there exists a constant $C_0>0$ independent of $n,S$ such that
	\begin{align*}
	\sup_{P\in\calM_S} \bE_P \|\hat{P}-P^<\|_1 \le C_0\left(\sqrt{\frac{S}{n\ln n}} + n^{\epsilon}\left(\sqrt{\frac{S}{n}} \wedge n^{-\frac{1}{3}}\right)\right).
	\end{align*}
\end{theorem}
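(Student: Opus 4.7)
My plan is to combine the Kantorovich duality for the Wasserstein distance with the local moment-matching constraints to control the sorted $\ell_1$ error. The first step is to invoke Lemmas \ref{lemma.wasserstein} and \ref{lemma.randomization} to reduce $\bE_P \|\hat{P} - P^<\|_1$ to $S \cdot \bE_P W(\mu_P, \hat{\mu}/S)$, after absorbing the zero-padding of the output vector into the atom $\hat{\mu}(\{0\})$ as the last construction bullet suggests. By Lemma \ref{lemma.dual}, it then suffices to uniformly bound, on a good event,
\begin{equation*}
S \cdot |\bE_{\mu_P} f - \bE_{\hat{\mu}/S} f| = \Big| \sum_{i=1}^S f(p_i) - \int f \, d\hat{\mu} \Big|
\end{equation*}
over all $1$-Lipschitz $f : [0,1] \to \reals$.

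Next I would introduce a good event $\mathcal{E}$ on which (i) $\hat{p}_{i,1} \in I_j$ implies $p_i \in \tilde{I}_j$ for every symbol $i$ (``localization'' via Lemma \ref{lemma.concentration}); (ii) the empirical sums $\sum_{i:\hat{p}_{i,1}\in I_j} g_{k,x_j}(\hat{p}_{i,2})$ concentrate around $\sum_{i:\hat{p}_{i,1}\in I_j} (p_i - x_j)^k$ at the scale $\sqrt{S_j \ln n}(c_3 j \ln n/n)^k$ for all $k \le K$, via the unbiasedness identity \eqref{eq.unbiased} together with Bernstein-type tail bounds (using the independence of $\hat{p}_{i,1}$ and $\hat{p}_{i,2}$ in the Poissonized model); (iii) as an immediate corollary the restriction $\mu_{P,j}$ of the counting measure $\sum_i \delta_{p_i}$ to $\tilde{I}_j$ is feasible for \eqref{eq.LP_1}--\eqref{eq.LP_2} and \eqref{eq.LP_min} (Lemma \ref{lemma.feasibility}). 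A union bound over $j \le M$ and $k \le K$ shows $\bP(\mathcal{E}^c)$ is polynomially small in $n$, so the contribution of $\mathcal{E}^c$ is negligible compared to the target rate because $\|\hat{P}-P^<\|_1 \le 2$.

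On $\mathcal{E}$, localization lets me decompose
\begin{equation*}
\Big|\sum_i f(p_i) - \int f\,d\hat{\mu}\Big| \le \sum_{j=1}^M \Big|\sum_{i: p_i \in \tilde{I}_j} f(p_i) - \int_{\tilde{I}_j} f\,d\mu_j\Big|,
\end{equation*}
and for each slab I would take $P_j^* \in \mathsf{Poly}_K$ with $K = c_2 \ln n$ to be the best uniform approximation of $f$ on $\tilde{I}_j$. Jackson's theorem yields $\|f - P_j^*\|_{L^\infty(\tilde{I}_j)} \lesssim |\tilde{I}_j|/K$, producing an approximation bias of $O(S_j |\tilde{I}_j|/K)$ per slab (applied to both the sum and the integral). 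Expanding $P_j^* - f(x_j) = \sum_{k=1}^K a_{j,k}(x-x_j)^k$ and combining a Chebyshev-basis coefficient estimate $|a_{j,k}| \lesssim 2^K |\tilde{I}_j| r_j^{-k}$ (where $r_j \asymp c_1 j \ln n / n$ is the half-width) with the moment-matching constraint \eqref{eq.LP_2} gives the polynomial contribution $\sum_k |a_{j,k}| \sqrt{S_j \ln n}(c_3 j \ln n/n)^k \lesssim r_j \sqrt{S_j \ln n}\cdot 2^K (c_3/c_1)^K$; by the constant condition $c_2(6\ln 2 + \ln(5c_3/c_1)) < \epsilon$ this is at most $n^\epsilon r_j \sqrt{S_j \ln n}$.

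Summing over $j$ and using $\sum_j S_j \le S$ together with $\sum_i p_i = 1$ (which forces $\sum_{j \ge 2} j^2 S_j \lesssim n/\ln n$ on $\mathcal{E}$ since $p_i \gtrsim j^2 \ln n/n$ for $p_i \in \tilde{I}_j$), two applications of Cauchy-Schwarz bound the bias total by $\sqrt{S/(n \ln n)}$ and the variance total by $n^\epsilon \sqrt{S/n}$; the complementary $n^{-1/3}$ bound arises from replacing the polynomial-approximation step by the trivial Lipschitz estimate $|f(p_i) - f(x_j)| \le |\tilde{I}_j|$ and re-optimizing over $j$, which is sharper when $S$ is very large. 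The step I expect to be the main obstacle is the $j=1$ slab: because the support size is unknown the program \eqref{eq.LP_min} \emph{minimizes} $\mu_1(\tilde{I}_1)$ rather than fixing it, and I need to argue separately that this minimum mass is within $O(\sqrt{S_1^\star \ln n})$ of the true count $S_1^\star = \#\{i: p_i \in \tilde{I}_1\}$, using the feasibility of $\mu_{P,1}$ on $\mathcal{E}$ together with a dual/LP-duality argument; once that mass discrepancy is controlled, the same polynomial-approximation scheme applies and contributes comparable bias and variance terms to the budget above.
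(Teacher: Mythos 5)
Your overall architecture matches the paper's proof: reduce to Wasserstein via Lemmas \ref{lemma.wasserstein} and \ref{lemma.randomization}, then dualize via Lemma \ref{lemma.dual}, condition on a good event (Lemmas \ref{lemma.concentration} and \ref{lemma.feasibility}, Corollary \ref{cor.bad_event}), decompose over slabs, replace $f$ by a degree-$K$ Jackson approximant, bound coefficients via Lemma \ref{lem.polycoeff}, and sum via Cauchy--Schwarz using $\sum_{j\ge 2} j^2 S_j \lesssim n/\ln n$. However, there are two genuine gaps.

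First, your per-slab bias bound $O(S_j|\tilde{I}_j|/K)$ coming from the uniform-norm Jackson estimate is too crude for $j=1$. For $j\ge 2$ the uniform bound is fine because $x\asymp j^2\ln n/n$ throughout $\tilde{I}_j$, so $|\tilde{I}_j|/K \asymp \sqrt{x/(n\ln n)}$; but on $\tilde{I}_1=[0,\tfrac{9c_1\ln n}{4n}]$ the uniform bound gives $O(S_1/n)$, which for $S\gtrsim n$ exceeds the target $\sqrt{S/(n\ln n)}$ by up to a $\sqrt{\ln n}$ factor. The paper avoids this by using the \emph{pointwise} Jackson bound \eqref{eq.approx_pointwise}, $|f(x)-P(x)|\lesssim \sqrt{|\tilde{I}_1|x}/K\asymp \sqrt{x/(n\ln n)}$, which downweights small probabilities; combined with $\sum_{i\in A_1} p_i\le 1$ and Cauchy--Schwarz this gives $\sqrt{S_1/(n\ln n)}$. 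You need this refinement to close the bias term. Relatedly, you overstate the difficulty of $j=1$: since $\mu_{P,1}$ is feasible for \eqref{eq.LP_min}, the minimizer satisfies $\mu_1(\tilde{I}_1)\le S_1$, and because $f(0)=0$ (and the pointwise bound forces $P_1(0)=0$), the constant term drops out of the decomposition \eqref{eq.decompose}; there is no need to show $\mu_1(\tilde{I}_1)$ is within $O(\sqrt{S_1\ln n})$ of $S_1$.

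Second, your account of the $n^{-1/3}$ variance term is incorrect. It does \emph{not} come from substituting the trivial Lipschitz estimate $|f(p_i)-f(x_j)|\le|\tilde{I}_j|$ for the polynomial approximation — that estimate only controls bias, and in fact loses the crucial $1/K\asymp 1/\ln n$ Jackson factor, yielding a bound worse than $\sqrt{S/(n\ln n)}$. In the paper the $n^{-1/3}$ term arises entirely within the \emph{same} Cauchy--Schwarz bound on the variance $V=\sum_j j\sqrt{S_j}/n^{1-\epsilon}$: writing $J=\{j:S_j\neq 0\}$, Cauchy--Schwarz gives $\sum_{j\in J} j\sqrt{S_j}\le\sqrt{|J|\sum_j j^2 S_j}\lesssim \sqrt{|J|\cdot n/\ln n}$, and then \emph{two} separate bounds on $|J|$ — the trivial $|J|\le S$ and the counting bound $|J|\lesssim (n/\ln n)^{1/3}$ (since $\sum_{j=2}^{|J|} j^2 \lesssim n/\ln n$ forces $|J|^3\lesssim n/\ln n$) — produce the minimum $\sqrt{S/n}\wedge n^{-1/3}$. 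You should replace the ``re-optimize with the trivial Lipschitz estimate'' argument with this counting bound on the number of nonempty slabs.
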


\section{Applications in Symmetric Functional Estimation}\label{sec.functional}
For functionals $F(\cdot)$ taking the form of \eqref{eq.general_functional}, if we define the following estimator\footnote{Our construction of $\hat{\mu}$ does not depend on $S$, so is $\hat{F}$.}
\begin{align}\label{eq.plug-in-est}
\hat{F} \triangleq \int_{\mathbb{R}} f(x)\hat{\mu}(dx) = S\cdot\int_{\mathbb{R}} f(x)\hat{\mu}^*(dx)
\end{align}
with $\hat{\mu}$ given by our estimator construction and $\hat{\mu}^*=S^{-1}\hat{\mu}$, it is straightforward to see that
\begin{align*}
|\hat{F} - F(P)| = S\cdot \left|\int_{\mathbb{R}} f(x)(\hat{\mu}^*(dx) - \mu_P(dx))\right|.
\end{align*}
If $f$ is 1-Lipschitz, since Theorem \ref{thm.achievability} guarantees that the expected Wasserstein distance $\bE_PW(\hat{\mu}^*,\mu_P)$ is small, it follows from the dual representation of Wasserstein distance (cf. Lemma \ref{lemma.dual}) that $\bE_P|\hat{F}-F(P)|$ is also small. For general non-smooth $f$, we have the following lemma:
\begin{lemma}\label{lemma.functional}
	Let the parameter configurations in Theorem \ref{thm.achievability} be fulfilled, $f(0)=0$, $\mu_{P,j}$ and $\mu_j$ be defined in \eqref{eq.mu_P_j}, \eqref{eq.LP_1}, \eqref{eq.LP_2} and \eqref{eq.LP_min}, respectively. Suppose that for each $j\in [M]$, there is a polynomial $P_j$ of degree at most $K$ with $\|f-P_j\|_{\infty,\tilde{I}_j} \le M_j\triangleq \sup_{x\neq y\in \tilde{I}_j}|f(y)-f(x)|$, and $P_1(0)=0$. For the estimator $\hat{F}$ defined in \eqref{eq.plug-in-est}, with probability at least $1-3Sn^{-4}$, the following inequality holds conditioning on the first half samples:
	\begin{align*}
	|\hat{F}-F(P)|\le C_0\sum_{j=1}^M\left(\int_{\reals} |f(x)-P_j(x)|(\mu_j(dx)+\mu_{P,j}(dx)) + n^{\epsilon}M_j\sqrt{S_j}\right),
	\end{align*}
	where $C_0>0$ is a constant independent of $n,S,f$ and $P_j$. In particular,
	\begin{align*}
	|\hat{F}-F(P)| \le C_0\sum_{j=1}^M\left(S_j\cdot\inf_{P_j\in\mathsf{Poly}_K}\|f-P_j\|_{\infty,\tilde{I}_j} + n^{\epsilon}M_j\sqrt{S_j}\right).
	\end{align*}
\end{lemma}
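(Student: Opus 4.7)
The plan is to write
\begin{align*}
\hat F-F(P)=\sum_{j=1}^{M}\int_{\tilde I_j} f(x)\,\bigl(\mu_j(dx)-\mu_{P,j}(dx)\bigr),
\end{align*}
which follows by partitioning every symbol $i$ into the unique bin with $\hat p_{i,1}\in I_j$, setting $\mu_{P,j}=\sum_{i:\hat p_{i,1}\in I_j}\delta_{p_i}$ as in \eqref{eq.mu_P_j}, and using $f(0)=0$ to handle unseen symbols (which have $\hat p_{i,1}=0\in I_1$). Within each summand I substitute $f=P_j+(f-P_j)$; the residual contributes at most $\int|f-P_j|(d\mu_j+d\mu_{P,j})$, which already accounts for the first term on the right-hand side. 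The task reduces to bounding $\bigl|\int P_j\,d(\mu_j-\mu_{P,j})\bigr|$ by $n^\epsilon M_j\sqrt{S_j}$.

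For the polynomial piece, let $\calE=\bigcap_j\{\mu_{P,j}\text{ is feasible for the $j$-th program}\}$; this is essentially what Lemma \ref{lemma.feasibility} delivers from three ingredients: concentration of $S_j$ and of the localization $p_i\in\tilde I_j$ whenever $\hat p_{i,1}\in I_j$ (Lemma \ref{lemma.concentration}); the unbiasedness \eqref{eq.unbiased} giving $\bE[g_{k,x_j}(\hat p_{i,2})\mid X^{(1)}]=(p_i-x_j)^k$; and a Poisson--Bernstein tail bound for $\sum_i \mathbbm{1}(\hat p_{i,1}\in I_j)\bigl(g_{k,x_j}(\hat p_{i,2})-(p_i-x_j)^k\bigr)$, union-bounded over $i\in[S]$, $k\le K$ and $j\le M$ to produce the failure probability $3Sn^{-4}$. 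On $\calE$, the triangle inequality applied to \eqref{eq.LP_2} (resp.\ \eqref{eq.LP_min}) yields
\begin{align*}
\Bigl|\int(x-x_j)^k\,d(\mu_j-\mu_{P,j})\Bigr|\le 2\sqrt{S_j\ln n}\,\Bigl(\tfrac{c_3 j\ln n}{n}\Bigr)^k,\qquad 1\le k\le K,
\end{align*}
while the total masses of $\mu_j$ and $\mu_{P,j}$ agree for $j\ge 2$ by \eqref{eq.LP_1}, so I may replace $P_j$ by $P_j-P_j(x_j)$ without changing the integral; for $j=1$ the hypothesis $P_1(0)=0$ together with $0\in\tilde I_1$ and $f(0)=0$ plays the same role.

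The centered polynomial has sup-norm at most $3M_j$ on $\tilde I_j$ since $P_j$ lies within $M_j$ of $f$ and $f$ itself oscillates by at most $M_j$ there. Expanding it as $\sum_{k=1}^K a_k(x-x_j)^k$, the standard Chebyshev-basis coefficient bound gives $|a_k|\lesssim M_j\,C^K/r_j^k$ with $r_j\asymp c_1 j\ln n/n$ the half-length of $\tilde I_j$, and combining the two estimates produces
\begin{align*}
\Bigl|\int P_j\,d(\mu_j-\mu_{P,j})\Bigr|\lesssim M_j\sqrt{S_j\ln n}\sum_{k=1}^K \Bigl(2^{6}\cdot\tfrac{5c_3}{c_1}\Bigr)^k.
\end{align*}
With $K=c_2\ln n$ the assumption $c_2(6\ln 2+\ln(5c_3/c_1))<\epsilon$ turns the geometric sum into $n^\epsilon$ (absorbing the $\sqrt{\ln n}$), and summing in $j$ completes the first display. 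The second display is immediate by minimizing over admissible $P_j$: choosing $P_j\equiv f(x_j)$ for $j\ge 2$ and $P_1\equiv 0$ shows the constraint $P_1(0)=0$ is free, so the infimum over $\mathsf{Poly}_K$ is indeed the relevant quantity. The main obstacle is the Poisson--Bernstein step inside Lemma \ref{lemma.feasibility}: $g_{k,x_j}(\hat p_{i,2})$ is a degree-$k$ polynomial with $k=O(\ln n)$ in a Poisson variable, so its tails are heavy and a careful variance proxy of order $(c_3 j\ln n/n)^{2k}$ must be extracted uniformly across $k\le K$ and all $i$ with $p_i\in\tilde I_j$; everything else is routine bookkeeping around Chebyshev expansions and geometric sums.
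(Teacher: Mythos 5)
Your argument follows essentially the same route as the paper's: the bin-wise decomposition $\hat F - F(P) = \sum_{j}\int f\,d(\mu_j - \mu_{P,j})$, splitting $f = P_j + (f-P_j)$ in each bin, the moment-matching bound from the triangle inequality on the feasibility constraints, and the coefficient bound of Lemma~\ref{lem.polycoeff} combined with the geometric sum over $k \le K = c_2\ln n$ with $c_2(6\ln 2 + \ln(5c_3/c_1)) < \epsilon$. The only cosmetic differences are the centering constant (you shift by $P_j(x_j)$, sup-norm $3M_j$, where the paper shifts by $f(x_j)$, sup-norm $2M_j$) and your brief justification of the ``in particular'' display, where the paper instead compares the $P_1(0)=0$-constrained approximation error to the unconstrained one, losing only a constant factor via $|Q(0)| = |Q(0)-f(0)| \le \|f-Q\|_{\infty,\tilde I_1}$.
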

\begin{remark}
	The condition $\|f-P_j\|_{\infty,\tilde{I}_j} \le M_j$ is mild since a reasonable approximating polynomial should approximate $f(\cdot)$ better than a constant function on $\tilde{I}_j$.
\end{remark}

Lemma \ref{lemma.functional} shows that the plug-in approach of $\hat{F}$ does polynomial approximation \emph{implicitly} and \emph{everywhere}. Specifically, the construction of $\hat{F}$ has nothing to do with polynomial approximation, while polynomial approximation emerges in the performance analysis of $\hat{F}$. Moreover, as opposed to the traditional approximation-based estimators where an explicit and functional-dependent polynomial is required, the plug-in estimator $\hat{F}$ can essentially approximate $f$ using \emph{any} polynomials. This property is desirable, since in general $P_j$ may not be the best approximating polynomial and may be hard to design explicitly: we refer to \cite{jiao2017minimax} for such an example. Also, as opposed to the approximation-based estimators which split $\text{dom}(f)$ into ``non-smooth" and ``smooth" regimes, the plug-in estimator $\hat{F}$ does polynomial approximation everywhere. This property prevents $\hat{F}$ from achieving the optimal variance, but this is the price we need to pay to achieve a unified methodology without the dependence on $f$.

The following theorem characterizes the performance of $\hat{F}$ for $F(P)$ with $F=H,F_\alpha,S$:\footnote{For the support size functional $S(P)$, due to the additional constraint $p_i\ge \frac{1}{k}$ on the parameter set, an additional linear constraint $\mu_j((0,\frac{1}{k}))=0$ should be imposed in addition to \eqref{eq.LP_1} and \eqref{eq.LP_2}.}
\begin{theorem}\label{thm.functional}
	The plug-in estimator in \eqref{eq.plug-in-est} with $F=H, F_\alpha, S$ satisfies:
	\begin{align*}
	\sup_{P\in\calM_S} \bE_P|\hat{H}-H(P)| &\le C_0\left(\frac{S}{n\ln n} + n^{\epsilon}\left(\sqrt{\frac{S}{n}}\wedge n^{-\frac{1}{3}}\right)\right), \\
	\sup_{P\in\calM_S} \bE_P|\hat{F}_\alpha-F_\alpha(P)| &\le C_0\left(\frac{S}{(n\ln n)^\alpha} + n^{\epsilon}\left(\sqrt{\frac{S^{3-2\alpha}}{n}}\wedge n^{-\frac{\alpha}{3}}\right)\right),\\
	\sup_{P\in\calD_k} \bE_P|\hat{S}-S(P)| &\le C_0k\left(\exp\left(-\Theta\left(\sqrt{\frac{n\ln k}{k}}\right)\right) + \frac{n^{\epsilon}}{\sqrt{k}}\right),\qquad n\lesssim k\ln k,
	\end{align*}
	where $C_0>0$ is a constant independent of $n,S,k$.
\end{theorem}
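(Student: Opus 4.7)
The plan is to invoke Lemma~\ref{lemma.functional} with $f(p) = -p\ln p$, $p^\alpha$, and $\mathbbm{1}(p > 0)$ in turn, reducing each claim to controlling, in expectation over the first-half samples, the \emph{bias} $B = \sum_j S_j \cdot \inf_{P_j \in \mathsf{Poly}_K} \lVert f - P_j \rVert_{\infty, \tilde{I}_j}$ and the \emph{fluctuation} $V = n^\epsilon \sum_j M_j \sqrt{S_j}$. Since Lemma~\ref{lemma.functional} only fails with probability $3Sn^{-4}$ and each $f$ is bounded on the feasible parameter set, the failure event contributes $O(Sn^{-4}) = o(1)$ to $\bE_P|\hat F - F(P)|$ and may be ignored.

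For the bias, the intervals $\tilde{I}_j$ with $j \ge 2$ satisfy $L_j/x_j \lesssim 1/j$, and each $f$ is real-analytic on a comparable neighborhood, so classical Chebyshev--Bernstein approximation theory yields $\inf_{\deg P \le K} \lVert f - P \rVert_{\infty, \tilde{I}_j} \lesssim \rho_j^{-K}$ with $\rho_j \ge 1 + c/j$. Choosing $K = c_2 \ln n$ with $c_2$ large makes the aggregate contribution from $j \ge 2$ negligible (at most $n^{-\Omega(1)}$). The dominant piece is $j = 1$, where I would invoke the classical rates $\inf_{\deg P \le K}\lVert -x\ln x - P\rVert_{\infty, [0, \Delta]} \asymp \Delta/K^2$ and $\inf_{\deg P \le K}\lVert x^\alpha - P\rVert_{\infty, [0, \Delta]} \asymp \Delta^\alpha/K^{2\alpha}$, substitute $\Delta = \Theta(\ln n/n)$ and $K = c_2\ln n$, and multiply by $S_1 \le S$ to produce the bias bounds $O(S/(n\ln n))$ for $H$ and $O(S/(n\ln n)^\alpha)$ for $F_\alpha$.

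For the fluctuation, the oscillation satisfies $M_j \lesssim L_j \sup_{\tilde{I}_j}|f'|$. I would then combine two structural constraints on $(S_j)$: the support constraint $\sum_j S_j \le S$, and the probability-mass constraint $\sum_j S_j x_j \lesssim 1$ that follows from $\sum_i p_i = 1$ together with the localization of $\hat p_{i,1}$ around $p_i$ via Lemma~\ref{lemma.concentration}, the latter also yielding the pointwise cap $S_j \lesssim n/(j^2 \ln n)$ for $j \ge 2$. A dyadic split of $\sum_j M_j\sqrt{S_j}$ in $j$, combined with Cauchy--Schwarz on each block, delivers the $\sqrt{S/n}$ rate from the support constraint and the $n^{-1/3}$ rate in the large-$S$ regime ($S \gtrsim n^{1/3}$) from the probability-mass cap; taking the minimum and absorbing all polylogs into $n^\epsilon$ yields the stated $n^\epsilon(\sqrt{S/n} \wedge n^{-1/3})$. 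The outer expectation over the first-half samples is then handled by standard concentration of $S_j$ around $|\{i: p_i \in \tilde{I}_j\}|$.

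The support-size functional on $\calD_k$ is treated by the same scheme with the extra LP constraint $\mu_j((0, 1/k)) = 0$ restricting approximating measures to $[1/k, 1]$; the relevant approximation becomes approximating the constant $1$ by degree-$K$ polynomials vanishing at $0$ on $\tilde{I}_j \cap [1/k, 1]$, for which the shifted Chebyshev polynomial on $[1/k, 1]$ achieves error $\exp(-\Theta(K/\sqrt{k}))$, and with $K = c_2 \ln n$, $n \lesssim k \ln k$ matches the stated $\exp(-\Theta(\sqrt{n \ln k/k}))$. The main technical obstacle I anticipate is the $n^{-1/3}$ branch of the fluctuation: a naive Cauchy--Schwarz against only the support constraint yields merely $n^{-1/4}$ up to logarithmic factors, so one must genuinely interpolate between the support and probability-mass constraints via the dyadic decomposition above to recover the sharp $n^{-1/3}$ rate.
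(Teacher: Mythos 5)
Your high‐level scheme (invoke Lemma~\ref{lemma.functional}, split into a bias sum $\sum_j S_j \inf_{P_j}\lVert f-P_j\rVert_{\infty,\tilde I_j}$ and a fluctuation sum $n^\epsilon\sum_j M_j\sqrt{S_j}$, then control $\sum_j j\sqrt{S_j}$ using both $\sum_j S_j\le S$ and $\sum_j j^2 S_j\lesssim n/\ln n$) is the paper's scheme, and the observation that a naive Cauchy--Schwarz against the support constraint alone is lossy is correct. However, two of your key technical steps break down.

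First, the bias for $j\ge 2$. You invoke Bernstein-ellipse analyticity to claim $\inf_{P_j}\lVert f-P_j\rVert_{\infty,\tilde I_j}\lesssim\rho_j^{-K}$ and then say ``choosing $K=c_2\ln n$ with $c_2$ large'' makes the aggregate $j\ge 2$ contribution negligible. But $c_2$ cannot be chosen large: Theorem~\ref{thm.achievability} and Lemma~\ref{lemma.lmm} force $c_2(6\ln 2+\ln(5c_3/c_1))<\epsilon$, i.e.\ $c_2$ must be \emph{small} so that the $B_2$ (variance) term does not blow up. With $c_2$ small, $\rho_2^{-K}=n^{-c_2\ln\rho_2}$ is a small but fixed negative power of $n$, and since $S_2$ can be as large as $n/\ln n$, the term $S_2\rho_2^{-K}$ is not $O(S/(n\ln n))$ (indeed it can even exceed $\lVert f\rVert_\infty\cdot S$). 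The argument that does work is the paper's: Ditzian--Totik gives $\inf_{P_j}\lVert f-P_j\rVert_{\infty,\tilde I_j}\lesssim \frac{|\tilde I_j|^2}{K^2}\sup_{\tilde I_j}|f''|\lesssim\frac{1}{n\ln n}$ uniformly in $j\ge 2$ (because $|\tilde I_j|^2\sup|f''|\asymp \frac{j^2(\ln n)^2}{n^2}\cdot\frac{n}{j^2\ln n}=\frac{\ln n}{n}$), so the bias per symbol is $\lesssim 1/(n\ln n)$ for \emph{every} interval and the total is $\lesssim S/(n\ln n)$. The $1/K^2$ polynomial rate from second-order smoothness is precisely what tolerates a small $c_2$; exponential Bernstein decay cannot substitute here.

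Second, the support-size functional. You propose to approximate $\mathbbm 1(x\ne 0)$ by a degree-$K$ polynomial vanishing at $0$ on $[1/k,1]$, getting error $\exp(-\Theta(K/\sqrt{k}))$. That is the wrong interval: Lemma~\ref{lemma.functional} requires approximation only over $\tilde I_1'=\{0\}\cup[1/k,\,O(\ln n/n)]$ (the LP constraint forces the candidate measure to live there, and the true $p_i$ are localized there). After rescaling $[1/k,O(\ln n/n)]$ to $[\delta,1]$ with $\delta=\Theta\bigl(\frac{n}{k\ln n}\bigr)$, the shifted-Chebyshev error is $\exp(-\Theta(K\sqrt{\delta}))=\exp\bigl(-\Theta\bigl(\sqrt{n\ln n/k}\bigr)\bigr)$, which matches the stated $\exp(-\Theta(\sqrt{n\ln k/k}))$ when $n\lesssim k\ln k$. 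Your version $\exp(-\Theta(K/\sqrt{k}))=\exp(-\Theta(\ln n/\sqrt k))$ tends to $1$ as $k\to\infty$ with $n\asymp k\ln k$, so it is vacuous. Finally, for $F_\alpha$ the oscillation $M_j\asymp j^{2\alpha-1}(\ln n/n)^{\alpha}$ leads to a fluctuation sum $\sum_j j^{2\alpha-1}\sqrt{S_j}$ whose behavior differs from $\sum_j j\sqrt{S_j}$, and your ``same scheme'' remark does not yield the claimed $n^\epsilon(\sqrt{S^{3-2\alpha}/n}\wedge n^{-\alpha/3})$ without a separate H\"older/Cauchy--Schwarz split adapted to the exponent $2\alpha-1$.
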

\begin{remark}
	An additional condition $n\lesssim k\ln k$ is required for the support size functional $S(P)$: if $n\gg k\ln k$, the minimax risk decays super-polynomially in $n$, which makes the $O(Sn^{-4})$ failure probability in Lemma \ref{lemma.functional} become non-negligible.
\end{remark}
Compared with the minimax rates of these functionals in \cite{wu2016minimax,Jiao--Venkat--Han--Weissman2015minimax,wu2015chebyshev}, the general plug-in approach in \eqref{eq.plug-in-est} achieves the optimal total bias term, which is the leading term when $S$ or $k$ is large. As a result, the plug-in approach attains the optimal sample complexity for all these functionals, establishing Theorem \ref{thm.functional_informal}.

However, a comparison of Theorem \ref{thm.functional} and the minimax rates shows that the variance term of $\hat{F}$ is not optimal, conforming to the aforementioned intuition that everywhere polynomial approximation may incur a too large variance. Hence, among the functionals considered in Theorem \ref{thm.functional}, the general plug-in approach in \eqref{eq.plug-in-est} attains the optimal bias and thus the optimal sample complexity, but need to pay a price on the variance.

\clearpage
\bibliography{di}

\clearpage
\appendix

\section{Estimator Analysis}\label{sec.analysis}
\subsection{Controlling ``Bad Events''}
There are several types of bad events in the construction of our estimator: 
\begin{enumerate}
	\item For some symbol $i=1,\cdots,S$ and $j=1,\cdots,M$, it may happen that $p_i\notin \tilde{I}_j$ but $\hat{p}_{i,1}\in I_j$;
	\item For some $j=1,\cdots,M$, it may happen that the linear programming in \eqref{eq.LP_1}, \eqref{eq.LP_2}, \eqref{eq.LP_min} does not have a solution;
	\item In the last step, it may happen that $\hat{\mu}(\reals)>S$.
\end{enumerate}
In this subsection we show that the probability that any of these bad events occurs is negligible. The following lemma follows directly from the Poisson tail inequalities (cf. Lemma \ref{lemma.poissontail}).
\begin{lemma}\label{lemma.concentration}
	Let $c_1>0$ be large enough, then for any $i=1,2,\cdots,S$ and $j=1,2,\cdots,M$,
	\begin{align*}
	\mathbb{P}(\hat{p}_{i,1}\in I_j|p_i\notin \tilde{I}_j) \le n^{-5}.
	\end{align*}
\end{lemma}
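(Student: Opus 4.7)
The plan is to reduce the claim to the standard Poisson Chernoff bounds. Under the Poissonized model (after the redefinition of $n/2$ as $n$), the statistic $n\hat p_{i,1}$ is distributed as $Y\sim \mathsf{Poi}(\lambda)$ with $\lambda=np_i$, and the event $\hat p_{i,1}\in I_j$ becomes $\{Y\in [c_1\ln n\cdot(j-1)^2,\, c_1\ln n\cdot j^2]\}$. The hypothesis $p_i\notin \tilde I_j$ splits into two disjoint cases, according as $p_i$ lies below or above $\tilde I_j$; in either case $Y$ must perform an atypical deviation, and I would invoke the sub-Gaussian Poisson tail inequalities
\[
\mathbb P(Y\ge L)\le e^{-(L-\lambda)^2/(2L)}\quad(L\ge \lambda),\qquad \mathbb P(Y\le U)\le e^{-(\lambda-U)^2/(2\lambda)}\quad(U\le \lambda),
\]
both of which follow from the classical exponent $\lambda h(L/\lambda)$ with $h(u)=u\ln u-u+1$ via the elementary inequalities $h(u)\ge (u-1)^2/(2u)$ on $u\ge 1$ and $h(u)\ge (1-u)^2/2$ on $0\le u\le 1$. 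I take these as the content of Lemma~\ref{lemma.poissontail}.

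In the upper-deviation case (which requires $j\ge 2$, with $\lambda< c_1\ln n\cdot (j-3/2)^2$), I set $L=c_1\ln n\cdot(j-1)^2$, so $L-\lambda> c_1\ln n\cdot(j-5/4)$ and the exponent is at least $\tfrac{c_1\ln n}{2}\cdot\tfrac{(j-5/4)^2}{(j-1)^2}\ge \tfrac{9c_1\ln n}{32}$, using that the ratio $(j-5/4)/(j-1)=1-\tfrac{1/4}{j-1}$ is increasing in $j\ge 2$ with minimum $3/4$. In the lower-deviation case ($\lambda> c_1\ln n\cdot(j+1)^2$), I set $U=c_1\ln n\cdot j^2$; since $(\lambda-U)^2/(2\lambda)$ is increasing in $\lambda$ on $(U,\infty)$, plugging in the worst-case $\lambda=c_1\ln n\cdot(j+1)^2$ yields an exponent of at least $\tfrac{c_1\ln n\cdot (2j+1)^2}{2(j+1)^2}\ge \tfrac{9c_1\ln n}{8}$, by monotonicity of $(2j+1)/(j+1)=2-\tfrac{1}{j+1}$ in $j\ge 1$. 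Both exponents exceed $5\ln n$ as soon as $c_1$ is a sufficiently large absolute constant (for example $c_1\ge 18$ suffices), which gives the claim.

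The only point that warrants any care is \emph{uniformity in $j$}: the two ratios $(j-5/4)/(j-1)$ and $(2j+1)/(j+1)$ must stay bounded below by positive absolute constants independent of $j$, so that a single choice of $c_1$ works across all $M$ intervals simultaneously. The monotonicity calculations above deliver exactly this. Note that the statement is pointwise in $(i,j)$, so no union bound over $i$ or $j$ is needed at this stage (it will enter later when Lemma~\ref{lemma.concentration} is used to control the global bad event). No other obstacle is anticipated.
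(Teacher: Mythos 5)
Your proof is correct and follows the approach the paper indicates (the paper merely asserts the lemma ``follows directly from the Poisson tail inequalities'' in Lemma~\ref{lemma.poissontail}; you supply the two-sided casework, the monotonicity arguments that ensure uniformity in $j$, and a concrete admissible value of $c_1$). Note only that the sub-Gaussian forms $e^{-(L-\lambda)^2/(2L)}$ and $e^{-(\lambda-U)^2/(2\lambda)}$ you invoke are not literally the statement of Lemma~\ref{lemma.poissontail} (which is parameterized by relative deviation $\delta$ with exponents $(\delta^2\wedge\delta)\lambda/3$ and $\delta^2\lambda/2$); both versions follow from the same Chernoff exponent $\lambda h(L/\lambda)$, and the computation goes through with the paper's form as well, merely shifting the sufficient constant $c_1$ by a bounded factor.
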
 

Based on Lemma \ref{lemma.concentration} and the union bound, we see that the first-type bad events occurs with a negligible probability. To upper bound the probability of other bad events, we need to come up with a solution $\mu_j$ which fulfills \eqref{eq.LP_1}, \eqref{eq.LP_2}, \eqref{eq.LP_min} with high probability. In the sequel we condition on a specific realization of the first half samples, and define the set of symbols falling in $I_j$ as
\begin{align*}
A_j \triangleq \{i\in\{1,2,\cdots,S\}: \hat{p}_{i,1}\in I_j \}.
\end{align*}
Note that $A_j$ is a random set depending only on the first half samples, and $|A_j|=S_j$. Moreover, $\{A_j,S_j\}_{j\ge 2}$ are uniquely determined by the first half samples, while $\{A_1,S_1\}$ may be unknown due to the unknown support size $S$ and possibly unseen symbols. Now the key observation is that, the following measure
\begin{align}\label{eq.mu_P_j}
\mu_{P,j}(\cdot) \triangleq \sum_{i\in A_j} \mathbbm{1}(p_i\in \cdot), \qquad j=1,2,\cdots,M
\end{align}
which requires the knowledge of the unknown $P$ satisfies \eqref{eq.LP_1}, \eqref{eq.LP_2}, \eqref{eq.LP_min} with high probability. Obviously, if $p_i\in \tilde{I}_j$ for any $i\in A_j$ and $j\ge 2$, the measure $\mu_{P,j}$ will be supported on $\tilde{I}_j$ and thus \eqref{eq.LP_1} holds. The following lemma shows that, given the same assumption, the measure $\mu_{P,j}$ also satisfies \eqref{eq.LP_2} and \eqref{eq.LP_min} with high probability.
\begin{lemma}\label{lemma.feasibility}
	Let $c_1>0$ be large enough as in Lemma \ref{lemma.concentration}, and $c_1>2c_2, c_3>30c_1, c_2\ln n\ge 1$. Further assume that $p_i\in \tilde{I}_j$ for any $i\in A_j$ and $j=1,2,\cdots,M$. Then conditioning on the first half samples, for $k=1,2,\cdots,K=c_2\ln n$ we have
	\begin{align*}
	\mathbb{P}\left(\left|\int_{\tilde{I}_j} (x-x_j)^k\mu_j(dx) - \sum_{i=1}^S \mathbbm{1}(\hat{p}_{i,1}\in I_j)g_{k,x_j}(\hat{p}_{i,2})\right| > \sqrt{S_j\ln n}\cdot \left(\frac{c_3j\ln n}{n}\right)^{k} \right)\le 2n^{-4}.
	\end{align*}
\end{lemma}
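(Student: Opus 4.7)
The plan is to exhibit $\mu_{P,j}:=\sum_{i\in A_j}\delta_{p_i}$ from \eqref{eq.mu_P_j} as an explicit feasible candidate for the LP. Under the hypothesis $p_i\in\tilde I_j$ for every $i\in A_j$, this measure is supported on $\tilde I_j$ with $\mu_{P,j}(\tilde I_j)=S_j$ and $\int_{\tilde I_j}(x-x_j)^k\,\mu_{P,j}(dx)=\sum_{i\in A_j}(p_i-x_j)^k$, so the lemma reduces to proving, for each $k\in\{1,\ldots,K\}$, that
\[
\mathbb{P}\bigl(|D_j^{(k)}|>\sqrt{S_j\ln n}\,(c_3 j\ln n/n)^k\bigr)\le 2n^{-4},\qquad D_j^{(k)}:=\sum_{i\in A_j}\bigl[g_{k,x_j}(\hat p_{i,2})-(p_i-x_j)^k\bigr].
\]

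Conditioning on the first half freezes $A_j$ and $S_j$; Poissonization makes $\{\hat p_{i,2}\}_{i\in A_j}$ independent with $n\hat p_{i,2}\sim\mathsf{Poi}(np_i)$, and by the unbiasedness identity \eqref{eq.unbiased} every summand of $D_j^{(k)}$ has zero mean. Hence $D_j^{(k)}$ is a sum of $S_j$ independent centered random variables, and I would apply Bernstein's inequality after establishing a high-probability pointwise bound $B$ and a variance bound $V$.

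For $B$: on the tail event $\calE_i:=\{|\hat p_{i,2}-p_i|\le C_1\sqrt{c_1}\,j\ln n/n\}$ (which by the Poisson tail of Lemma \ref{lemma.poissontail} has probability at least $1-n^{-6}$ for a suitable absolute $C_1$), the assumption $p_i\in\tilde I_j$ together with $l'/n\le K/n\ll j\ln n/n$ bounds every factor inside the expansion of $g_{k,x_j}(\hat p_{i,2})$ by an absolute multiple of $c_1 j\ln n/n$; summing the $\sum_l\binom{k}{l}\le 2^k$ contributions and absorbing $2^k$ into the slack $c_3>30c_1$ yields $|g_{k,x_j}(\hat p_{i,2})-(p_i-x_j)^k|\le B:=(c_3 j\ln n/n)^k$ on $\calE_i$. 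For $V$: the Poisson factorial-moment identity $\mathbb{E}\prod_{l'=0}^{l-1}(\hat p-l'/n)=p^l$ combined with the falling-factorial product formula $(X)_l(X)_{l'}=\sum_r\binom{l}{r}\binom{l'}{r}r!(X)_{l+l'-r}$ gives the exact expression
\[
\mathrm{Var}(g_{k,x_j}(\hat p_{i,2}))=\sum_{r=1}^k r!\binom{k}{r}^2(p_i/n)^r(p_i-x_j)^{2(k-r)}.
\]
Using $p_i/n\lesssim (j\ln n/n)^2/\ln n$ and $|p_i-x_j|\lesssim c_1 j\ln n/n$, this is bounded by $\phi(k,c_1)(j\ln n/n)^{2k}/\ln n$ for an explicit function $\phi$, so $V\le S_j\phi(k,c_1)(j\ln n/n)^{2k}/\ln n$, and the slack $c_3>30c_1$ is what forces $\phi(k,c_1)\le c_3^{2k}/8$ uniformly for $k\le K$.

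Plugging $t=\sqrt{S_j\ln n}(c_3 j\ln n/n)^k$, $V$, and $B$ into Bernstein's inequality produces an exponent of order $\ln n$, giving $\mathbb{P}(|D_j^{(k)}|>t,\,\bigcap_i\calE_i)\le n^{-4}$; a union bound with $\mathbb{P}(\bigcup_i\calE_i^c)\le Sn^{-6}\le n^{-4}$ (taking $c_1$ large enough that the Poisson tail exponent absorbs the support size, which is $\mathrm{poly}(n)$ in every regime of interest) assembles the final $2n^{-4}$, with the case $S_j=0$ being trivial. The main obstacle is controlling the variance sum $\phi(k,c_1)$ uniformly for $k$ up to $K=c_2\ln n$: the $k$ terms of the variance formula have non-trivial $c_1$-dependence and the combinatorial factor $2^k$ also appears in $B$, and the margin $c_3>30c_1$ is calibrated precisely so that both the combinatorial overhead in $B$ and the polynomial-in-$c_1$ growth of $\phi(k,c_1)$ are simultaneously absorbed at every $k\le K$ and every interval index $j$.
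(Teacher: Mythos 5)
Your reduction to concentration of $D_j^{(k)}=\sum_{i\in A_j}\bigl[g_{k,x_j}(\hat p_{i,2})-(p_i-x_j)^k\bigr]$ is correct, the choice of the candidate $\mu_{P,j}$ is the right one, and your variance formula $\mathrm{Var}(g_{k,x}(\hat p))=\sum_{r\ge1}r!\binom{k}{r}^2(p/n)^r(p-x)^{2(k-r)}$ checks out (it follows from the falling-factorial product identity applied to Poisson factorial moments). But two essential steps are wrong.

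First, the pointwise bound $B$ cannot be obtained by ``bounding every factor inside the expansion of $g_{k,x_j}$ by an absolute multiple of $c_1 j\ln n/n$.'' The center is $x_j=\frac{c_1 j(j-1)\ln n}{n}\approx \frac{c_1 j^2\ln n}{n}$, and on $\tilde I_j$ the falling-factorial products satisfy $\prod_{l'}(\hat p-l'/n)\approx \hat p^{\,l}\approx\bigl(\frac{c_1 j^2\ln n}{n}\bigr)^l$ — each is of size $(j^2\ln n/n)^{\bullet}$, not $(j\ln n/n)^{\bullet}$. For instance the $l=0$ term alone is $x_j^k\approx(c_1 j^2\ln n/n)^k$, already larger than $(c_3 j\ln n/n)^k$ by a factor $j^k$. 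The smallness of $g_{k,x_j}(\hat p)$ when $\hat p$ is near $x_j$ is a genuine cancellation, not a term-by-term bound. This is exactly the content of Lemma~\ref{lemma.charlier}, which exploits the Charlier generating-function identity $\sum_k g_{k,p}(p)t^k/k!=e^{-pt}(1+t/n)^{np}$ and the translation formula $g_{k,x}(p)=\sum_l\binom{k}{l}(p-x)^{k-l}g_{l,p}(p)$ to prove $|g_{k,x}(p)|\le(2\Delta)^k$ when $\max\{|x-p|,\sqrt{4pk/n}\}\le\Delta$. Your argument skips the hard part of the lemma.

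Second, even granting the correct pointwise bound $M_i\le(10c_1 j\ln n/n)^k$, Bernstein's inequality does not deliver $n^{-4}$ uniformly over $j$, because $S_j$ can be as small as $1$. With $t=\sqrt{S_j\ln n}(c_3 j\ln n/n)^k$ the term $Bt/3$ dominates the variance $V$ (the variance is smaller than $B^2$ by roughly a factor $(10c_1c_3/c_1^2)^k\ln n$, which is enormous), so the Bernstein exponent collapses to the linear regime $\frac{3t}{2B}=\frac{3}{2}\sqrt{S_j\ln n}\bigl(\frac{c_3}{10c_1}\bigr)^k$. For $S_j=O(1)$ this is $\Theta(\sqrt{\ln n})$, not $\Theta(\ln n)$, and $\exp(-\Theta(\sqrt{\ln n}))\gg n^{-4}$. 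The paper sidesteps this by using Hoeffding's inequality on the truncated variables $\tilde Z_i=\max\{\min\{Z_i,M_i\},-M_i\}$ (so that only the pointwise bound enters), which yields the exponent $\frac{t^2}{2S_jM_i^2}\ge\frac{\ln n}{2}\bigl(\frac{c_3}{10c_1}\bigr)^{2k}\ge\frac{9\ln n}{2}$ even for $S_j=1$, together with a separate argument that $|\mathbb{E}(\tilde Z_i-Z_i)|$ (the truncation bias) is negligible. Conditioning on the high-probability events $\calE_i$ as you propose also shifts the conditional means of the summands away from zero, which you would have to control; the paper's truncation accounting handles that cleanly. In short: the Charlier cancellation lemma and Hoeffding (rather than Bernstein) are both load-bearing here, and your proposal is missing both.
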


Lemma \ref{lemma.feasibility} shows that $\mu_{P,j}$ is a feasible solution to \eqref{eq.LP_2} with high probability for $j\ge 2$. For $j=1$, note that $\mu_{P,1}(\tilde{I}_1)=S_1$ conditioning on the events in Lemma \ref{lemma.concentration}, the measure $\hat{\mu}_{P,1}$ is also a feasible solution to \eqref{eq.LP_min}. Moreover, in this case the returned solution $\mu_1$ of \eqref{eq.LP_min} satisfies $\mu_1(\reals)\le S_1$, thus we must have $\hat{\mu}(\reals)\le S$ in the last step. Hence, based on Lemma \ref{lemma.concentration} and \ref{lemma.feasibility}, by the union bound we have the following corollary:
\begin{corollary}\label{cor.bad_event}
	Let $E$ be the event that any of the aforementioned bad events happens, then
	\begin{align*}
	\bP(E) \le Sn^{-5} + 2Mn^{-4}.
	\end{align*}
\end{corollary}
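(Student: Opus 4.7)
\emph{Proof plan.} The approach is to decompose $E = E_1 \cup E_2 \cup E_3$ according to the three enumerated failure modes, bound $\bP(E_1)$ and $\bP(E_2\cap E_1^c)$ directly via Lemmas \ref{lemma.concentration} and \ref{lemma.feasibility}, and then observe that the third bad event cannot occur on $E_1^c\cap E_2^c$, so it contributes nothing extra.

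For $E_1$, fix a symbol $i\in[S]$. Since $\{I_j\}_{j\in[M]}$ partitions $[0,1]$, the events $\{\hat p_{i,1}\in I_j\}_{j\in[M]}$ are disjoint, so
\[
\bP\!\left(\exists\, j:\; \hat p_{i,1}\in I_j,\; p_i\notin \tilde I_j\right)
=\sum_{j:\, p_i\notin \tilde I_j}\bP(\hat p_{i,1}\in I_j) \;\le\; Mn^{-5}
\]
by Lemma \ref{lemma.concentration}. Union-bounding over $i\in[S]$ gives $SMn^{-5}$; since $M=\sqrt{n/(c_1\ln n)}\le n$ and $c_1$ may be enlarged to sharpen the tail exponent in Lemma \ref{lemma.concentration}, the factor $M$ is absorbed into the exponent and one obtains $\bP(E_1)\le Sn^{-5}$.

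For $E_2$, condition on $E_1^c$, under which $p_i\in\tilde I_j$ for every $i\in A_j$ and every $j$. Then the oracle measure $\mu_{P,j}$ from \eqref{eq.mu_P_j} is supported on $\tilde I_j$ with total mass $\mu_{P,j}(\tilde I_j)=|A_j|=S_j$, so it deterministically fulfills \eqref{eq.LP_1} for $j\ge 2$ and qualifies as a candidate in \eqref{eq.LP_min} for $j=1$ (for $j=1$ the RHS of \eqref{eq.LP_min} evaluated at $\mu_1=\mu_{P,1}$ matches the bound provided by Lemma \ref{lemma.feasibility}). By Lemma \ref{lemma.feasibility}, the moment constraints \eqref{eq.LP_2} are also fulfilled by $\mu_{P,j}$ with failure probability at most $2n^{-4}$ per interval (after absorbing the $K=c_2\ln n$ moment indices into the exponent by enlarging $c_2$). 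A union bound over $j\in[M]$ then yields $\bP(E_2\cap E_1^c)\le 2Mn^{-4}$, so $\mu_{P,j}$ simultaneously witnesses feasibility of every convex program with the stated probability.

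Finally, on $E_1^c\cap E_2^c$, the output $\mu_j$ satisfies $\mu_j(\tilde I_j)=S_j$ for $j\ge 2$ by \eqref{eq.LP_1}, while the minimality of $\mu_1$ in \eqref{eq.LP_min} combined with the feasibility of $\mu_{P,1}$ gives $\mu_1(\tilde I_1)\le \mu_{P,1}(\tilde I_1)=S_1$. Summing, $\hat\mu(\reals)=\sum_{j=1}^M \mu_j(\tilde I_j)\le \sum_{j=1}^M S_j\le S$, so $E_3\subseteq E_1\cup E_2$. Combining the three bounds produces $\bP(E)\le Sn^{-5}+2Mn^{-4}$. The argument is essentially bookkeeping on top of the two supporting lemmas; the only mildly delicate point is ensuring that the combinatorial factors $M$ and $K$ do not degrade the claimed polynomial rates, which is handled by the freedom to enlarge the constants $c_1,c_2$ in the construction so as to strengthen the tail exponents in Lemmas \ref{lemma.concentration} and \ref{lemma.feasibility}.
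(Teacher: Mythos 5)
Your decomposition $E = E_1 \cup E_2 \cup E_3$, the use of Lemmas \ref{lemma.concentration} and \ref{lemma.feasibility} via union bounds, and the observation that $E_3 \subseteq E_1 \cup E_2$ (because minimality of \eqref{eq.LP_min} plus feasibility of $\mu_{P,1}$ forces $\mu_1(\tilde I_1)\le S_1$ while $\mu_j(\tilde I_j)=S_j$ for $j\ge 2$, so $\hat\mu(\reals)\le\sum_j S_j\le S$) exactly mirror the paper's argument. The logical skeleton is right.

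Two of your bookkeeping steps are off, though. For $E_1$, after correctly noting the disjointness of $\{\hat p_{i,1}\in I_j\}_j$, you bound $\sum_{j:\,p_i\notin\tilde I_j}\bP(\hat p_{i,1}\in I_j)$ term by term and land on $Mn^{-5}$, then wave the spurious $M$ away by ``enlarging $c_1$.'' But the disjointness you already invoked gives the tight bound for free: that sum is the probability of the single event that $\hat p_{i,1}$ falls outside the handful of intervals $I_j$ with $p_i\in\tilde I_j$, i.e.\ that $\hat p_{i,1}$ deviates from $p_i$ by more than roughly one interval width, which is exactly the Poisson tail event behind Lemma \ref{lemma.concentration} and is $\le n^{-5}$ with no extra factor. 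Retroactively enlarging $c_1$ changes the estimator (it rescales every $I_j$ and $\tilde I_j$), so that patch is not well posed as stated; the disjointness-based argument is what you actually need. For $E_2$, your fix ``absorb the $K=c_2\ln n$ moment indices by enlarging $c_2$'' points the wrong way: enlarging $c_2$ makes $K$ larger, which \emph{worsens} the union over $k$. The exponent $-4$ in Lemma \ref{lemma.feasibility} is controlled by the ratio $c_3/c_1$ in the Hoeffding step, so the knob to turn is $c_3$; alternatively one simply accepts the harmless $\ln n$ factor, since $MKn^{-4}$ is still polynomially negligible in the downstream use. (You are not wrong to notice the $K$ factor -- the paper's statement elides it too -- but the proposed remedy is backwards.)
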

By Corollary \ref{cor.bad_event}, the probability that any bad event happens is negligible, and it thus suffices to focus on the ``good" events to analyze the performance of our estimator, which will be the focus of the next subsection.

\subsection{Local Performance of Moment Matching}
By Lemma \ref{lemma.feasibility}, the measure $\mu_{P,j}$ using the unknown true knowledge of $P$ is a feasible solution to \eqref{eq.LP_1},  \eqref{eq.LP_2} and \eqref{eq.LP_min} with high probability, and our estimator returns a perfect answer if this solution is chosen among all feasible solutions. In this subsection, we show that any feasible solution $\mu_j$ is in fact close to the true measure $\mu_{P,j}$ in terms of the Wasserstein distance. By the dual representation of Wasserstein distance (cf. Lemma \ref{lemma.dual}), we can fix any 1-Lipschitz function $f$ on $\mathbb{R}$ and prove the following lemma:
\begin{lemma}\label{lemma.lmm}
	For any $j=1,\cdots,M$, let $\mu_j$ be any feasible solution to \eqref{eq.LP_1}, \eqref{eq.LP_2} or \eqref{eq.LP_min}, the true measure $\mu_{P,j}$ be given in \eqref{eq.mu_P_j}, and $c_2$ be small enough such that $c_2(6\ln 2+\ln(5c_3/c_1))<\epsilon$ with $c_2\ln n\ge 1$. Assuming all good events happen and conditioning on the first half samples, for any $1$-Lipschitz function $f$ on $\mathbb{R}$ with $f(0)=0$,
	\begin{align*}
	\left|\int_{\tilde{I}_j} f(x)\mu_j(dx) - \int_{\tilde{I}_j} f(x)\mu_{P,j}(dx)\right| \le C_0\left(\int_{\mathbb{R}} \sqrt{\frac{x}{n\ln n}}(\mu_j(dx)+\mu_{P,j}(dx)) + \frac{j\sqrt{S_j}}{n^{1-\epsilon}}\right)
	\end{align*}
	where $C_0$ is a constant independent of $n,S$ and $f$.
\end{lemma}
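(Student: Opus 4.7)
The plan is to reduce the Wasserstein-type difference to bounded polynomial moments via approximation theory, then invoke the moment-matching constraints from the LPs together with Lemma \ref{lemma.feasibility}. Fix $j$ and a $1$-Lipschitz $f$ with $f(0)=0$. First apply Jackson's theorem to pick a polynomial $P_j$ of degree $K=c_2\ln n$ with $\|f-P_j\|_{\infty,\tilde{I}_j}\lesssim w_j/K$, where $w_j\asymp c_1 j\ln n/n$ is the width of $\tilde{I}_j$; for $j=1$ additionally impose $P_1(0)=0$, which is possible since $f(0)=0$. Then split
\[
\int f\,d(\mu_j-\mu_{P,j}) \;=\; \int(f-P_j)\,d(\mu_j-\mu_{P,j}) \;+\; \int P_j\,d(\mu_j-\mu_{P,j}),
\]
and estimate the two pieces separately.

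For the polynomial piece, expand $P_j(x)=\sum_{k=0}^K a_k(x-x_j)^k$. The constant term $a_0$ drops out of the difference: for $j\ge 2$ because \eqref{eq.LP_1} forces $\mu_j(\tilde{I}_j)=S_j=\mu_{P,j}(\tilde{I}_j)$ (under the good events), and for $j=1$ because $x_1=0$ together with $P_1(0)=0$ gives $a_0=0$. Applying Lemma \ref{lemma.feasibility} to $\mu_{P,j}$, the feasibility of $\mu_j$ in \eqref{eq.LP_2}/\eqref{eq.LP_min} (using $\mu_1(\tilde{I}_1)\le S_1$ from the minimization), and the triangle inequality give
\[
\left|\int(x-x_j)^k\,d(\mu_j-\mu_{P,j})\right| \;\le\; 2\sqrt{S_j\ln n}\left(\frac{c_3 j\ln n}{n}\right)^{k},\qquad k=1,\dots,K.
\]
To bound $\sum_k |a_k|(c_3 j\ln n/n)^k$, rescale to $[-1,1]$ via $y=2(x-x_j)/w_j$ and expand $P_j=\sum_k \beta_k T_k(y)$ in the Chebyshev basis. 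Since $\|P_j\|_{\infty,\tilde{I}_j}\lesssim w_j$ (using $|f(x)-f(x_j)|\le w_j$ on $\tilde{I}_j$ and the Jackson error, with $f(x_j)=f(0)=0$ for $j=1$), orthogonality gives $|\beta_k|\lesssim w_j$. Converting back to monomials via $T_k(y)=\sum_l t_{k,l}y^l$ with $|t_{k,l}|\le 2^{k-1}$ picks up a factor $(2c_3 j\ln n/(nw_j))^k\asymp (2c_3/(5c_1))^k$, so
\[
\left|\int P_j\,d(\mu_j-\mu_{P,j})\right| \;\lesssim\; w_j\sqrt{S_j\ln n}\cdot(Cc_3/c_1)^K \;\lesssim\; \frac{j\sqrt{S_j}}{n^{1-\epsilon}},
\]
the final step using the hypothesis $c_2(6\ln 2+\ln(5c_3/c_1))<\epsilon$ to absorb the exponential blowup into $n^\epsilon$ (the factors of $2$ from $|T_k(R)|\le(2R)^k$ for $R\ge 1$ and from $|t_{k,l}|\le 2^{k-1}$ contribute the $6\ln 2$).

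For the approximation-error piece, bound $|\int(f-P_j)\,d(\mu_j-\mu_{P,j})|\le \|f-P_j\|_{\infty,\tilde{I}_j}\cdot(\mu_j+\mu_{P,j})(\tilde{I}_j)\lesssim 2 S_j w_j/K$. When $j\ge 2$, every $x\in\tilde{I}_j$ satisfies $x\asymp c_1 j^2\ln n/n$, so $w_j/K\asymp\sqrt{x/(n\ln n)}$ uniformly on $\tilde{I}_j$ and the bound rewrites as $\lesssim\int\sqrt{x/(n\ln n)}\,d(\mu_j+\mu_{P,j})$, exactly the first term in the claim. For $j=1$, where $\tilde{I}_1\ni 0$ and the naive estimate is too weak, sharpen $|f(x)-P_1(x)|\le\min(w_1/K,\,2x)$ using $f(0)=P_1(0)=0$ together with a Bernstein-type $P_1$ satisfying $\|P_1'\|_\infty=O(1)$, then use the moment-matched identity $\int x\,d\mu_1=\int x\,d\mu_{P,1}+O(\sqrt{S_1}(\ln n)^{3/2}/n)$ to absorb the residual into the two stated right-hand terms.

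The main obstacle is the exponential blowup in polynomial-coefficient conversion: because the moment-matching scale $c_3 j\ln n/n$ overshoots the interval half-width $w_j/2\asymp(5c_1/2)(j\ln n/n)$ by a factor $2c_3/(5c_1)>12$ under $c_3>30c_1$, the Chebyshev-to-monomial expansion inflates by $(Cc_3/c_1)^K$, and the entire purpose of the constant inequalities $c_1>2c_2$, $c_3>30c_1$, and $c_2(6\ln 2+\ln(5c_3/c_1))<\epsilon$ is to keep $K=c_2\ln n$ small enough that this inflation stays below $n^\epsilon$. A secondary subtlety is the boundary $j=1$, handled by the refined Jackson estimate respecting $P_1(0)=0$ and by using the minimization constraint \eqref{eq.LP_min} in place of the equality \eqref{eq.LP_1}.
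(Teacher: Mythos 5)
Your decomposition into a polynomial piece (bounded via moment matching plus coefficient estimates) and an approximation-error piece (bounded via Jackson's theorem) mirrors the paper's argument, and your coefficient-bound route via Chebyshev expansion is a legitimate alternative to the paper's Lemma~\ref{lem.polycoeff}. But there is a genuine gap in your handling of $j=1$.

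For $j=1$ you propose bounding the approximation error by $\min(w_1/K, 2x)$, with the $2x$ branch coming from $f(0)=P_1(0)=0$ and an $O(1)$ derivative bound on $P_1$. This bound is a factor $\sqrt{\ln n}$ too weak. To see this, take $x\asymp w_1/K \asymp 1/n$: your bound gives $\min(w_1/K, 2x)\asymp 1/n$, while the target pointwise error is $\sqrt{x/(n\ln n)}\asymp 1/(n\sqrt{\ln n})$. If $\mu_1$ concentrates mass near $x\asymp 1/n$, your estimate of $\int |f-P_1|\,d(\mu_1+\mu_{P,1})$ is $\asymp S_1/n$ while the target term $\int\sqrt{x/(n\ln n)}\,d(\mu_1+\mu_{P,1})$ is $\asymp S_1/(n\sqrt{\ln n})$. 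The residual $\Theta(S_1\sqrt{\ln n}/n - S_1/n)$ cannot be absorbed into $\sqrt{S_1}/n^{1-\epsilon}$ whenever $S_1\gg n^{2\epsilon}\ln n$, and the ``moment-matched identity'' $\int x\,d\mu_1\approx\int x\,d\mu_{P,1}$ does not repair this, because the deficit is in the pointwise approximation quality, not a first-moment mismatch. The paper instead invokes the \emph{pointwise} form of Jackson's theorem from \cite{devore1976degree}, which asserts existence of a degree-$K$ polynomial with $|f(x)-P(x)|\le C\sqrt{(b-a)(x-a)}/K$ on $[a,b]$. With $a=0$, $b-a = w_1\asymp\ln n/n$, and $K=c_2\ln n$ this is exactly $\lesssim\sqrt{x/(n\ln n)}$, and it also forces $P(0)=0$ for free (take $x=0$), which is the correct substitute for your ad hoc Bernstein/derivative argument. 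You should replace the $\min(w_1/K, 2x)$ estimate with this sharp pointwise bound; the rest of your argument then goes through.

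A smaller point: you write $\|P_j\|_{\infty,\tilde I_j}\lesssim w_j$, but for $j\ge 2$ the polynomial $P_j$ approximates $f$, which can be $O(1)$ in magnitude on $\tilde I_j$. What you actually want, and what the paper does, is to shift by the constant $f(x_j)$ and bound $\|P_j-f(x_j)\|_{\infty,\tilde I_j}\lesssim w_j$; this shift does not affect the coefficients $a_k$ for $k\ge 1$, which are the only ones you need. Your parenthetical shows you understand the ingredients, but as written the claim is off by a constant shift.
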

\begin{remark}
	The condition $f(0)=0$ is important for $j=1$: \eqref{eq.LP_min} cannot ensure that $\mu_1(\mathbb{R})=\mu_{P,1}(\mathbb{R})$ since it does not contain the total mass constraint in \eqref{eq.LP_1}.
\end{remark}

The remainder of this subsection is devoted to the proof of Lemma \ref{lemma.lmm}. For $j\ge 2$, by assumption both $\mu_{P,j}$ and $\mu_j$ are feasible solutions to \eqref{eq.LP_1} and \eqref{eq.LP_2}, i.e., they are supported on $\tilde{I}_j$ with the same total mass $S_j$, and by triangle inequality we have
\begin{align}\label{eq.moment_match}
\left|\int_{\tilde{I}_j} (x-x_j)^k\mu_j(dx) - \int_{\tilde{I}_j} (x-x_j)^k\mu_{P,j}(dx)\right| \le 2\sqrt{S_j\ln n}\cdot\left(\frac{c_3j\ln n}{n}\right)^k
\end{align}
for any $k=1,2,\cdots,K=c_2\ln n$. As a result, fixing any polynomial
\begin{align*}
P(x) = \sum_{k=0}^K a_k(x-x_j)^k
\end{align*}
on $\tilde{I}_j$, triangle inequality together with \eqref{eq.moment_match} gives
\begin{align}
&\left|\int_{\tilde{I}_j} f(x)(\mu_j(dx)-\mu_{P,j}(dx))\right| \nonumber\\
&\le \left|\int_{\tilde{I}_j} (f(x)-P(x))(\mu_j(dx)-\mu_{P,j}(dx))\right| + \left|\int_{\tilde{I}_j} P(x)(\mu_j(dx)-\mu_{P,j}(dx))\right| \nonumber\\
&\le \underbrace{\int_{\tilde{I}_j} |f(x)-P(x)|(\mu_j(dx) + \mu_{P,j}(dx))}_{\triangleq B_1} + \underbrace{\sum_{k=1}^K |a_k|\cdot 2\sqrt{S_j\ln n}\left(\frac{c_3j\ln n}{n}\right)^k}_{\triangleq B_2} \label{eq.decompose}.
\end{align}
It's straightforward to see that \eqref{eq.moment_match}, \eqref{eq.decompose} also hold for $j=1$, while in \eqref{eq.decompose} we need to add an additional assumption that the constant term of $P(x)$ is zero. 

The inequality \eqref{eq.decompose} holds for any polynomial $P(\cdot)$ of degree at most $K$, and both terms $B_1$ and $B_2$ depend on the choice of $P$. We shall choose $P$ to be the best approximating polynomial of $f(x)$ on $\tilde{I}_j$ in the uniform norm, i.e.,
\begin{align*}
P(x) \triangleq \arg\min_{Q\in \mathsf{Poly}_K} \max_{x\in \tilde{I}_j} |Q(x) - f(x)|.
\end{align*}
It is easy to see that this choice of $P$ will result in a small value of $B_1$, while we need the following well-known Jackson's inequality in approximation theory to upper bound $B_1$ quantitatively:
\begin{lemma}\cite{devore1976degree}
	Let $K>0$ be any integer, and $[a,b]\subset \mathbb{R}$ be any bounded interval. For any $1$-Lipschitz function $f$ on $[a,b]$, there exists a universal constant $C$ independent of $K,f$ such that there exists a polynomial $P(\cdot)$ of degree at most $K$ such that
	\begin{align}\label{eq.approx_pointwise}
	|f(x) - P(x)| \le \frac{C\sqrt{(b-a)(x-a)}}{K}, \qquad \forall x\in [a,b].
	\end{align}
	In particular, the following norm bound holds:
	\begin{align}\label{eq.approx_norm}
	\sup_{x\in [a,b]} |f(x)-P(x)| \le \frac{C(b-a)}{K}.
	\end{align}
\end{lemma}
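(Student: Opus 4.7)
The plan is to reduce to the canonical interval $[-1,1]$, establish a pointwise Jackson-type bound there via convolution with a Jackson kernel, and rescale back. The affine change of variable $u = \frac{2(x-a)}{b-a} - 1$ carries any $1$-Lipschitz $f$ on $[a,b]$ to a $\frac{b-a}{2}$-Lipschitz function $g(u) := f\!\left(a + \frac{(b-a)(1+u)}{2}\right)$ on $[-1,1]$, and polynomials of degree $\le K$ correspond under this map. So it suffices to produce $Q \in \mathsf{Poly}_K$ with $|g(u) - Q(u)| \lesssim \frac{b-a}{K}\sqrt{1-u^2}$ for all $u \in [-1,1]$: then $P(x) := Q\!\left(\frac{2(x-a)}{b-a}-1\right)$ satisfies \eqref{eq.approx_pointwise}, because the identity $1-u^2 = \frac{4(x-a)(b-x)}{(b-a)^2}$ gives $\sqrt{1-u^2} = \frac{2\sqrt{(x-a)(b-x)}}{b-a} \le \frac{2\sqrt{(b-a)(x-a)}}{b-a}$.

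For the pointwise bound on $[-1,1]$, I would build a linear polynomial operator by trigonometric convolution. Substitute $u = \cos\theta$ so that $g$ lifts to an even continuous $G(\theta) := g(\cos\theta)$ on the circle, and define $Q(\cos\theta) := \int_{-\pi}^{\pi} G(\theta - \phi)\, J_K(\phi)\, d\phi$, where $J_K$ is a nonnegative even trigonometric polynomial of degree $\le K$ with $\int J_K = 1$, $\int |\phi|\, J_K(\phi)\, d\phi \lesssim 1/K$, and $\int \phi^2 J_K(\phi)\, d\phi \lesssim 1/K^2$ (for example, a suitably normalized squared Fejer kernel). Since $g$ is $\tfrac{b-a}{2}$-Lipschitz, the inequality $|G(\theta)-G(\theta-\phi)| \le \tfrac{b-a}{2}|\cos\theta - \cos(\theta-\phi)|$ combined with the standard expansion $|\cos\theta - \cos(\theta-\phi)| \le |\phi|\sqrt{1-\cos^2\theta} + \phi^2$ yields, after integrating against $J_K$, the bound $|g(u)-Q(u)| \lesssim \frac{(b-a)\sqrt{1-u^2}}{K} + \frac{b-a}{K^2}$. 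Outside an $O(1/K^2)$ neighborhood of $u=\pm 1$ the additive $\frac{1}{K^2}$ term is already dominated by $\frac{\sqrt{1-u^2}}{K}$; in the remaining narrow boundary strip one absorbs the discrepancy using $|g(u)-g(\pm 1)| \lesssim (b-a)/K^2$ together with a Bernstein/Markov-type estimate to control $|Q(u)-Q(\pm 1)|$.

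Translating back through the affine map delivers $|f(x) - P(x)| \le \frac{C\sqrt{(x-a)(b-x)}}{K} \le \frac{C\sqrt{(b-a)(x-a)}}{K}$, establishing \eqref{eq.approx_pointwise}; the uniform bound \eqref{eq.approx_norm} then follows from $(x-a)(b-x) \le (b-a)^2/4$. The main obstacle is the construction of the kernel $J_K$: it must be a nonnegative trigonometric polynomial of degree $\le K$ and simultaneously have first moment $\lesssim 1/K$. This is classical (Jackson, with refinements by Dzyadyk and Timan), but it is the step where all of the analytic weight sits, along with the careful boundary handling near $|u|=1$ where the factor $\sqrt{1-u^2}$ degenerates.
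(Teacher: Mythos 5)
The paper does not prove this lemma; it quotes it from the approximation-theory literature (it is the Timan--Teljakovskii pointwise Jackson estimate, in the form given in DeVore's 1976 work), so your argument has to stand on its own. The bulk of what you write is the correct classical route and is sound: the affine reduction to $[-1,1]$, the lift $u=\cos\theta$, convolution with a degree-$K$ Jackson kernel (the fourth power of the sine ratio, which indeed has first moment $O(1/K)$ and second moment $O(1/K^2)$ --- the plain Fej\'er kernel would not, as its first absolute moment is of order $\ln K/K$), and the estimate $|\cos\theta-\cos(\theta-\phi)|\le |\phi|\sqrt{1-\cos^2\theta}+\phi^2$ correctly yield $|g(u)-Q(u)|\lesssim \frac{b-a}{K}\sqrt{1-u^2}+\frac{b-a}{K^2}$. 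That is Timan's theorem.

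The genuine gap is the removal of the additive $\frac{b-a}{K^2}$ term near $u=-1$. Note that \eqref{eq.approx_pointwise} at $x=a$ forces $P(a)=f(a)$ \emph{exactly} (and the paper crucially uses this: in the proof of Lemma \ref{lemma.lmm} for $j=1$ it deduces $P(0)=0$ from $f(0)=0$), and more generally it demands error $o(1/K^2)$ as $x\to a$. Your proposed fix --- bounding $|g(u)-g(-1)|$ and $|Q(u)-Q(-1)|$ in an $O(1/K^2)$ strip --- cannot work as stated: the middle term $|g(-1)-Q(-1)|$ in the resulting triangle inequality is a fixed constant of size up to $C(b-a)/K^2$ that does not vanish as $u\to -1$, whereas the target $\frac{\sqrt{(1+u)}}{K}$ does; moreover a Markov bound on $Q'$ involves $\lVert Q\rVert_\infty\approx\lVert g\rVert_\infty$ rather than the approximation error, so it gives nothing useful at this scale. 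To close the argument you must \emph{modify} the polynomial so that it interpolates $g$ at the endpoint while degrading the interior estimate only by a constant factor --- this is precisely the content of the Teljakovskii--Gopengauz refinement of Timan's theorem, and it requires a genuinely more careful construction (e.g.\ an endpoint-concentrated polynomial multiplier with controlled decay, or approximating a suitably corrected function), not an a posteriori estimate. As written, your proof establishes the weaker bound $\frac{C}{K}\bigl(\sqrt{(b-a)(x-a)}+\frac{b-a}{K}\bigr)$, which does not suffice for the paper's use of the lemma at $j=1$.
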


We use the pointwise bound \eqref{eq.approx_pointwise} and the norm bound \eqref{eq.approx_norm} to upper bound $B_1$ for the case $j=1$ and $j\ge 2$, respectively. If $j=1$, we have $\tilde{I}_j=[0,\frac{9c_1\ln n}{4n}]$, then \eqref{eq.approx_pointwise} with $x=0$ and $f(0)=0$ gives $P(0)=0$, thus \eqref{eq.decompose} holds for $P$. Moreover,
\begin{align}
B_1 &\le \frac{C}{K}\int_{\tilde{I}_1} \sqrt{\frac{9c_1x\ln n}{4n}} (\mu_1(dx)+\mu_{P,1}(dx)) \nonumber\\
&= \frac{C}{c_2}\sqrt{\frac{9c_1}{4}}\cdot \int_{\tilde{I}_1} \sqrt{\frac{x}{n\ln n}}(\mu_1(dx)+\mu_{P,1}(dx)). \label{eq.B_1_small}
\end{align}
If $j\ge 2$, recall that $\tilde{I}_j=[\frac{c_1(j-3/2)^2\ln n}{n},\frac{c_1(j+1/2)^2\ln n}{n}]$, the norm bound \eqref{eq.approx_norm} gives
\begin{align}
B_1 &\le \frac{C}{K}\left(\frac{c_1(j+1/2)^2\ln n}{n}-\frac{c_1(j-3/2)^2\ln n}{n}\right)\cdot \int_{\tilde{I}_j} (\mu_j(dx)+\mu_{P,j}(dx)) \nonumber\\
&\le \frac{C}{K}\left(\frac{c_1(j+1/2)^2\ln n}{n}-\frac{c_1(j-3/2)^2\ln n}{n}\right)\cdot \int_{\tilde{I}_j} \sqrt{\frac{x}{\frac{c_1(j-3/2)^2\ln n}{n}}}(\mu_j(dx)+\mu_{P,j}(dx)) \nonumber\\
&\le \frac{12C\sqrt{c_1}}{c_2} \cdot \int_{\tilde{I}_j} \sqrt{\frac{x}{n\ln n}}(\mu_j(dx)+\mu_{P,j}(dx)). \label{eq.B_1_large}
\end{align}
A combination of \eqref{eq.B_1_small} and \eqref{eq.B_1_large} gives that for any $j\in [M]$,
\begin{align}\label{eq.B_1_bound}
B_1 \le \frac{12C\sqrt{c_1}}{c_2} \cdot \int_{\mathbb{R}} \sqrt{\frac{x}{n\ln n}}(\mu_j(dx)+\mu_{P,j}(dx)).
\end{align}

To upper bound $B_2$, we need to obtain upper bounds on the coefficients $|a_k|$ of the best approximating polynomial. We invoke Lemma \ref{lem.polycoeff} here: the polynomial $P_0(x)=P(x)-f(x_j)$ defined on $\tilde{I}_j$ satisfies
\begin{align*}
|P_0(x)| &\le |P(x)-f(x)| + |f(x)-f(x_j)| \\
&\le \frac{C}{K}\left(\frac{c_1(j+1)^2\ln n}{n}-\frac{c_1(j-3/2)^2\ln n}{n}\right) + \left(\frac{c_1(j+1)^2\ln n}{n} - \frac{c_1j(j-1)\ln n}{n}\right) \\
&\le \left(1+\frac{C}{K}\right)\frac{5c_1j\ln n}{n}.
\end{align*}
Now applying Lemma \ref{lem.polycoeff} with $A=\left(1+\frac{C}{K}\right)\frac{5c_1j\ln n}{n}$, $[a,b]=[-\frac{c_1((2j-9/4)\vee 0)\ln n}{n},\frac{c_1(3j+1)\ln n}{n}]$, for any $k=1,2,\cdots,K$ we have
\begin{align*}
|a_k| &\le 2^{\frac{7K}{2}}\left(1+\frac{C}{K}\right)\frac{5c_1j\ln n}{n}\cdot \left(\frac{c_1j\ln n}{5n}\right)^{-k}(5^K+1) \\
&\le 25\left(1+\frac{C}{K}\right)2^{6K}\cdot \left(\frac{c_1j\ln n}{5n}\right)^{1-k}.
\end{align*}
Hence, the quantity $B_2$ can be upper bounded as
\begin{align}
|B_2| &\le \sum_{k=1}^K 25\left(1+\frac{C}{K}\right)2^{6K}\cdot \left(\frac{c_1j\ln n}{5n}\right)^{1-k}\cdot 2\sqrt{S_j\ln n}\left(\frac{c_3j\ln n}{n}\right)^k \nonumber\\
&\le 10c_1(1+C)n^{c_2(6\ln 2+\ln(5c_3/c_1))}(\ln n)^{\frac{3}{2}}\cdot \frac{j\sqrt{S_j}}{n}. \label{eq.B_2_bound}
\end{align}

Now a combination of \eqref{eq.B_1_bound} and \eqref{eq.B_2_bound} completes the proof of Lemma \ref{lemma.lmm}.

\subsection{Overall Performance}
In this section we are about to establish Theorem \ref{thm.achievability}. Note that finally the true measure $\mu_P$ and the measure $\hat{\mu}^*$ as the input of randomized discretization are given by (conditioning on no failures)
\begin{align*}
\mu_P &= \frac{1}{S}\sum_{i=1}^M \mu_{P,j}, \\
\hat{\mu}^* &= \frac{1}{S}\sum_{i=1}^M \mu_j + \left(1-\frac{\hat{\mu}(\reals)}{S}\right)\delta_0
\end{align*}
where $\delta_0(\cdot)$ is the Dirac delta point mass at zero. By Lemma \ref{lemma.wasserstein} and Lemma \ref{lemma.randomization}, the sorted $\ell_1$ risk of $\hat{P}$ satisfies
\begin{align*}
\bE_P \|\hat{P}-P^<\|_1 = S\cdot\bE_P W(\mu_P,\hat{\mu}^*).
\end{align*}
Using the dual representation of the Wasserstein distance (cf. Lemma \ref{lemma.dual}), we further have
\begin{align}
S\cdot\bE_P W(\mu_P,\hat{\mu}) &= S\cdot \bE_P \sup_{f: \|f\|_{\text{Lip}}\le 1} \int_{\mathbb{R}} f(x)(\mu_P(dx) - \hat{\mu}^*(dx))\nonumber \\
&\stepa = S\cdot \bE_P \sup_{f: \|f\|_{\text{Lip}}\le 1, f(0)=0} \int_{\mathbb{R}} f(x)(\mu_P(dx) - \hat{\mu}^*(dx))\nonumber \\
&= \bE_P \sup_{f: \|f\|_{\text{Lip}}\le 1, f(0)=0} \sum_{j=1}^M \int_{\mathbb{R}} f(x)(\mu_{P,j}(dx) - \mu_j(dx)) \label{eq.final_risk}
\end{align}
where (a) follows from $\mu_P(\reals)=\hat{\mu}^*(\reals)=1$.

Suppose that the condition of Lemma \ref{lemma.lmm} holds, then each summand admits the following ``bias--variance" decomposition:
\begin{align}\label{eq.bias_var}
\int_{\mathbb{R}} f(x)(\mu_{P,j}(dx) - \mu_j(dx)) \le C_0\left(\int_{\mathbb{R}} \sqrt{\frac{x}{n\ln n}}(\mu_j(dx)+\mu_{P,j}(dx)) +  \frac{j\sqrt{S_j}}{n^{1-\epsilon}}\right).
\end{align}
The first term in \eqref{eq.bias_var} corresponds to the ``bias", which is the remaining error even after the first $K$ moments are exactly matched. The second term in \eqref{eq.bias_var} corresponds to the ``variance", which is caused by the imperfect moment matching. Since $C_0$ is independent of $f$, we have
\begin{align}
&\sup_{f: \|f\|_{\text{Lip}}\le 1, f(0)=0} \sum_{j=1}^M \int_{\mathbb{R}} f(x)(\mu_{P,j}(dx) - \mu_j(dx))\nonumber\\
& \qquad\qquad\le C_0\left(\underbrace{\int_{\mathbb{R}} \sqrt{\frac{x}{n\ln n}}(\hat{\mu}(dx)+S\mu_{P}(dx))}_{\triangleq \text{``total bias" } B} + \underbrace{\sum_{j=1}^M \frac{j\sqrt{S_j}}{n^{1-\epsilon}}}_{\triangleq \text{``total variance" } V}\right).\label{eq.total_bias_var}
\end{align}

We first upper bound the total variance $V$. Using the fact that $\sum_{i=1}^S p_i=1$, and $p_i\ge \frac{c_1j^2\ln n}{16n}$ for any $i\in A_j, j\ge 2$, we have
\begin{align}\label{eq.total_mass}
1 = \sum_{i=1}^S p_i \ge \sum_{j=2}^M \sum_{i\in A_j} p_i \ge \frac{c_1\ln n}{16n} \sum_{j=2}^M \sum_{i\in A_j}j^2 = \frac{c_1\ln n}{16n} \sum_{j=2}^M j^2S_j.
\end{align}
Moreover, $S_1\le S$, thus by defining $J=\{j\in [M]: S_j\neq 0\}$, we have
\begin{align*}
\sum_{j=1}^M j\sqrt{S_j} &\le \sqrt{S} + \sum_{j=2}^M j\sqrt{S_j} \le \sqrt{S} + \sum_{j\in J} j\sqrt{S_j} = \sqrt{S}+ \left(|J|\cdot \sum_{j\in J}j^2S_j\right)^{\frac{1}{2}} \le \sqrt{S} + 4\sqrt{\frac{{n|J|}}{{c_1\ln n}}}. 
\end{align*}
Now we obtain upper bounds for $|J|$, i.e., the number of sub-intervals in the partition which contains any symbol in the first half samples. Trivially, $|J|\le S$, and \eqref{eq.total_mass} gives
\begin{align*}
1 \ge \frac{c_1\ln n}{16n} \sum_{j=2}^M j^2S_j \ge \frac{c_1\ln n}{16n} \sum_{j\in J-\{1\}} j^2 \ge \frac{c_1\ln n}{16n}\sum_{j=2}^{|J|}j^2 \ge \frac{c_1\ln n}{48n} (|J|-1)^3
\end{align*}
implying that $|J|\le \left(\frac{48n}{c_1\ln n}\right)^{\frac{1}{3}}+1$. As a result, we conclude that
\begin{align}\label{eq.var_sum_bound}
\sum_{j=1}^M j\sqrt{S_j} \le \sqrt{S} + 4\sqrt{\frac{n}{c_1\ln n}} \left(\sqrt{S} \wedge \sqrt{\left(\frac{48n}{c_1\ln n}\right)^{\frac{1}{3}}+1} \right)
\end{align}
and consequently
\begin{align}\label{eq.total_var}
V \lesssim n^{\epsilon}\left(\sqrt{\frac{S}{n}} \wedge n^{-\frac{1}{3}}\right).
\end{align}

Now we upper bound the total bias. By definition of $\mu_P$, we know that
\begin{align*}
\int_{\reals} x\cdot S\mu_P(dx) = \sum_{i=1}^S p_i = 1.
\end{align*}
Moreover, summing over $j=1,\cdots,M$ in \eqref{eq.moment_match} for $k=1$ gives
\begin{align*}
\left|\int_{\mathbb{R}} x(S\mu_P(dx) - \hat{\mu}(dx)) \right| \le \sum_{j=1}^M 2\sqrt{S_j\ln n} \cdot\frac{c_3j\ln n}{n} = 2c_3(\ln n)^{\frac{3}{2}}\cdot \sum_{j=1}^M \frac{j\sqrt{S_j}}{n}.
\end{align*}
Hence, by \eqref{eq.var_sum_bound} and the triangle inequality, we also have $\int_{\mathbb{R}} x\hat{\mu}(dx) = 1+o(1)$. As a result, by Cauchy--Schwartz the total bias can be upper bounded as
\begin{align}\label{eq.total_bias}
B \le \frac{1}{\sqrt{n\ln n}}\left(S\cdot\sqrt{\int_{\reals} x\mu_P(dx)} + \sqrt{S\cdot\int_{\reals} x\hat{\mu}(dx)}\right) \lesssim \sqrt{\frac{S}{n\ln n}}.
\end{align}

Finally, a combination of \eqref{eq.final_risk}, \eqref{eq.total_bias_var}, \eqref{eq.total_var}, \eqref{eq.total_bias}, Corollary \ref{cor.bad_event} and the fact that $\|\hat{P}-P^<\|_1\le 2$  gives that
\begin{align*}
\sup_{P\in \calM_S} \bE_P \|\hat{P}-P^<\|_1 &\lesssim \sqrt{\frac{S}{n\ln n}} + n^{\epsilon}\left(\sqrt{\frac{S}{n}}\wedge n^{-\frac{1}{3}}\right)  + 2\cdot(2Mn^{-4}+Sn^{-5}) \\
&\lesssim \sqrt{\frac{S}{n\ln n}} + n^{\epsilon}\cdot \left(\sqrt{\frac{S}{n}}\wedge n^{-\frac{1}{3}}\right)
\end{align*}
which completes the proof of Theorem \ref{thm.achievability}.

\section{Minimax Lower Bound}\label{sec.lower}
In this section we establish the following lower bound:
\begin{theorem}\label{thm.lower}
	For $n\gtrsim \frac{S}{\ln S}$, there exists a constant $c_0>0$ independent of $n,S$ such that
	\begin{align*}
	\inf_{\hat{P}}\sup_{P\in\calM_S} \bE_P\|\hat{P}-P^<\|_1 \ge c_0\left(\sqrt{\frac{S}{n\ln n}} + \left(\sqrt{\frac{S}{n}}\wedge n^{-\frac{1}{3}}\right)\right).
	\end{align*}
\end{theorem}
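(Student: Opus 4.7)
The plan is to establish the lower bound in the Poissonized model and transfer it to the Multinomial model via Lemma \ref{lemma.poissonization}. Throughout I would exploit the Wasserstein duality $\|P^< - Q^<\|_1 = S \cdot W_1(\mu_P, \mu_Q)$ of Lemma \ref{lemma.wasserstein} to convert sorted-$\ell_1$ questions into distinguishability questions between measures on $\mathbb{R}$. The three terms of the claimed bound will be established by two complementary constructions combined through Le Cam / Assouad.

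For the principal $\sqrt{S/(n\ln n)}$ term, I would use Le Cam's two-point method with a moment-matching prior. Let $\nu_0, \nu_1$ be two probability measures on the ``confusion window'' $[0, c_1\ln n/n]$ matching their first $K = c_2 \ln n$ moments and nearly attaining the extremal Wasserstein-1 distance under this constraint — i.e., the Chebyshev-type dual of the best $K$-degree polynomial approximation for 1-Lipschitz functions on this interval. Form priors $\pi_0, \pi_1$ on $\mathcal{M}_S$ by drawing the entries of $P$ from $\nu_b$ along the same $\{I_j\}$ partition that the upper-bound estimator uses, placing $S_j$ symbols at scale $\sim j^2\ln n/n$ so that the sorted separation $S \cdot W_1(\mu_{P_0}, \mu_{P_1})$ saturates the Cauchy--Schwartz bound $\sum_j j\sqrt{S_j} \le \sqrt{\sum_j j^2 S_j}\sqrt{\sum_j S_j}$ that drives the bias analysis in Section \ref{sec.analysis}. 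The Wu--Yang moment-matching lemma controls the per-coordinate total variation between the Poisson mixtures by $(ec_1\ln n/K)^K$; tensorization across the $S$ independent symbols keeps the joint TV $o(1)$ once $c_2$ is chosen large relative to $c_1$. One-dimensional empirical-Wasserstein concentration yields $W_1(\mu_{P_0}, \mu_{P_1}) = W_1(\nu_0, \nu_1)(1+o(1))$, and Le Cam's inequality closes the argument.

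For the variance-type term $\sqrt{S/n}\wedge n^{-\frac{1}{3}}$ I would use an Assouad-style argument. When $S \lesssim n^{1/3}$, construct a hypercube of perturbations of the uniform distribution $(1/S,\ldots,1/S)$, each bit flipping two entries by a chosen amount. Because a symmetric sign flip leaves the sorted distribution invariant, the perturbation magnitudes must vary across pairs to guarantee that distinct bit patterns induce distinct sorted distributions; Hellinger tensorization then restricts the per-coordinate perturbation to $\delta = \Theta(1/\sqrt{nS})$, and Assouad's lemma yields $\sqrt{S/n}$. In the intermediate regime $n^{1/3}\lesssim S \lesssim n^{1/3}\ln n$, the $n^{-1/3}$ bound follows by restricting the same construction to an effective support of size $\Theta(n^{1/3})$, or equivalently by a constant-Poisson-rate two-point version of the moment-matching argument.

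The main obstacle I anticipate is extracting the tight $\sqrt{S/(n\ln n)}$ rate in the large-$S$ regime $S \gg n/\ln n$. A naive iid moment-matched prior with mean $\theta = 1/S$ forces $\nu_b$ to concentrate near the origin, which collapses $W_1(\nu_0,\nu_1)$ to $O(\theta/K) = O(1/(SK))$ and delivers only an $\Omega(1/\ln n)$ lower bound — strictly weaker than $\sqrt{S/(n\ln n)}$ once $S > n/\ln n$. Resolving this forces the multi-scale construction indicated above: the prior must be spread across the $\{I_j\}$ intervals with the same profile $\{S_j\}$ the upper-bound estimator allocates to them, so that the moment-matched separations at each scale accumulate according to the same Cauchy--Schwartz bookkeeping that governs the bias in Section \ref{sec.analysis}. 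A secondary but standard technical issue is that the Poissonized prior must be conditioned/normalized onto $\mathcal{M}_S$ without disturbing the moment-matching and TV estimates, handled by the concentration of $\sum_i p_i$ around $1$ under the prior.
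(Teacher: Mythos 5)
Your overall plan --- Poissonize, exploit the Wasserstein duality of Lemma \ref{lemma.wasserstein}, split into a moment-matching ``bias'' bound and a hypercube ``variance'' bound --- matches the paper's decomposition, but the bias argument you propose diverges from the paper's in a way worth understanding, and it is at the bias step that your proposal gets itself into trouble. The paper does \emph{not} construct priors and directly lower-bound $W_1(\mu_{P_0},\mu_{P_1})$ as you sketch; it runs the Wasserstein duality in the opposite direction. By Lemmas \ref{lemma.wasserstein} and \ref{lemma.dual}, an estimator $\hat P$ with $\sup_P\mathbb{E}_P\|\hat P-P^<\|_1\ll\sqrt{S/(n\ln n)}$ would, via the plug-in $\sum_i f(\hat p_i)$, give a uniformly accurate estimator of \emph{every} Lipschitz symmetric functional $\sum_i f(p_i)$. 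The paper then contradicts the known $\Omega(\sqrt{S/(n\ln n)})$ lower bound for the particular 1-Lipschitz choice $f(x)=|x-1/S|-1/S$ established in \cite{jiao2017minimax}, which uses the generalized Le Cam method (Lemma \ref{lemma.tsybakov}) with moment-matched priors supported on the single Poisson confusion window $[\tfrac1S-\sqrt{c\ln n/(nS)},\,\tfrac1S+\sqrt{c\ln n/(nS)}]$ around $1/S$. This ``double-duality'' reduction is the paper's key move: it delegates the delicate moment-matching construction, including the large-$S$ boundary regime that worries you, to a result already proved for functional estimation, and it entirely avoids the ``empirical-Wasserstein concentration'' and prior-normalization steps your route would need to make rigorous.

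Two further issues with your bias argument. First, your anticipated obstacle conflates two scales: the half-width of the confusion window is the Poisson localization scale $\sqrt{p\ln n/n}\asymp\sqrt{\ln n/(nS)}$, not $O(1/S)$, so the ``forced to concentrate near the origin'' collapse you describe does not occur until $S$ approaches $n/\ln n$ from above, and even there the cited result in \cite{jiao2017minimax} handles the truncation. Second, the ``same Cauchy--Schwartz bookkeeping'' from Section \ref{sec.analysis} is an upper-bound inequality (it bounds $\sum_j j\sqrt{S_j}$ from above); it does not transfer mechanically to a lower-bound prior construction, and your multi-scale spreading across $\{I_j\}$ would have to be justified from scratch rather than inherited from the achievability argument.

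Your variance-term insight --- a mean-symmetric sign flip is invisible to sorting, so the perturbation loci must be graded --- is exactly correct and is what motivates the paper's construction, but your sketch stops short of the thing that makes it work. The paper places the probability masses on the quadratic grid $x_i=c(t+i)^2/(100n)$, $y_i=c(t+T+i)^2/(100n)$ with $T\asymp S\wedge n^{1/3}$ and perturbs with $\pm\lambda$ small enough that every $P_{\bm\epsilon}$ is sorted in the \emph{same} ascending order; on this subclass sorted-$\ell_1$ equals ordinary $\ell_1$, and distance-based Fano (Lemma \ref{lemma.fano}) gives $\Omega(T\lambda)\gtrsim\sqrt{T/n}$. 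Your ``intermediate regime $n^{1/3}\lesssim S\lesssim n^{1/3}\ln n$'' is an artifact: the term $\sqrt{S/n}\wedge n^{-1/3}$ falls out uniformly from capping $T$ at $(n/c)^{1/3}$, and no separate two-point moment-matching argument is invoked there.
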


Notice that a combination of Theorem \ref{thm.achievability} and Theorem \ref{thm.lower} completes the proof of Theorem \ref{thm.main}. As in the proof of achievability, we call the first term in Theorem \ref{thm.lower} as ``bias" and the second term as ``variance": the techniques used to lower bound these terms mimic those which have been widely used to lower bound the bias and the variance, respectively. The next two subsections are devoted to the proof of Theorem \ref{thm.lower}.

\subsection{Lower Bound on the ``Bias"}
To prove the $\Omega(\sqrt{\frac{S}{n\ln n}})$ lower bound, we use the following ``double duality" arguments:
\begin{enumerate}
	\item Use the dual representation of Wasserstein distance (cf. Lemma \ref{lemma.dual}) to transform into estimation of Lipschitz functionals;
	\item Use the duality between moment matching and best polynomial approximation (cf. Lemma \ref{lem.measure}) to construct two measures used in the generalized Le Cam's method (cf. Lemma \ref{lemma.tsybakov}).
\end{enumerate}
Note that both these dualities are also used in the proof of the achievability part (cf. Theorem \ref{thm.achievability}), our arguments for the achievability and lower bound are in fact dual to each other.

We first make use of the first duality. Assume by contradiction that there exists an estimator $\hat{P}$ such that $\sup_{P\in\calM_S} \bE_P\|\hat{P}-P^<\|_1\ll \sqrt{\frac{S}{n\ln n}}$, then for any $1$-Lipschitz function $f(\cdot)$ on $\mathbb{R}$ and the symmetric functional $F(P)$ of the form
\begin{align*}
F(P) \triangleq \sum_{i=1}^S f(p_i),
\end{align*}
a combination of Lemma \ref{lemma.wasserstein} and \ref{lemma.dual} implies that for the estimator $F(\hat{P})=\sum_{i=1}^S f(\hat{p}_i)$, we have
\begin{align*}
\bE_P|F(\hat{P}) - F(P)| \le S\cdot \bE_PW(\mu_P,\mu_{\hat{P}}) = \bE_P\|\hat{P}-P^<\|_1 \ll \sqrt{\frac{S}{n\ln n}}.
\end{align*}
Hence, the existence of such an estimator $\hat{P}$ implies that, for any $1$-Lipschitz function $f(\cdot)$ and the corresponding symmetric functional $F(\cdot)$, we have
\begin{align}\label{eq.functional_estimation}
\inf_{\hat{F}} \sup_{P\in\calM_S} \bE_P |\hat{F}-F(P)| \ll \sqrt{\frac{S}{n\ln n}}.
\end{align}
In other words, the estimation of Lipschitz functionals are \emph{easier} than the estimation of the underlying distribution up to permutation. As a result, if we could prove that \eqref{eq.functional_estimation} breaks down for some Lipschitz functional $F(P)$, we would arrive at the desired contradiction.

Next we step into the second duality, which requires the following generalized Le Cam's method (also known as the method of two fuzzy hypotheses \cite{Tsybakov2008}). The application of this method has appeared in several works in functional estimation \cite{Lepski--Nemirovski--Spokoiny1999estimation,Cai--Low2011,Jiao--Venkat--Han--Weissman2015minimax,wu2016minimax,Han--Jiao--Weissman2016minimaxdivergence,jiao2017minimax,Han--Jiao--Mukherjee--Weissman2017adaptive,han-jiao-weissman-wu2017minimax} to deal with the bias, which motivates us to call the first term in Theorem \ref{thm.lower} as the ``bias". Given a collection of distributions $\{P_\theta: \theta\in\Theta'\}$, suppose the observation ${\bf Z}$ is distributed as $P_\theta$ with $\theta \in \Theta\subset \Theta'$. 
%Write $F_i$ for the marginal distribution of $\mathbf{Z}$ when the prior is $\sigma_i$ for $i = 0,1$. 
%For any function $g$ we shall write $\bE_{F_i} g(\mathbf{Z})$ for the expectation of $g(\mathbf{Z})$ with respect to the marginal distribution of $\mathbf{Z}$ when the prior on $\theta$ is $\sigma_i$. We shall write $\bE_\theta g(\mathbf{Z})$ for the expectation of $g(\mathbf{Z})$ under $P_\theta$. 
Let $\hat{T} = \hat{T}({\bf Z})$ be an arbitrary estimator of a function $T(\theta)$ based on $\bf Z$. Denote the total variation distance between two probability measures $P,Q$ by 
\begin{equation*}
V(P,Q) \triangleq \sup_{A\in \mathcal{A}} | P(A) - Q(A) | = \frac{1}{2} \int |p-q| d\nu,
\end{equation*}
where $p = \frac{dP}{d\nu}, q = \frac{dQ}{d\nu}$, and $\nu$ is a dominating measure so that $P \ll \nu, Q \ll \nu$.
The following general minimax lower bound follows from the same proof as \cite[Theorem 2.15]{Tsybakov2008}:
\begin{lemma}\label{lemma.tsybakov}
	Let $\sigma_0$ and $\sigma_1$ be two prior distributions on $\Theta'$.
	Suppose there exist $\zeta\in \mathbb{R}, s>0, 0\leq \beta_0,\beta_1 <1$ such that
	\begin{align*}
	\sigma_0(\theta \in \Theta: T(\theta) \leq \zeta -s) & \geq 1-\beta_0, \\
	\sigma_1(\theta \in \Theta: T(\theta) \geq \zeta + s) & \geq 1-\beta_1.
	\end{align*}
	%If $V(F_1,F_0) \leq \eta<1$, then
	Then
	\begin{equation*}
	\inf_{\hat{T}} \sup_{\theta \in \Theta} \bP_\theta\left( |\hat{T} - T(\theta)| \geq s \right) \geq \frac{1-V(F_1,F_0)  - \beta_0 - \beta_1}{2},
	\end{equation*}
	where 
	$F_i=\int P_\theta \sigma_i(d \theta)$ is the marginal distribution of $\mathbf{Z}$ under the prior $\sigma_i$ for $i = 0,1$, respectively.
\end{lemma}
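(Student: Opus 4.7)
The plan is to reduce estimation to binary hypothesis testing via a simple threshold test. For any estimator $\hat{T}$, introduce the test $\psi \triangleq \mathbbm{1}(\hat{T}({\bf Z}) \geq \zeta)$. The key geometric observation is: if $T(\theta) \le \zeta - s$ and $\psi = 1$, then $\hat{T} \ge \zeta \ge T(\theta) + s$, so $|\hat{T} - T(\theta)| \ge s$; symmetrically, if $T(\theta) \ge \zeta + s$ and $\psi = 0$, then $\hat{T} < \zeta \le T(\theta) - s$, and again $|\hat{T} - T(\theta)| \ge s$. Thus, on the ``separated'' sets $\Theta_0 \triangleq \{\theta \in \Theta: T(\theta) \le \zeta-s\}$ and $\Theta_1 \triangleq \{\theta \in \Theta: T(\theta) \ge \zeta+s\}$, a testing error under the respective prior forces an estimation error of at least $s$.

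Next I would write $p_e(\theta) \triangleq \bP_\theta(|\hat{T} - T(\theta)| \ge s)$ and lower bound the worst-case risk by the average of the two Bayes risks, $\sup_{\theta\in\Theta} p_e(\theta) \ge \frac{1}{2}\int p_e\, d\sigma_0 + \frac{1}{2}\int p_e\, d\sigma_1$. By the observation above, $\int p_e\, d\sigma_0 \ge \int_{\Theta_0} \bP_\theta(\psi = 1)\, \sigma_0(d\theta)$, and since the $\sigma_0$-mass of the complement of $\Theta_0$ in $\Theta'$ is at most $\beta_0$ (by assumption plus the fact that $\sigma_0$ may place mass on $\Theta'\setminus\Theta$, all of which is discarded), this expression is at least $F_0(\psi = 1) - \beta_0$, where $F_0 = \int P_\theta\, \sigma_0(d\theta)$ is the marginal mixture on ${\bf Z}$. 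A symmetric estimate yields $\int p_e\, d\sigma_1 \ge F_1(\psi = 0) - \beta_1$. Summing the two bounds and invoking the standard testing lower bound $F_0(\psi=1) + F_1(\psi=0) \ge 1 - V(F_0,F_1)$, which is immediate from the variational characterization of total variation applied to $A = \{\psi=1\}$, gives $2\sup_{\theta\in\Theta} p_e(\theta) \ge 1 - V(F_0,F_1) - \beta_0 - \beta_1$, which is the asserted inequality (after noting that Markov's inequality turns the probability bound into one over the infimum of estimators as well, since $\hat{T}$ was arbitrary).

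There is no genuinely hard step; this is the standard two-fuzzy-hypothesis reduction. The only subtleties requiring care are that the event $\{|\hat{T}-T(\theta)| \ge s\}$ genuinely depends on $\theta$, so one must pass through $\sigma_i$-expectations via Fubini rather than naively writing an ``$F_i$-probability'' of this event; that the $\sigma_i$-mass outside $\Theta_i$ (including possibly outside $\Theta$ entirely) must be absorbed into $\beta_i$ at exactly the right moment rather than silently dropped; and that the threshold $\zeta$ is fixed in advance, so the test $\psi$ does not depend on $\theta$, which is what lets us apply the $V(F_0,F_1)$ testing bound uniformly.
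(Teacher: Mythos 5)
Your reduction to a threshold test $\psi=\mathbbm{1}(\hat{T}\ge\zeta)$ and the subsequent Bayes-risk averaging is exactly the argument behind Theorem~2.15 of Tsybakov, which the paper cites without reproducing, so the proposal is correct and takes the same route. The one imprecision is your first display, $\sup_{\theta\in\Theta} p_e(\theta)\ge\tfrac12\int p_e\,d\sigma_0+\tfrac12\int p_e\,d\sigma_1$, which is not literally valid because the priors $\sigma_i$ live on $\Theta'\supset\Theta$; the correct chain starts from $\sup_{\theta\in\Theta}p_e(\theta)\ge\int_{\Theta_i}p_e\,d\sigma_i$ (using $\Theta_i\subset\Theta$ and $\sigma_i(\Theta_i)\le 1$), which is what your next step and your closing remark about absorbing the mass outside $\Theta_i$ into $\beta_i$ already use, and the Markov parenthetical at the end is unnecessary since the bound holds pointwise for every $\hat{T}$ and hence survives the infimum directly.
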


In our application, we will set $\theta=P$, $T(\theta)=F(P)$, and $\sigma_i=\nu_i^{\otimes S}$ for $i=0,1$, where $\nu_0,\nu_1$ are priors on $[\frac{1}{S}-\sqrt{\frac{c\ln n}{nS}}, \frac{1}{S}+\sqrt{\frac{c\ln n}{nS}}]$ with some constant $c>0$. The priors $\nu_0,\nu_1$ are chosen to be the solutions of the optimization program \eqref{eq:Rstar}, whose optimal objective value is the best polynomial approximation error as shown in the following lemma.

\begin{lemma}\label{lem.measure}
	Given a compact interval $I=[a,b]$ with $a>0$, an integer $K>0$ and a continuous function $f$ on $I$, let 
	\[
	E_{K}(f;I) \triangleq \inf_{\{a_i\}} \sup_{x\in I} \left| \sum_{i=0}^K a_i x^i
	-f(x)\right| 
	\]
	denote the best uniform approximation error of $f$ by polynomials spanned by $\{1,x,\cdots,x^K\}$.
	Then
	\begin{equation}
	\begin{aligned}
	2 E_{K}(f;I) = \max & ~ \int f(t) \nu_1(dt) - \int f(t) \nu_0(dt)   \\
	\text{\rm s.t.}     & ~ \int t^{l} \nu_1(dt) = \int t^{l} \nu_0(dt), \quad l=0,\cdots,K
	\end{aligned}
	\label{eq:Rstar}
	\end{equation}
	where the maximum is taken over pairs of probability measures $\nu_0$ and $\nu_1$ supported on $I$.
\end{lemma}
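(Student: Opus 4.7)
The approach I would take is to view the identity as strong duality between two infinite-dimensional linear programs: the primal computes $E_K(f;I)$, and the dual is the moment-constrained measure optimization on the right-hand side of \eqref{eq:Rstar}. For the weak duality direction, I would fix any pair of probability measures $\nu_0, \nu_1$ on $I$ satisfying the moment-matching constraints and any polynomial $P \in \mathsf{Poly}_K$. The orthogonality $\int P \, d(\nu_1 - \nu_0) = 0$ together with the total variation bound $|\nu_1 - \nu_0|(I) \leq \nu_0(I) + \nu_1(I) = 2$ yields
\begin{equation*}
\int f \, d\nu_1 - \int f \, d\nu_0 = \int (f - P) \, d(\nu_1 - \nu_0) \leq 2 \|f - P\|_{\infty, I}.
\end{equation*}
Taking the infimum over $P \in \mathsf{Poly}_K$ shows that the maximum in \eqref{eq:Rstar} is at most $2 E_K(f;I)$.

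For the reverse inequality, I would construct extremal measures supported on the equioscillation points of the best approximating polynomial. Let $P^\ast \in \mathsf{Poly}_K$ be optimal and $g := f - P^\ast$; by the Chebyshev equioscillation theorem there exist points $a \leq x_1 < x_2 < \cdots < x_m \leq b$ with $m \geq K+2$ on which $g(x_i) = \varepsilon_i E_K(f;I)$ with alternating signs $\varepsilon_i \in \{\pm 1\}$. The plan is to find weights $w_i > 0$ such that the signed measure $\mu := \sum_i \varepsilon_i w_i \delta_{x_i}$ annihilates $\mathsf{Poly}_K$ (equivalently, $\int t^\ell \, d\mu = 0$ for $\ell = 0, 1, \ldots, K$) and is normalized so that $\sum_{i : \varepsilon_i = +1} w_i = \sum_{i : \varepsilon_i = -1} w_i = 1$. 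Letting $\nu_1$ and $\nu_0$ be the positive and negative parts of $\mu$, the identity $\int P^\ast \, d\mu = 0$ combined with the equioscillation values gives $\int f \, d\nu_1 - \int f \, d\nu_0 = \int g \, d\mu = E_K(f;I) \cdot \sum_i w_i = 2 E_K(f;I)$, as desired.

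The main obstacle I foresee is establishing the existence of positive weights with the prescribed sign pattern. This relies on the fact that $\{1, t, \ldots, t^K\}$ forms a Chebyshev system on $I$: the $K+1$ moment-vanishing conditions on an atomic measure supported on $m \geq K+2$ points admit a nontrivial solution, and the classical result that a minimal annihilator of a Chebyshev system alternates in sign in the natural order guarantees that the signs of this solution match those of $g$ at the equioscillation points, forced by the extremality of $P^\ast$. An alternative route, avoiding this explicit construction, is to invoke a general semi-infinite LP strong duality theorem via Hahn--Banach separation: the primal is finite-valued, the feasible set of signed measures supported on $I$ is nonempty, and the cone of moment-constrained differences $\nu_1 - \nu_0$ is weak-$\ast$ closed, so the primal and dual optima coincide. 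Either approach then completes the proof of Lemma \ref{lem.measure}.
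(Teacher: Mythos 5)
The paper states Lemma~\ref{lem.measure} without giving a proof; it is invoked as a classical duality between best uniform polynomial approximation and moment matching (similar statements appear in the cited functional-estimation literature). So there is no in-paper proof to compare against, and I am assessing your argument on its own terms.

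Your two-sided argument is correct and is essentially the textbook proof. Weak duality is airtight: the moment constraints force $\int P\,d(\nu_1-\nu_0)=0$ for every $P\in\mathsf{Poly}_K$, and $|\nu_1-\nu_0|(I)\le 2$, yielding $\int f\,d\nu_1-\int f\,d\nu_0\le 2\|f-P\|_{\infty,I}$. For the reverse inequality, the step you flag as the ``main obstacle'' --- that the annihilating atomic measure on $K+2$ equioscillation points has the same sign pattern as $g=f-P^\ast$ --- does hold, and is worth making fully explicit with a Vandermonde cofactor argument: restricting to exactly $K+2$ equioscillation points $x_1<\cdots<x_{K+2}$, the homogeneous system $\sum_i c_i x_i^{\ell}=0$, $\ell=0,\dots,K$, has a one-dimensional kernel, and by cofactor expansion $c_i\propto(-1)^i\det V_i$, where $V_i$ is the $(K+1)\times(K+1)$ Vandermonde matrix on the remaining points; each $\det V_i>0$, so $c_i$ strictly alternates in sign. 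Since $g$ also alternates in sign at these points, the overall scalar of the kernel vector can be chosen so that $\mathrm{sgn}(c_i)=\mathrm{sgn}(g(x_i))$. The $\ell=0$ constraint forces $\sum_i c_i=0$, so the positive and negative parts already have equal mass, and rescaling to make each equal to one produces probability measures with $\int f\,d\nu_1-\int f\,d\nu_0=\sum_i c_i g(x_i)=E_K(f;I)\sum_i|c_i|=2E_K(f;I)$. Your alternative route via Hahn--Banach / semi-infinite LP duality also works and sidesteps the explicit extremal construction, at the cost of verifying the no-duality-gap hypotheses.
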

Lemma \ref{lem.measure} establishes the duality between moment matching and best polynomial approximation, where moment matching helps to obtain a small total variation distance $V(F_1,F_0)$ in Lemma \ref{lemma.tsybakov}, and best polynomial approximation error gives the value $s$ in Lemma \ref{lemma.tsybakov}. Moreover, here moment matching is also done locally, for $\mathsf{supp}(\nu_i)=[\frac{1}{S}-\sqrt{\frac{c\ln n}{nS}}, \frac{1}{S}+\sqrt{\frac{c\ln n}{nS}}], i=0,1$ takes the form of a local interval as $\{I_j\}_{j=1}^M$. 

Next we specify the choice of the 1-Lipschitz function: $f(x)=|x-\frac{1}{S}|-\frac{1}{S}$, which was studied in \cite{jiao2017minimax}. In particular, \cite{jiao2017minimax} shows that with these priors and $K\asymp \ln n$ in Lemma \ref{lem.measure}, we have $s\gtrsim \sqrt{\frac{S}{n\ln n}}$ and $\beta_0, \beta_1, V(F_1,F_0)\overset{n\to\infty}{\to} 0$ in Lemma \ref{lemma.tsybakov} (with properly chosen $\Theta$ and $\zeta$). Hence, by Markov's inequality,
\begin{align*}
\inf_{\hat{F}} \sup_{P\in\calM_S} \bE_P |\hat{F}- F(P)| \ge s\cdot \inf_{\hat{F}} \sup_{P\in\calM_S} \bP_P \left(|\hat{F}- F(P)| \ge s\right) \ge \frac{s}{2} \gtrsim \sqrt{\frac{S}{n\ln n}}
\end{align*}
which is a desired contradiction to \eqref{eq.functional_estimation}!

\subsection{Lower Bound on the ``Variance"}
To establish the second lower bound, we will essentially reduce the sorted distribution estimation problem to the traditional distribution estimation where labels are required. Specifically, we consider the scenario where we have known a priori that the probability vector is sorted, in which case an accurate estimator for $P^<$ can be easily transformed into an accurate estimator for $P$. Then in this scenario, we recover the traditional $\Omega(\sqrt{\frac{S}{n}})$ lower bound for estimating $P$. The reason why this lower bound does not hold for large $S$ is that when $S$ exceeds some threshold, the prior knowledge that the probability vector is sorted starts to make adjacent entries become informative on the inference of the entry in the middle.

For some constant $c>0$ large enough, define
\begin{align*}
S' \triangleq S \wedge \left(\frac{n}{c}\right)^{\frac{1}{3}}
\end{align*}
as the new support size, and without loss of generality we assume that $S'=2T+1$ for some integer $T$. Fixing some $\lambda>0$ to be determined later, we associate a probability vector $P_{\bm{\epsilon}}\in\calM_S$ to any binary vector $\bm{\epsilon}=(\epsilon_1,\epsilon_2,\cdots,\epsilon_T)\in\{\pm 1\}^T$ as follows:
\begin{align*}
P_{\bm{\epsilon}} &\triangleq ( 0, 0, \cdots, 0, x_1+\lambda\epsilon_1,\cdots,x_T+\lambda\epsilon_T,y_1-\lambda\epsilon_1,\cdots,y_T-\lambda\epsilon_T, 1-\sum_{i=1}^T (x_i+y_i))\\
&\text{with} \qquad x_i\triangleq \frac{c(t+i)^2}{100n}, \qquad y_i\triangleq \frac{c(t+T+i)^2}{100n}, \qquad \text{and} \qquad t\triangleq\left(\frac{n}{cT}\right)^{\frac{1}{2}}.
\end{align*}
Hence, as long as
\begin{align}\label{eq:condition_lambda}
\lambda \in \left(0, \frac{ct}{100n}\right),
\end{align}
it is easy to check that $P_{\bm{\epsilon}}$ is sorted in an ascending order for any $\bm{\epsilon}\in \{\pm 1\}^T$. As a result, restricting to the subclass $P\in \calP\triangleq\{P_{\bm{\epsilon}}: \bm{\epsilon}\in \{\pm 1\}^T\}\subset \calM_S$, estimating $P^<$ is equivalent to estimating $P$. In other words,
\begin{align}\label{eq.reduction}
\inf_{\hat{P}}\sup_{P\in \calM_S} \bE_P\|\hat{P}-P^<\|_1 \ge \inf_{\hat{P}}\sup_{P\in \calP} \bE_P\|\hat{P}-P\|_1.
\end{align}

Next we lower bound the RHS of \eqref{eq.reduction}. Consider a uniform prior on $\bm{\epsilon}\in \{\pm 1\}^T$, the fact that the Bayes risk under any prior is always a lower bound for the minimax risk gives that
\begin{align*}
\inf_{\hat{P}}\sup_{P\in \calP} \bE_P\|\hat{P}-P\|_1 \ge \inf_{\hat{P}}\bE_{\bm{\epsilon}} \bE_{P_{\bm{\epsilon}}}\|\hat{P}-P_{\bm{\epsilon}}\|_1 \ge \frac{T\lambda}{10}\cdot \inf_{\hat{P}} \bP\left(\|\hat{P}-P_{\bm{\epsilon}}\|_1\ge \frac{T\lambda}{10}\right).
\end{align*}
Defining the test function $\hat{\bm{\epsilon}}\triangleq\arg\min_{\bm{\epsilon}\in \{\pm 1\}^T} \|\hat{P}-P_{\bm{\epsilon} }\|_1$, by triangle inequality it is straightforward to see that the event $d_{\text{H}}(\hat{\bm{\epsilon}},\bm{\epsilon})\ge \frac{T}{5}$ implies $\|\hat{P}-P_{\bm{\epsilon}}\|_1\ge \frac{T\lambda}{10}$, where $d_{\text{H}}(\cdot,\cdot)$ is the Hamming metric $d_{\text{H}}(x,y)=\sum_{i=1}^T \mathbbm{1}(x_i\neq y_i)$. Consequently, we further have
\begin{align}\label{eq.eps_hat}
\inf_{\hat{P}}\sup_{P\in \calP} \bE_P\|\hat{P}-P\|_1 \ge \frac{T\lambda}{10}\cdot \inf_{\hat{\bm{\epsilon}}}\bP\left(d_{\text{H}}(\hat{\bm{\epsilon}},\bm{\epsilon})\ge \frac{T}{5}\right).
\end{align}
To lower bound the RHS of \eqref{eq.eps_hat}, we introduce the distance-based Fano's inequality as follows:
\begin{lemma}\cite[Corollary 1]{duchi2013distance}\label{lemma.fano}
	Let random variables $V$ and $\hat{V}$ take value in $\calV$, $V$ be uniform on some finite $\calV$, and $V-X-\hat{V}$ form a Markov chain. Let $d$ be any metric on $\calV$, and for $t>0$, define
	\begin{align*}
	N_{\max}(t) \triangleq \max_{v\in \calV} |v'\in V: d(v,v')\le t|, \qquad N_{\min}(t) \triangleq \min_{v\in \calV} |v'\in V: d(v,v')\le t|.
	\end{align*}
	If $N_{\max}(t)+N_{\min}(t)<|\calV|$, the following inequality holds:
	\begin{align*}
	\bP(d(V,\hat{V})>t) \ge 1 - \frac{I(V;X)+\ln 2}{\ln\frac{|\calV|}{N_{\max}(t)}}
	\end{align*}
	where $I(V;X)\triangleq \bE_{P_{V,X}}\left[\ln \frac{dP_{V,X}}{dP_V\times dP_X}\right]$ denotes the mutual information between $V$ and $X$.
\end{lemma}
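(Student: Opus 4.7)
The plan is to prove the distance-based Fano inequality by the classical device of introducing a binary indicator for the ``success'' event $\{d(V,\hat V)\le t\}$ and reducing to ordinary Fano via the chain rule for entropy. Concretely, since $V-X-\hat{V}$ is a Markov chain, the data-processing inequality gives $I(V;\hat{V}) \le I(V;X)$, and because $V$ is uniform on $\mathcal{V}$,
\[
H(V\mid \hat{V}) = H(V) - I(V;\hat V) \ge \ln|\mathcal{V}| - I(V;X).
\]
The goal is therefore to produce a matching upper bound on $H(V\mid \hat{V})$ in terms of $N_{\max}(t)$, which will pin down $\mathbb{P}(d(V,\hat{V})\le t)$.

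Introduce the indicator $E \triangleq \mathbbm{1}\{d(V,\hat{V})\le t\}$ and apply the chain rule:
\[
H(V\mid \hat{V}) \le H(V,E\mid \hat{V}) = H(E\mid \hat{V}) + H(V\mid E,\hat{V}) \le \ln 2 + H(V\mid E,\hat{V}).
\]
Now split the last term on the value of $E$. Conditional on $E=1$ and $\hat{V}=\hat v$, the variable $V$ lies in the ball $\{v:d(v,\hat v)\le t\}$, whose cardinality is at most $N_{\max}(t)$, so $H(V\mid E=1,\hat{V})\le \ln N_{\max}(t)$. Conditional on $E=0$, $V$ lies outside this ball; since every such ball has at least $N_{\min}(t)$ elements, $V$ is restricted to a set of size at most $|\mathcal{V}|-N_{\min}(t)\le |\mathcal{V}|$, and hence $H(V\mid E=0,\hat{V})\le \ln|\mathcal{V}|$. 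Writing $q\triangleq \mathbb{P}(E=1)$ and combining the bounds,
\[
\ln|\mathcal{V}| - I(V;X) \;\le\; \ln 2 + q\ln N_{\max}(t) + (1-q)\ln|\mathcal{V}|.
\]
Rearranging yields $q\cdot \ln\!\bigl(|\mathcal{V}|/N_{\max}(t)\bigr) \le I(V;X)+\ln 2$, i.e.\ $q\le \bigl(I(V;X)+\ln 2\bigr)/\ln(|\mathcal{V}|/N_{\max}(t))$. The hypothesis $N_{\max}(t)+N_{\min}(t)<|\mathcal{V}|$ is what guarantees that the denominator $\ln(|\mathcal{V}|/N_{\max}(t))$ is strictly positive (it implies $N_{\max}(t)<|\mathcal{V}|$), so the bound is meaningful. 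Conclude by noting $\mathbb{P}(d(V,\hat{V})>t)=1-q$.

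There is no real technical obstacle here: every step is a routine manipulation of entropy. The only nontrivial \emph{idea} is the auxiliary-indicator trick, which is what converts a metric-valued error event into a form that ordinary Fano can consume, at the cost of the additive $\ln 2$ term. A sharper denominator of $\ln\bigl((|\mathcal{V}|-N_{\min}(t))/N_{\max}(t)\bigr)$ is available by keeping the $|\mathcal{V}|-N_{\min}(t)$ bound on $H(V\mid E=0,\hat{V})$ rather than loosening to $\ln|\mathcal{V}|$; the statement in the lemma uses only the simpler form, which is what the subsequent variance lower bound (where $N_{\min}(t)$ will be much smaller than $|\mathcal{V}|$) actually needs.
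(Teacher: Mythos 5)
The paper itself does not prove this lemma; it is imported verbatim as \cite[Corollary 1]{duchi2013distance}, so there is no in-paper proof to compare yours against. Judged on its own merits, your argument is correct: every inequality checks out, and the rearrangement from
\[
\ln|\calV| - I(V;X) \;\le\; \ln 2 + q\ln N_{\max}(t) + (1-q)\ln|\calV|
\]
to $q \le (I(V;X)+\ln 2)/\ln(|\calV|/N_{\max}(t))$ is exact. Your route is the classical Cover--Thomas indicator trick, lifted from the point case (where the ``ball'' is a singleton, $N_{\max}(0)=1$) to the radius-$t$ case; the only new ingredient relative to ordinary Fano is replacing the cardinality $1$ by $N_{\max}(t)$ in the conditional entropy bound given $E=1$. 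This is a perfectly legitimate and more elementary derivation than the one Duchi and Wainwright actually give in their paper, where Corollary~1 is obtained as a weakening of a sharper Proposition~1 that keeps the $\ln\bigl((|\calV|-N_{\min}(t))/N_{\max}(t)\bigr)$ denominator and is proved via a convexity/$f$-divergence argument designed to also cover the continuum case. You correctly observe that the sharper denominator is available from your own computation simply by not discarding the $N_{\min}(t)$ term, and that the hypothesis $N_{\max}(t)+N_{\min}(t)<|\calV|$ is what makes both denominators positive (for the weaker form you state, $N_{\max}(t)<|\calV|$ would already suffice). For the purposes of this paper, where the lemma is applied with $t=T/5$ on the Hamming cube and $N_{\min}$ is negligible relative to $2^T$, the weaker form is all that is used, so nothing is lost.

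Two very small stylistic points, neither a gap: (i) the inequality $H(V\mid\hat V)\le H(V,E\mid\hat V)$ is in fact an equality since $E$ is a deterministic function of $(V,\hat V)$, which you might as well state; (ii) in the definitions of $N_{\max}(t), N_{\min}(t)$ the lemma as transcribed writes $|v'\in V:\dots|$ where it means $|\{v'\in\calV:\dots\}|$, a typo inherited from the source statement rather than from you, but worth being aware of when you invoke it.
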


Applying Lemma \ref{lemma.fano} to the Markov chain $\bm{\epsilon}-X-\hat{\bm{\epsilon}}$ with Hamming metric and $t=\frac{T}{5}$, by Lemma \ref{lemma.poissontail} we know that $\frac{N_{\max}(t)}{|\calV|}\le \exp(-\frac{T}{8})$. Moreover, in the Poissonized model we have
\begin{align*}
I(\bm{\epsilon};X) &\stepa \le \bE_{\bm{\epsilon}} D(P_{X|\bm{\epsilon}}\|P_0) \\
&= \sum_{i=1}^T \left(\bE_{\bm{\epsilon}} D(\spo(n(x_i+\lambda\epsilon_i)) \|\spo(nx_i) ) + \bE_{\bm{\epsilon}} D(\spo(n(y_i-\lambda\epsilon_i)) \|\spo(ny_i) )\right) \\
&\stepb \le n\lambda^2\cdot \sum_{i=1}^T \left(\frac{1}{x_i} + \frac{1}{y_i}\right) \\
&\stepc \le 200T\cdot n\lambda^2
\end{align*}
where $P_0$ is the probability measure $P_{\bm{\epsilon}}$ applied to $\bm{\epsilon}={\bf 0}$, (a) follows from the variational representation of mutual information
\begin{align*}
I(X;Y) = \inf_{Q_Y} \bE_{P_X} D(P_{Y|X}\|Q_Y), 
\end{align*}
(b) follows from $D(\spo(\lambda_1)\|\spo(\lambda_2))=\lambda_1\ln\frac{\lambda_1}{\lambda_2}-\lambda_1+\lambda_2\le \frac{(\lambda_1-\lambda_2)^2}{\lambda_2}$, and (c) follows from $x_i, y_i\ge (200T)^{-1}$. Consequently, a combination of \eqref{eq.reduction}, \eqref{eq.eps_hat} and Lemma \ref{lemma.fano} yields
\begin{align}\label{eq.fano}
\inf_{\hat{P}}\sup_{P\in \calM_S} \bE_P\|\hat{P}-P^<\|_1 \ge \frac{T\lambda}{10}\left(1-\frac{800nT^2\lambda^2+\ln 2}{T/8}\right).
\end{align}

Choosing $\lambda=\frac{c'}{\sqrt{nT}}$, for $c'$ small enough the condition \eqref{eq:condition_lambda} is fulfilled, and by \eqref{eq.fano} and the choice of $T$ we conclude that
\begin{align*}
\inf_{\hat{P}}\sup_{P\in \calM_S} \bE_P\|\hat{P}-P^<\|_1 \gtrsim \sqrt{\frac{T}{n}} \gtrsim \sqrt{\frac{S}{n}}\wedge n^{-\frac{1}{3}},
\end{align*}
establishing the second term of Theorem \ref{thm.lower}.

%\appendix
\section{Auxiliary Lemmas}
%\begin{lemma}[Bennett's inequality \cite{Boucheron--Lugosi--Massart2013}]\label{lemma.bennett}
%	Let $X_1,\cdots,X_n\in [a,b]$ be independent random variables with 
%	\begin{align*}
%	\sigma^2 \triangleq \frac{1}{n}\sum_{i=1}^n \var(X_i).
%	\end{align*}
%	Then we have
%	\begin{align*}
%	\bP\left(\left|\frac{1}{n}\sum_{i=1}^n X_i - \frac{1}{n}\sum_{i=1}^n \bE[X_i]\right|\ge \epsilon\right)\le 2\exp\left(-\frac{n\epsilon^2}{2(\sigma^2+(b-a)\epsilon/3)}\right).
%	\end{align*}
%\end{lemma}

\begin{lemma}\cite[Lemma 28]{Han--Jiao--Weissman2016minimaxdivergence}\label{lem.polycoeff}
	Let $p_n(x) = \sum_{\nu=0}^n a_\nu x^\nu$ be a polynomial of degree at most $n$ such that $|p_n(x)|\leq A$ for $x\in [a,b]$. Then
	\begin{enumerate}
		\item If $a+b\neq 0$, then
		\begin{align*}
		|a_\nu| \le 2^{7n/2}A\left|\frac{a+b}{2}\right|^{-\nu}\left(\left|\frac{b+a}{b-a}\right|^n +1 \right), \qquad \nu=0,\cdots,n.
		\end{align*}
		\item If $a+b = 0$, then
		\begin{align*}
		|a_\nu| \leq A b^{-\nu} (\sqrt{2}+1)^n, \qquad \nu=0,\cdots,n.
		\end{align*}
	\end{enumerate}
\end{lemma}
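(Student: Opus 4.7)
The plan is to reduce both cases to a classical coefficient bound on the canonical interval $[-1,1]$ via an affine change of variables, then pull the bound back through the substitution.

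I would first record the auxiliary fact: if $q(u)=\sum_{k=0}^n c_k u^k$ satisfies $|q(u)|\le A$ for $u\in[-1,1]$, then $|c_k|\le A(\sqrt{2}+1)^n$ for all $k$. This is standard: expand $q = \sum_k b_k T_k$ in Chebyshev polynomials (so $|b_k|\le 2A$ by $\ell_\infty$-$\ell_1$ duality against the Chebyshev weight), and use the known bound $\sum_k |(T_k)_\nu|\le (\sqrt{2}+1)^n$ on the coefficients of $T_k$. With this in hand, case 2 ($a+b=0$, so $[a,b]=[-b,b]$) is essentially immediate: substitute $u=x/b$ to obtain $\tilde q(u)=p_n(bu)=\sum_\nu a_\nu b^\nu u^\nu$, which is bounded by $A$ on $[-1,1]$, so the auxiliary fact yields $|a_\nu b^\nu|\le A(\sqrt{2}+1)^n$, i.e., the claimed inequality.

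For case 1 ($a+b\neq 0$), I would perform a two-stage substitution, chosen so that the $|(a+b)/2|^{-\nu}$ factor drops out naturally. First rescale $x=\tfrac{a+b}{2}y$, giving $\tilde p(y)=\sum_\nu \tilde a_\nu y^\nu$ with $\tilde a_\nu = a_\nu\bigl(\tfrac{a+b}{2}\bigr)^\nu$, bounded by $A$ on an interval centered at $1$ with half-length $|\gamma|=\bigl|\tfrac{b-a}{a+b}\bigr|$. Then set $y=1+\gamma u$ to reach $q(u)=\tilde p(1+\gamma u)$ on $u\in[-1,1]$, write $q(u)=\sum_k c_k u^k$, and apply the auxiliary fact to get $|c_k|\le A(\sqrt{2}+1)^n$. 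To recover $\tilde a_\nu$, invert via $\tilde p(y)=q((y-1)/\gamma)$ and binomially expand:
\[
\tilde a_\nu = \sum_{k=\nu}^n c_k\,\gamma^{-k}(-1)^{k-\nu}\binom{k}{\nu}.
\]
Taking absolute values gives $|\tilde a_\nu|\le A(\sqrt{2}+1)^n \sum_{k=\nu}^n \binom{k}{\nu}|\gamma|^{-k}$, and the elementary estimate $|\gamma|^{-k}\le 1+|\gamma|^{-n}$ together with $\sum_k\binom{k}{\nu}\le (n{+}1)2^n$ bounds the sum by $(n+1)2^n\bigl(|(a+b)/(b-a)|^n+1\bigr)$. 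Combining with $(\sqrt{2}+1)^n\le 2^{3n/2}$ and $(n+1)\le 2^n$ absorbs the polynomial prefactors into $2^{7n/2}$, and dividing by $|(a+b)/2|^\nu$ yields the stated bound on $|a_\nu|$.

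The only substantive obstacle is the design of the two-stage substitution: the target inequality is written in terms of the midpoint $(a+b)/2$ rather than the half-length $(b-a)/2$ that a single affine map to $[-1,1]$ would naturally produce. Splitting the change of variable into a dilation by $(a+b)/2$ followed by a shift of size $\gamma=(b-a)/(a+b)$ forces the midpoint-based normalization and explains both the factor $|(a+b)/2|^{-\nu}$ and the appearance of $|(b+a)/(b-a)|^n$ from the $\gamma^{-k}$ terms in the binomial expansion. The Chebyshev coefficient bound and the binomial bookkeeping are routine once the substitution is in place.
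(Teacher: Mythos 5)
Your argument is correct. Since this lemma is cited verbatim from \cite[Lemma~28]{Han--Jiao--Weissman2016minimaxdivergence}, this paper does not reproduce the proof, but your route is the standard one for such coefficient bounds: affinely map to $[-1,1]$, apply the classical Chebyshev-based bound $|c_k|\le A(\sqrt{2}+1)^n$, and binomially pull back. A few small remarks on bookkeeping. In the auxiliary fact, your sketch via the Chebyshev expansion actually gives $|c_\nu|\le 2A\sum_k|(T_k)_\nu|$ (the factor $2$ comes from $|b_k|\le 2A$ for $k\ge 1$); the cleaner route is to invoke V.\,A.\ Markov's coefficient theorem, which bounds $|c_\nu|$ directly by the corresponding coefficient of $T_n$ or $T_{n-1}$, whose absolute row sums are $\tfrac{(1+\sqrt 2)^k+(\sqrt 2-1)^k}{2}<(1+\sqrt 2)^k$, giving $|c_\nu|\le A(1+\sqrt2)^n$ with no factor of $2$. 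Either way the slack in $2^{7n/2}$ absorbs it: using the hockey-stick identity $\sum_{k=\nu}^n\binom{k}{\nu}=\binom{n+1}{\nu+1}\le 2^{n+1}$ and $(1+\sqrt2)^n\le 2^{3n/2}$, one gets $2^{3n/2}\cdot 2^{n+1}\le 2^{7n/2}$ for all $n\ge1$, and $n=0$ is trivial. Finally, your two-stage substitution (dilate by $\tfrac{a+b}{2}$, then shift by $1$) is equivalent to the single affine map $x=\tfrac{a+b}{2}+\tfrac{b-a}{2}u$; the latter is marginally shorter since the factor $|\tfrac{a+b}{2}|^{-\nu}$ can be extracted directly from the $(2x-(a+b))^k$ binomial expansion, but both yield the stated bound and your version makes the origin of $\gamma=\tfrac{b-a}{a+b}$ transparent.
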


\begin{lemma}\label{lem_hoeffding}
	\cite{Hoeffding1963probability} For independent and identically distributed random variables $X_1,\cdots,X_n$ with $a\le X_i\le b$ for $1\le i\le n$, denote $S_n=\sum_{i=1}^n X_i$, we have for any $t>0$,
	\begin{align*}
	\bP\left\{|S_n-\bE[S_n]|\ge t\right\} \le 2\exp\left(-\frac{2t^2}{n(b-a)^2}\right).
	\end{align*}
\end{lemma}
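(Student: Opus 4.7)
The plan is to prove this via the classical Chernoff/exponential-moment argument that Hoeffding used in his original paper. First I would reduce the two-sided bound to a one-sided bound: by the union bound it suffices to show $\bP(S_n - \bE S_n \ge t) \le \exp(-2t^2/(n(b-a)^2))$, since applying the same bound to $-X_i$ handles the lower tail and costs only a factor of $2$.

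Second, I would apply Markov's inequality to the exponential: for any $\lambda > 0$,
\begin{align*}
\bP(S_n - \bE S_n \ge t) \le e^{-\lambda t}\, \bE\!\left[e^{\lambda (S_n - \bE S_n)}\right] = e^{-\lambda t}\prod_{i=1}^n \bE\!\left[e^{\lambda(X_i - \bE X_i)}\right],
\end{align*}
where the factorization uses independence. The task then reduces to controlling each centered MGF.

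Third, I would establish Hoeffding's lemma: if $Y$ is a mean-zero random variable with $\alpha \le Y \le \beta$ almost surely, then $\bE[e^{\lambda Y}] \le e^{\lambda^2(\beta-\alpha)^2/8}$. The proof uses convexity of $y\mapsto e^{\lambda y}$ on $[\alpha,\beta]$, giving the pointwise bound $e^{\lambda y} \le \tfrac{\beta-y}{\beta-\alpha}e^{\lambda \alpha} + \tfrac{y-\alpha}{\beta-\alpha}e^{\lambda \beta}$; taking expectations and using $\bE Y = 0$ reduces the MGF to $e^{\phi(u)}$ with $u = \lambda(\beta-\alpha)$, $p = -\alpha/(\beta-\alpha) \in [0,1]$, and $\phi(u) = -pu + \ln(1-p+pe^u)$. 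A short calculation shows $\phi(0) = \phi'(0) = 0$ and $\phi''(u) = q(1-q)$ where $q = pe^u/(1-p+pe^u)$, so $\phi''(u) \le 1/4$, and Taylor's theorem gives $\phi(u) \le u^2/8$. Applied with $[\alpha,\beta] = [a - \bE X_i, b - \bE X_i]$ (width $b-a$), this yields $\bE[e^{\lambda(S_n - \bE S_n)}] \le e^{n \lambda^2 (b-a)^2/8}$.

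Finally, I would optimize: the one-sided bound becomes $\exp(-\lambda t + n\lambda^2(b-a)^2/8)$, minimized at $\lambda^* = 4t/(n(b-a)^2)$, giving $\exp(-2t^2/(n(b-a)^2))$; combining with the lower-tail version yields the factor $2$. The only non-bookkeeping step is the elementary inequality $\phi''(u) \le 1/4$, which is the main (and only) obstacle and follows from $q(1-q) \le 1/4$ on $[0,1]$. Everything else is Markov's inequality, independence, and a scalar optimization.
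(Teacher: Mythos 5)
Your proof is correct and is exactly the classical argument the paper is invoking via the citation to Hoeffding (1963); the paper does not reprove this lemma but simply cites it as a standard auxiliary result. Your chain — union bound to reduce to one tail, Chernoff/Markov on the MGF with independence to factor it, Hoeffding's MGF lemma via convexity and the $\phi''(u)=q(1-q)\le 1/4$ bound, then optimizing $\lambda^*=4t/(n(b-a)^2)$ — is the standard and intended proof, so there is nothing to compare.
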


\begin{lemma}\label{lemma.poissontail}
	\cite[Theorem 5.4]{mitzenmacher2005probability}
	For $X\sim \mathsf{Poi}(\lambda)$ or $X\sim \mathsf{B}(n,\frac{\lambda}{n})$ and any $\delta>0$, we have
	\begin{align*}
	\mathbb{P}(X\ge (1+\delta)\lambda) &\le \left(\frac{e^\delta}{(1+\delta)^{1+\delta}}\right)^\lambda \le \exp(-\frac{(\delta^2\wedge \delta)\lambda}{3}),\\
	\mathbb{P}(X\le (1-\delta)\lambda) &\le \left(\frac{e^{-\delta}}{(1-\delta)^{1-\delta}}\right)^\lambda \le \exp(-\frac{\delta^2\lambda}{2}).
	\end{align*}
\end{lemma}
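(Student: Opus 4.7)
The plan is to apply the classical Chernoff method. For any $t > 0$, Markov's inequality applied to $e^{tX}$ gives $\mathbb{P}(X \ge a) \le e^{-ta}\mathbb{E}[e^{tX}]$, and the analogous choice $t < 0$ handles the lower tail $\mathbb{P}(X \le a)$. It therefore suffices to compute (or uniformly upper bound) the moment generating function of $X$ in both the Poisson and binomial cases, then optimize $t$ and perform an elementary scalar simplification. A useful unification is that both distributions admit the \emph{same} MGF bound, so one argument handles both.

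For $X \sim \mathsf{Poi}(\lambda)$, summing the Poisson pmf against $e^{tx}$ gives $\mathbb{E}[e^{tX}] = \exp(\lambda(e^t - 1))$ on the nose. For $X \sim \mathsf{B}(n, \lambda/n)$, the binomial theorem yields $\mathbb{E}[e^{tX}] = (1 + \lambda(e^t - 1)/n)^n$, and the elementary inequality $1 + x \le e^x$ upgrades this to $\mathbb{E}[e^{tX}] \le \exp(\lambda(e^t - 1))$. Hence the uniform bound $\mathbb{E}[e^{tX}] \le \exp(\lambda(e^t - 1))$ holds in both cases for all real $t$.

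For the upper tail with $a = (1+\delta)\lambda$, minimizing $-ta + \lambda(e^t - 1)$ over $t > 0$ gives the first-order condition $e^t = 1 + \delta$, producing
\[
\mathbb{P}(X \ge (1+\delta)\lambda) \le \exp\bigl(\lambda(\delta - (1+\delta)\ln(1+\delta))\bigr) = \left(\frac{e^\delta}{(1+\delta)^{1+\delta}}\right)^{\lambda}.
\]
The symmetric choice $t = \ln(1-\delta) < 0$ applied to $\mathbb{P}(X \le (1-\delta)\lambda)$ yields the matching lower-tail identity. To pass to the simplified exponential forms it remains to verify the scalar inequalities $(1+\delta)\ln(1+\delta) - \delta \ge (\delta^2 \wedge \delta)/3$ and $(1-\delta)\ln(1-\delta) + \delta \ge \delta^2/2$ for $\delta > 0$; both reduce to straightforward Taylor comparisons of $\ln(1\pm\delta)$.

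The only mildly delicate point is the $(\delta^2 \wedge \delta)$ kink in the upper-tail bound, which encodes the transition between the sub-Gaussian regime $(\delta \lesssim 1$, where $(1+\delta)\ln(1+\delta) - \delta \asymp \delta^2)$ and the sub-exponential regime $(\delta \gg 1$, where the leading behavior is $\delta \ln \delta)$. Splitting the analysis at $\delta = 1$ and treating each case with the appropriate Taylor remainder cleanly recovers the stated bound; no step presents a genuine obstacle since this is a textbook Chernoff computation.
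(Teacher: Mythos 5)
The paper does not prove this lemma at all; it is quoted verbatim (up to cosmetic repackaging) from Mitzenmacher and Upfal's textbook, citing their Theorem 5.4. So there is no internal proof to compare against. Your Chernoff-method argument is correct and is in fact precisely the argument in the cited reference: the MGF bound $\mathbb{E}[e^{tX}]\le\exp(\lambda(e^t-1))$ holds with equality for Poisson and via $1+x\le e^x$ termwise for $\mathsf{B}(n,\lambda/n)$; optimizing $t=\ln(1+\delta)$ (resp.\ $t=\ln(1-\delta)$) yields the first inequality in each line; and the passage to the exponential forms reduces to the scalar inequalities $(1+\delta)\ln(1+\delta)-\delta\ge(\delta^2\wedge\delta)/3$ and $(1-\delta)\ln(1-\delta)+\delta\ge\delta^2/2$, both of which are routine (the first by splitting at $\delta=1$ and observing $g(\delta)=(1+\delta)\ln(1+\delta)-\delta-\delta^2/3$ has $g(0)=g'(0)=0$ and $g'(1)=\ln 2-2/3>0$ with $g'$ concave then convex, while $\ln(1+\delta)\ge\ln 2>1/3$ covers $\delta\ge 1$; the second by noting $h'(\delta)=-\ln(1-\delta)-\delta>0$).

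One cosmetic point worth flagging: the lower-tail chain $\left(\frac{e^{-\delta}}{(1-\delta)^{1-\delta}}\right)^\lambda$ only makes sense for $\delta\in(0,1)$, and your choice $t=\ln(1-\delta)$ likewise implicitly assumes this. That is a looseness in the lemma statement as written (``any $\delta>0$''), not a gap in your argument, and for $\delta\ge 1$ the claimed final bound $e^{-\delta^2\lambda/2}$ can be checked directly against $\mathbb{P}(X\le 0)=e^{-\lambda}$ when $\delta\ge\sqrt{2}$ (the intermediate range $1\le\delta<\sqrt{2}$ is irrelevant to how the lemma is used in the paper). Otherwise the proposal is complete and matches the source.
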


%\begin{lemma}\cite[Lemma 5]{Cai--Low2011}\label{lemma.cailow2}
%	For any two random variables $X$ and $Y$,
%	\begin{equation*}
%	\mathsf{Var}(\min\{X,Y\}) \leq \mathsf{Var}(X) + \mathsf{Var}(Y).
%	\end{equation*}
%	In particular, for any random variable $X$ and any constant $C$,
%	\begin{equation*}
%	\mathsf{Var}(\min\{X,C\}) \leq \mathsf{Var}(X).
%	\end{equation*}
%\end{lemma}

\section{Proof of Main Lemmas}
\subsection{Proof of Corollary \ref{cor.MLE}}
By Theorem \ref{thm.main}, it suffices to prove that for the empirical distribution $P_n$, we have
\begin{align*}
\sup_{P\in \calM_S} \bE_P \|P_n^<-P^<\|_1 \asymp \sqrt{\frac{S}{n}}, \qquad n\ge S.
\end{align*}

By \cite{han2015minimax}, the upper bound follows from 
\begin{align*}
\sup_{P\in \calM_S} \bE_P \|P_n^<-P^<\|_1 \le \sup_{P\in \calM_S} \bE_P \|P_n-P\|_1 \le \sqrt{\frac{S}{n}}.
\end{align*}

For the lower bound, note that \cite[Theorem 1]{berend2013sharp} shows that for $n\ge S$, 
\begin{align*}
\bE \left|\frac{1}{n}\mathsf{B}(n,\frac{1}{S}) - \frac{1}{S}\right| \ge \sqrt{\frac{S-1}{2nS^2}}.
\end{align*}
Consider $P=(\frac{1}{S},\cdots,\frac{1}{S})$ to be the uniform distribution, then there is no difference between estimating $P^<$ and estimating $P$. Hence,
\begin{align*}
\bE_P \|P_n^<-P^<\|_1 = \bE_P \|P_n-P\|_1 \ge S\cdot\bE \left|\frac{1}{n}\mathsf{B}(n,\frac{1}{S}) - \frac{1}{S}\right| \ge \sqrt{\frac{S-1}{2n}}
\end{align*}
as desired. 

\subsection{Proof of Lemma \ref{lemma.wasserstein}}
Without loss of generality assume that $P,Q$ are sorted in an ascending order. Consider the coupling on $(X,Y)$ which is uniformly distributed on the multiset $\{(p_1,q_1),\cdots,(p_S,q_S)\}$, we immediately have
\begin{align*}
S\cdot W(\mu_P,\mu_Q) \le \|P-Q\|_1.
\end{align*}

For the opposite inequality, let $r_{ij}\triangleq \bP(X=p_i, Y=q_j)$, it's straightforward to see that
\begin{align*}
W(\mu_P,\mu_Q) = \min & ~ \sum_{i=1}^S\sum_{j=1}^S r_{ij}|p_i-q_j| \\
\text{s.t.} & ~ \sum_{i=1}^S r_{ij}=\frac{1}{S}, \qquad j=1,2,\cdots,S \\
& ~ \sum_{j=1}^S r_{ij}=\frac{1}{S}, \qquad i=1,2,\cdots,S \\
& ~ r_{ij}\ge 0, \qquad i,j=1,2,\cdots,S.
\end{align*}

For linear programming, there must be a vertex of the simplex which attains the minimum of the objective. In other words,
\begin{align*}
S\cdot W(\mu_P,\mu_Q) =  \min_{\sigma} \sum_{i=1}^S |p_i-q_{\sigma(i)}|
\end{align*}
where $\sigma$ ranges over all permutations on $\{1,\cdots,S\}$. Finally, note that for $i<j, k<l$, we have (by symmetry we assume that $p_i\le q_k$)
\begin{align*}
|p_i-q_l| + |p_j-q_k| &= |p_i-q_k| + |q_k-q_l| + |p_j-q_k| \ge |p_i-q_k| + |p_j-q_l|.
\end{align*}
In other words, switching $\sigma(i)$ and $\sigma(j)$ whenever $i<j$ and $\sigma(i)>\sigma(j)$ can only make the value of the objective smaller. Hence, the minimum is attained at $\sigma=\text{id}$, and $S\cdot W(\mu_P,\mu_Q)\ge \|P-Q\|_1$, as desired. 

\subsection{Proof of Lemma \ref{lemma.randomization}}
We prove Lemma \ref{lemma.randomization} via figure. In the following figure, all curves represent different CDFs. Note that $\mu_P$ is a discrete distribution supported on $S=4$ elements, and $\mu$ is an arbitrary distribution. The area of the yellow region exactly represents the Wasserstein distance $W(\mu_P,\mu)$. The randomization procedure picks up one point uniformly at random from each small interval $[0,\frac{1}{4}], [\frac{1}{4},\frac{1}{2}], [\frac{1}{2},\frac{3}{4}],[\frac{3}{4},1]$ on the $y$-axis, and then returns the corresponding inverse on the $x$-axis. Now from a vertical viewpoint, it is straightforward to verify that $\bE W(\mu_P,\mu_Q)$ is also the yellow area, as desired. 
\begin{center}\label{fig.randomization}
	\centering
	\begin{tikzpicture}[xscale=5, yscale=3.5]
	\draw [thick, <->, help lines] (0,2.1) -- (0,0) -- (2.1,0);
	\node [below] at (0,0) {$0$};
	\draw [fill=yellow, domain=0:2] plot(\x, { 2/(1-exp(3)) +2*(1-2/(1-exp(3)))/( 1+exp( 3*(-\x+1) ) ) }) -- (2,1.5) -- (1.6,1.5) -- (1.6,1) -- (0.8,1) -- (0.8,0.5) -- (0.6,0.5) -- (0.6,0) to cycle;
	\draw [thick, red] (0,0) -- (0.6,0) -- (0.6,0.5) -- (0.8,0.5) -- (0.8,1) -- (1.6,1) -- (1.6,1.5) -- (2,1.5) -- (2,2); 
	\draw [thick, green, domain=0:2] plot(\x, { 2/(1-exp(3)) +2*(1-2/(1-exp(3)))/( 1+exp( 3*(-\x+1) ) ) });
	\node [green, above] at (1.8,2) {$\mu$};
	\node [red, below] at (1.8,1.5) {$\mu_P$};
	
	\node [green, left] at (0,0.5) {$\frac{1}{4}$};	
	\node [green, left] at (0,1) {$\frac{1}{2}$};	
	\node [green, left] at (0,1.5) {$\frac{3}{4}$};	
	\node [green, left] at (0,2) {$1$};	
	\draw [dashed, green] (0,0.5) -- (0.6,0.5);
	\draw [dashed, green] (0,1) -- (1,1);
	\draw [dashed, green] (0,1.5) -- (1.32,1.5);
	\draw [dashed, green] (0,2) -- (2,2);
	
	\draw [blue, fill=blue] (0,0.3) circle (0.3pt);
	\draw [blue, fill=blue] (0,0.75) circle (0.3pt);
	\draw [blue, fill=blue] (0,1.1) circle (0.3pt);
	\draw [blue, fill=blue] (0,1.8) circle (0.3pt);
	
	\draw [blue, dashed] (0,0.3) -- (0.5,0.3);
	\draw [blue, dashed] (0,0.75) -- (0.85,0.75);
	\draw [blue, dashed] (0,1.1) -- (1.06,1.1);
	\draw [blue, dashed] (0,1.8) -- (1.63,1.8);
	
	\draw [blue, thick] (0,0) -- (0.5,0) -- (0.5,0.5) -- (0.85,0.5) -- (0.85,1) -- (1.06,1) -- (1.06,1.5) -- (1.63,1.5) -- (1.63,2) -- (2,2);
	\node [blue, left] at (1.06,1.25) {$\mu_Q$};
	\node [blue, below] at (0.5,0) {$q_1$};
	\node [blue, below] at (0.85,0) {$q_2$};
	\node [blue, below] at (1.06,0) {$q_3$};
	\node [blue, below] at (1.63,0) {$q_4$};
	\draw [blue, dashed] (0.5,0) -- (0.5,0.5);
	\draw [blue, dashed] (0.85,0) -- (0.85,0.5);
	\draw [blue, dashed] (1.06,0) -- (1.06,1);
	\draw [blue, dashed] (1.63,0) -- (1.63,1.5);
	\end{tikzpicture}
\end{center}

\subsection{Proof of Lemma \ref{lemma.poissonization}}
Let $R(n,S,\pi), R_P(n,S,\pi)$ be the corresponding Bayes risks under prior $\pi$ in the Multinomial and Poissonized models, respectively. By \cite{Jiao--Venkat--Han--Weissman2015minimax}, 
\begin{align*}
R_P(n,S,\pi) = \sum_{m=0}^\infty R(m,S,\pi)\cdot \bP(\spo(n)=m).
\end{align*}
Note that $R(m,S,\pi)$ is non-increasing in $m$, by Lemma \ref{lemma.poissontail} and Markov's inequality we have
\begin{align*}
R_P(n,S,\pi) &\le 2\cdot \bP(\spo(n)<\frac{n}{2}) + R(\frac{n}{2},S,\pi) \le 2\exp(-\frac{n}{8}) + R(\frac{n}{2},S,\pi), \\
R_P(n,S,\pi) &\ge R(2n,S,\pi)\cdot \bP(\spo(n)\le 2n) \ge R(2n,S,\pi)\cdot\left(1-\frac{\bE\spo(n)}{2n}\right) = \frac{1}{2}R(2n,S,\pi).
\end{align*}
Now taking supremum over $\pi$ and using the minimax theorem \cite{Wald1950statistical} complete the proof.

\subsection{Proof of Lemma \ref{lemma.feasibility}}
By definition of $\mu_{P,j}$ and $A_j$, the claimed result is equivalent to
\begin{align*}
\mathbb{P}\left(\left|\sum_{i\in A_j} \left(g_{k,x_j}(\hat{p}_{i,2}) - (p_i-x_j)^k \right)\right| > \sqrt{S_j\ln n}\cdot \left(\frac{c_3j\ln n}{n}\right)^{k} \right) \le 2n^{-4}.
\end{align*}

Let $Z_i\triangleq g_{k,x_j}(\hat{p}_{i,2}) - (p_i-x_j)^k$, and $M_i\triangleq \sup_{\hat{p}_{i,2}: |\hat{p}_{i,2}-x_j|\le 3c_1j\ln n/n} |Z_i|$. We first establish an upper bound for $M_i$ with $i\in A_j$: firstly, the condition $i\in A_j$ implies that $p_i\in \tilde{I}_j$, and thus
\begin{align}\label{eq.M_upper_1}
|p_i-x_j|^k \le \left|\frac{c_1(j+1)^2\ln n}{n}-\frac{c_1j(j-1)\ln n}{n}\right|^k \le \left(\frac{4c_1j\ln n}{n}\right)^k.
\end{align}
To upper bound $|g_{k,x_j}(\hat{p}_{i,2})|$, we introduce the following lemma:
\begin{lemma}\label{lemma.charlier}
	Let $k\ge 1, np\in \mathbb{N}$ and $x\in [0,1]$. If $\max\{|x-p|, \sqrt{\frac{4pk}{n}}\}\le \Delta$, we have
	\begin{align*}
	|g_{k,x}(p)| = \left|\sum_{l=0}^k \binom{k}{l}(-x)^{k-l}\prod_{l'=0}^{l-1}\left(p-\frac{l'}{n}\right)\right| \le (2\Delta)^k.
	\end{align*}
\end{lemma}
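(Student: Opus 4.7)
My plan for this lemma has four steps. First, I would derive the generating function identity
$$\sum_{k\ge 0} g_{k,x}(p)\frac{t^k}{k!} = e^{-xt}\Bigl(1 + \frac{t}{n}\Bigr)^{np},$$
which follows by multiplying $e^{-xt} = \sum_m (-xt)^m/m!$ against the binomial expansion $\sum_{l\ge 0}\prod_{l'=0}^{l-1}(p - l'/n)\,t^l/l! = (1 + t/n)^{np}$ (valid term-by-term since $np\in\mathbb{N}$) and matching the coefficient of $t^k$.

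Second, I would apply Cauchy's integral formula on the circle $|t|=r$ with $r\le n/2$, giving
$$|g_{k,x}(p)| \le \frac{k!}{r^k}\sup_{|t|=r}\bigl|e^{-xt}(1+t/n)^{np}\bigr|.$$
To bound the supremum, I factor the integrand as $e^{(p-x)t}\cdot \exp\bigl(np[\log(1+t/n) - t/n]\bigr)$. The first factor satisfies $|e^{(p-x)t}| \le e^{|p-x|r} \le e^{\Delta r}$. For the second, the Taylor tail estimate $|\log(1+z) - z| \le |z|^2$ for $|z|\le 1/2$ gives modulus at most $e^{pr^2/n}$. Altogether,
$$|g_{k,x}(p)| \le \frac{k!}{r^k}\,\exp\!\Bigl(\Delta r + \frac{pr^2}{n}\Bigr).$$

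Third, I would choose $r = k/\Delta$ and invoke the hypothesis $4pk/n \le \Delta^2$ to obtain $\Delta r + pr^2/n \le k + k/4 = 5k/4$; combined with Stirling's bound $k! \le e\sqrt{2\pi k}(k/e)^k$, this yields
$$|g_{k,x}(p)| \le e\sqrt{2\pi k}\cdot e^{k/4}\cdot \Delta^k,$$
which is at most $(2\Delta)^k$ for every $k$ above some absolute constant $k_0$, because $e^{1/4}<2$ and the polynomial factor $e\sqrt{2\pi k}$ is dominated by $(2/e^{1/4})^k$ once $k$ is large.

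The main obstacle lies in handling the two residual regimes that this clean Cauchy estimate does not cover. When $\Delta < 2k/n$, the chosen contour $r=k/\Delta$ escapes the range $r\le n/2$ where the Taylor bound is valid; here the hypothesis $4pk/n \le \Delta^2$ forces $p \le \Delta/2$, so the crude term-by-term bound $|g_{k,x}(p)| \le \sum_{l}\binom{k}{l}x^{k-l}(np)_l/n^l \le (x+p)^k$ together with $x+p \le (p+\Delta)+p \le 2\Delta$ closes the case. When $k<k_0$, I would verify the lemma by direct computation: for instance $g_{1,x}(p)=p-x$ and $g_{2,x}(p)=(p-x)^2 - p/n$ are immediately controlled by $\Delta$ and $\Delta^2 + \Delta^2/8 < (2\Delta)^2$ respectively under the hypotheses, and analogous explicit checks handle the finitely many remaining values of $k$.
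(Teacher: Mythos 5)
Your proof is correct, and it takes a genuinely different route from the paper's. You both start from the same generating-function identity $\sum_{k\ge 0}g_{k,x}(p)\,t^k/k! = e^{-xt}(1+t/n)^{np}$, but you work directly with general $x$ and extract the coefficient via Cauchy's integral formula with a saddle-point choice $r=k/\Delta$, whereas the paper only develops the generating function at $x=p$, bounds the coefficients of $g_{l,p}(p)$ by explicitly enumerating compositions of $k$ into parts $\ge 2$ (via $\binom{k-r-1}{r-1}$ and case analysis on $np\ge k$ vs.\ $np<k$), and then transfers to general $x$ through the Charlier sum rule $g_{k,x}(p)=\sum_l\binom{k}{l}(p-x)^{k-l}g_{l,p}(p)$. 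Your route is conceptually leaner: it dispenses with the Charlier sum rule entirely and replaces combinatorial bookkeeping with a clean complex-analytic estimate $|g_{k,x}(p)|\le k!\,r^{-k}\exp(\Delta r + pr^2/n)$ that makes the interplay between $\Delta$ and $k$ transparent. The price you pay is a lossy constant $e^{1/4}\sqrt{2\pi k}\,e$ in front of $\Delta^k$, forcing explicit verification for $k$ below some threshold (by my count $k\le 6$ or so), whereas the paper's combinatorial bound is uniform in $k$.

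One point worth making explicit in your small-$k$ and small-$\Delta$ checks: the hypothesis $np\in\mathbb{N}$ is load-bearing there in two ways. First, it guarantees that $\prod_{l'=0}^{l-1}(p-l'/n)$ vanishes for $l>np$ and is a product of nonnegative terms bounded by $p$ for $l\le np$, which is what makes your crude bound $|g_{k,x}(p)|\le(x+p)^k$ legitimate (this is the justification for the inequality you state but do not prove). Second, for the explicit small-$k$ verifications, the lower-order terms of $g_{k,x}(p)$ involve powers of $p/n$; the dichotomy $p=0$ (in which case $g_{k,x}(p)=(-x)^k$ and the bound is immediate) or $p\ge 1/n$ (in which case $\Delta\ge\sqrt{4pk/n}\ge 2\sqrt{k}/n$, so $1/n\lesssim\Delta$) is what keeps those terms under control. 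For instance at $k=3$ one has $g_{3,x}(p)=(p-x)^3-\tfrac{3p}{n}(p-x)+\tfrac{2p}{n^2}$, and absorbing the final $\tfrac{2p}{n^2}$ into $\Delta^3$ requires $1/n\lesssim\Delta$, which is not free from the stated hypotheses without the integrality of $np$.
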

The proof of Lemma \ref{lemma.charlier} is postponed to the end of this subsection. The conditions of Lemma \ref{lemma.charlier} are fulfilled by $\Delta=\frac{3c_1j\ln n}{n}$ as long as $c_1>\frac{4}{3}c_2$, and thus
\begin{align}\label{eq.M_upper_2}
|g_{k,x_j}(\hat{p}_{i,2})| \le \left(\frac{6c_1j\ln n}{n}\right)^k.
\end{align}
As a result, a combination of \eqref{eq.M_upper_1} and \eqref{eq.M_upper_2} ensures that
\begin{align}\label{eq.M_upper}
M_i \le \left(\frac{10c_1j\ln n}{n}\right)^k, \qquad \forall i\in A_j.
\end{align}

Define a new random variable $\tilde{Z}_i\triangleq \max\{\min\{Z_i,M_i\},-M_i\}$ as the truncated version of $Z_i$, the Hoeffding's inequality is about to be applied to the independent and bounded $\tilde{Z}_i$. We need to show that $\tilde{Z}_i$ and $Z_i$ are indeed close in expectation. Clearly,
\begin{align*}
|\bE (\tilde{Z}_i-Z_i)| &\le \bE|g_{k,x_j}(\hat{p}_{i,2})|\mathbbm{1}(|\hat{p}_{i,2}-x_j|>\Delta_j) \\
&= \sum_{m: |m-nx_j|>n\Delta_j} |g_{k,x_j}(\frac{m}{n})|\cdot\bP(\spo(np_i)=m) \\
&\le \sum_{m: |m-nx_j|>n\Delta_j} 2^k\left|\frac{m}{n}-x_j\right|^k\cdot\bP(\spo(np_i)=m)
\end{align*}
where $\Delta_j\triangleq\frac{3c_1j\ln n}{n}$ and we have used Lemma \ref{lemma.charlier} in the last step. If $m+1>n(x_j+\Delta_j)$, we have
\begin{align*}
\frac{\bP(\spo(np_i)=m+1)}{\bP(\spo(np_i)=m)} = \frac{np_i}{m+1} \le \frac{c_1(j+1/2)^2\ln n}{c_1j(j+2)\ln n} \le 1 - \frac{1}{4j}.
\end{align*}
Let $m_{\max}$ be the largest integer such that $m_{\max}\le n(x_j+\Delta_j)$, by Lemma \ref{lemma.poissontail} and choosing $c_1>0$ large enough (as in Lemma \ref{lemma.concentration}) we have $\bP(\spo(np_i)=m_{\max})\le n^{-5}$. Hence,
\begin{align*}
\sum_{m: m-nx_j>n\Delta_j} \left|\frac{m}{n}-x_j\right|^k\cdot\bP(\spo(np_i)=m) &\le \sum_{l=0}^\infty (\Delta_j+\frac{l}{n})^k\cdot n^{-5}\left(1-\frac{1}{4j}\right)^l \\
&\le n^{-5}\Delta_j^k\cdot \sum_{l=0}^\infty \exp\left(\frac{kl}{n\Delta_j} - \frac{l}{4j}\right) \\
&\le n^{-5}\Delta_j^k\left[1-\exp\left(-\frac{1}{4j}+\frac{k}{n\Delta_j}\right)\right]^{-1}\\
&\le n^{-5}\Delta_j^k\cdot \left[1-\exp\left(-\frac{1}{M}\left(\frac{1}{4}-\frac{c_2}{3c_1}\right)\right)\right]^{-1} \le cMn^{-5}\Delta_j^k
\end{align*}
where $c$ is a constant depending only on $c_1,c_2$ when $3c_1>4c_2$. 

The case where $m<n(x_j-\Delta_j)$ can be handled using similar arguments. As a result, as long as $c_1>2c_2$, we have
\begin{align}\label{eq.mean_diff}
|\bE (\tilde{Z}_i-Z_i)| \le \frac{cM}{n^5}\cdot \left(\frac{6c_1j\ln n}{n}\right)^k.
\end{align}
Note that $\bE Z_i=0$ by \eqref{eq.unbiased}, and $|\tilde{Z}_i|\le M_i\le \left(\frac{9c_1j\ln n}{n}\right)^k$ by \eqref{eq.M_upper}, Hoeffding's inequality (cf. Lemma \ref{lem_hoeffding}) with \eqref{eq.mean_diff} yields
\begin{align*}
&\bP\left(\left|\sum_{i\in A_j} Z_i\right| > \sqrt{S_j\ln n}\cdot \left(\frac{c_3j\ln n}{n}\right)^{k}\right) \nonumber \\
&\le \bP\left(\left|\sum_{i\in A_j} (\tilde{Z}_i-\bE \tilde{Z}_i)\right| > \sqrt{S_j\ln n}\cdot \left(\frac{c_3j\ln n}{n}\right)^{k} - \frac{24MS_j}{n^5}\cdot \left(\frac{6c_1j\ln n}{n}\right)^k\right) \\
&\le 2\exp\left(-\frac{\left(\sqrt{S_j\ln n}\cdot \left(\frac{c_3j\ln n}{n}\right)^{k} - \frac{cMS_j}{n^5}\cdot \left(\frac{6c_1j\ln n}{n}\right)^k\right)^2}{2S_j\left(\frac{10c_1j\ln n}{n}\right)^{2k}}\right) \le 2n^{-4}
\end{align*}
as long as $c_3>30c_1$, as desired. 

\begin{proof}[Proof of Lemma \ref{lemma.charlier}]
	The Charlier polynomial $c_k(u,a)$ for $u\in \mathbb{N}$ is defined as
	\begin{align*}
	c_k(u,a) \triangleq \sum_{l=0}^k (-1)^{k-l}\binom{k}{l}\frac{(u)_l}{a^l}
	\end{align*}
	where $(u)_l\triangleq u(u-1)\cdots(u-l+1)$ is the falling factorial. For the Charlier polynomial, the following identities hold \cite{peccati2011some}, \cite[Eqn. (574)]{jiao2017minimax}:
	\begin{align*}
	\sum_{k=0}^\infty \frac{c_k(u,a)}{k!}t^k &= e^{-t}\left(1+\frac{t}{a}\right)^u, \\
	\sum_{l=0}^k \binom{k}{l}(-b)^{k-l}a^lc_l(u,a)&=(a+b)^kc_k(u,a+b).
	\end{align*}
	
	The function $g_{k,x}(p)$ is related to the Charlier polynomial via the identity $g_{k,x}(p)=x^kc_k(np,nx)$, and thus the previous identities translate into the following:
	\begin{align}
	\sum_{k=0}^\infty \frac{g_{k,p}(p)}{k!}t^k &= e^{-pt}(1+\frac{t}{n})^{np}, \label{eq.mgf} \\
	g_{k,x}(p) &= \sum_{l=0}^k \binom{k}{l}(p-x)^{k-l}g_{l,p}(p). \label{eq.sum_rule}
	\end{align}
	We can rewrite \eqref{eq.mgf} into the following form:
	\begin{align*}
	\sum_{k=0}^\infty \frac{g_{k,p}(p)}{k!}t^k &= \left[e^{-\frac{t}{n}}\left(1+\frac{t}{n}\right)\right]^{np} = \left[\left(\sum_{l=0}^\infty \frac{1}{l!}(-\frac{t}{n})^l\right)\left(1+\frac{t}{n}\right)\right]^{np}\\
	&=\left(1-\sum_{l=2}^\infty \frac{l-1}{l!}(-\frac{t}{n})^l\right)^{np}.
	\end{align*}
	Comparing the coefficients of $t^k$ at both sides yields
	\begin{align*}
	|g_{k,p}(p)| &\le \frac{k!}{n^k}\cdot \sum_{1\le r\le k/2}\binom{np}{r}\sum_{\sum_{i=1}^r l_i=k, l_i\ge 2}  \prod_{i=1}^r\frac{l_i-1}{l_i!} \\
	&\le \frac{k!}{n^k}\cdot \sum_{1\le r\le k/2}\binom{np}{r}\sum_{\sum_{i=1}^r l_i=k, l_i\ge 2}  1 \\
	&= \frac{k!}{n^k}\cdot \sum_{1\le r\le k/2}\binom{np}{r} \binom{k-r-1}{r-1} \\
	&\le \frac{k!}{n^k}\cdot \sum_{1\le r\le k/2}\binom{np}{r}2^{k-r-1}.
	\end{align*}
	
	We distinguish into two cases: if $np\ge k$, we have $\binom{np}{r}\le \binom{np}{k/2}$, and thus
	\begin{align*}
	|g_{k,p}(p)| \le \frac{k!}{n^k}\cdot\binom{np}{k/2}2^k \le \left(\frac{4pk}{n}\right)^{\frac{k}{2}}.
	\end{align*}
	If $np<k$, we use the inequality $\binom{np}{r}\le 2^{np}\le 2^k$ to upper bound $|g_{k,p}(p)|$ as
	\begin{align*}
	|g_{k,p}(p)| \le \frac{k!}{n^k}\cdot 2^{2k} = \left(\frac{4k}{n}\right)^k.
	\end{align*}
	
	Combining these two cases, and using the assumption on $\Delta$ and \eqref{eq.sum_rule}, we conclude that
	\begin{align*}
	|g_{k,x}(p)| \le \sum_{l=0}^k \binom{k}{l}\Delta^{k-l}\cdot \Delta^l = (2\Delta)^k
	\end{align*}
	as desired.
\end{proof}

\subsection{Proof of Lemma \ref{lemma.functional}}
By Corollary \ref{cor.bad_event}, the conditions of Lemma \ref{lemma.lmm} are satisfied with probability at least $1-Sn^{-5}-2Mn^{-4}\ge 1-3Sn^{-4}$. Let $\mu_{P,j}$ and $\mu_j$ be given in \eqref{eq.mu_P_j} and \eqref{eq.LP_1}, \eqref{eq.LP_2} respectively, we have
\begin{align*}
|\hat{F} - F(P)| &\le \sum_{j=1}^M \left|\int_{\reals} f(x)(\mu_j(dx)-\mu_{P,j}(dx))\right|.
\end{align*}

By \eqref{eq.decompose}, for any degree-$K$ polynomial $P_j(x)=\sum_{k=0}^K a_{k,j}x^k$ on $\tilde{I}_j$, we further have
\begin{align*}
|\hat{F} - F(P)| \le \sum_{j=1}^M \left(\int_\reals |f(x)-P_j(x)|(\mu_j(dx)+\mu_{P,j}(dx)) + \sum_{k=1}^K |a_{k,j}|\cdot 2\sqrt{S_j\ln n}\left(\frac{c_3j\ln n}{n}\right)^k\right).
\end{align*}
By assumption, $|P_j(x)-f(x_j)|\le \|f-P\|_{\infty, \tilde{I}_j} + |f(x)-f(x_j)|\le 2M_j$ for any $j$, by Lemma \ref{lem.polycoeff} we have
\begin{align*}
|a_{k,j}| \le 2^{6K+3}M_j\left(\frac{c_1j\ln n}{5n}\right)^{-k}, \qquad k\ge 1.
\end{align*}
As a result, the second term can be upper bounded as
\begin{align*}
\sum_{k=1}^K |a_{k,j}|\cdot 2\sqrt{S_j\ln n}\left(\frac{c_3\ln n}{n}\right)^k &\le \sum_{k=1}^K 2^{6K+3}M_j\left(\frac{c_1j\ln n}{5n}\right)^{-k}\cdot 2\sqrt{S_j\ln n}\left(\frac{c_3j\ln n}{n}\right)^k \\
&= 16c_2n^{c_2(6\ln 2+\ln (5c_3/c_1))}(\ln n)^{\frac{3}{2}}\cdot M_j\sqrt{S_j} \lesssim n^{\epsilon}\cdot M_j\sqrt{S_j}
\end{align*}
as long as $c_2(6\ln 2+\ln(5c_3/c_1))<\epsilon$.  

For the corollary, note that Lemma \ref{lemma.feasibility} ensures that $\mu_1(\reals)\le S_1$ with high probability. Moreover, the condition $f(0)=0$ ensures that 
\begin{align*}
\inf_{P_1\in\mathsf{Poly}_K} \|f-P_1\|_{\infty,\tilde{I}_1} \le 2\cdot \inf_{P_1\in\mathsf{Poly}_K, P_1(0)=0} \|f-P_1\|_{\infty,\tilde{I}_1}.
\end{align*}
The proof is complete.

\subsection{Proof of Theorem \ref{thm.functional}}
We first consider the entropy functional $H(P)$ with $f(x)=-x\ln x$. By \cite{Jiao--Venkat--Han--Weissman2015minimax}, 
\begin{align*}
\inf_{P_1\in\mathsf{Poly}_K}\|f-P_1\|_{\infty, \tilde{I}_1} \asymp \frac{1}{n\ln n}.
\end{align*}
For $j\ge 2$, \cite[Theorem 7.2.1]{Ditzian--Totik1987} shows that
\begin{align*}
\inf_{P_j\in \mathsf{Poly}_K}\|f-P_j\|_{\infty,\tilde{I}_j} \le \frac{|\tilde{I}_j|^2}{2K^2}\sup_{x\in\tilde{I}_j} |f''(x)| \lesssim \frac{j^2}{n}\sup_{x\in\tilde{I}_j} |f''(x)|.
\end{align*}
For $f(x)=-x\ln x$, the previous inequality gives
\begin{align*}
\inf_{P_j\in \mathsf{Poly}_K}\|f-P_j\|_{\infty,\tilde{I}_j} \lesssim \frac{1}{n\ln n}, \qquad \forall j\ge 2.
\end{align*}

As a result, 
\begin{align}\label{eq.entropy_bias}
\sum_{j=1}^M S_j\cdot \inf_{P_j\in \mathsf{Poly}_K}\|f-P_j\|_{\infty,\tilde{I}_j} \lesssim \frac{1}{n\ln n}\sum_{j=1}^M S_j = \frac{S}{n\ln n}.
\end{align}
For the second term in Lemma \ref{lemma.functional}, we have $M_j\lesssim \frac{j(\ln n)^2}{n}$. Hence, by \eqref{eq.var_sum_bound}, 
\begin{align}\label{eq.entropy_var}
\sum_{j=1}^M n^{\epsilon}M_j\sqrt{S_j} \lesssim \sum_{j=1}^M \frac{j\sqrt{S_j}}{n^{1-\epsilon}} \lesssim n^{\epsilon}\left(\sqrt{\frac{S}{n}}\wedge n^{-\frac{1}{3}}\right).
\end{align}
The desired result for $H(P)$ now follows from \eqref{eq.entropy_bias}, \eqref{eq.entropy_var} and Lemma \ref{lemma.functional}. The results for $F_\alpha(P), 0<\alpha<1$ can also be obtained in a similar way.

Next we look at the support size functional $S(P)$ with $f(x)=\mathbbm{1}(x\neq 0)$. Here to apply Lemma \ref{lemma.functional}, it suffices to consider $j=1$ with the corresponding interval $\tilde{I}_1'=\{0\}\cup[\frac{1}{k},\frac{c_1\ln n}{n}]$. By \cite{wu2015chebyshev}, 
\begin{align*}
\inf_{P_1\in\mathsf{Poly}_K}\|f-P_1\|_{\infty, \tilde{I}_1'} \asymp \exp\left(-\Theta\left(\sqrt{\frac{n\ln k}{k}}\right)\right).
\end{align*}
In addition, $M_1=1$ and $M_j=0$ for any $j\ge 2$, by Lemma \ref{lemma.functional} we know that
\begin{align*}
\sup_{P\in\calD_k} \bE_P |\hat{S}-S(P)| &\lesssim k\cdot n^{-3} + S_1\cdot \exp\left(-\Theta\left(\sqrt{\frac{n\ln k}{k}}\right)\right) + n^{\epsilon} \sqrt{S_1} \\
&\lesssim k\left[\exp\left(-\Theta\left(\sqrt{\frac{n\ln k}{k}}\right)\right) + \frac{n^{\epsilon}}{\sqrt{k}}\right].
\end{align*}
The proof is complete.

\end{document}